\PassOptionsToPackage{hypertexnames=false}{hyperref}
\documentclass{theoretics}

\addbibresource{lib.bib}
\usepackage{ogonek}

\usepackage{xspace}
\usepackage{amsthm}
\usepackage{thm-restate}
\usepackage[capitalise,noabbrev,nameinlink]{cleveref}




\newcommand{\R}{\ensuremath{\mathbb R}\xspace}
\newcommand{\N}{\ensuremath{\mathbb N}\xspace}
\newcommand{\ER}{\ensuremath{\exists \mathbb{R}}\xspace}
\newcommand{\NP}{\ensuremath{\textrm{NP}}\xspace}
\newcommand{\PSPACE}{\ensuremath{\textrm{PSPACE}}\xspace}

\newcommand{\etr}{{\sc{ETR}}\xspace}
\newcommand{\etrinv}{{\sc{ETR-Inv}}\xspace}
\newcommand{\rangeetrinv}{{\sc{Range-ETR-Inv}}\xspace}
\newcommand{\fgetr}{{\sc{Curve-ETR}}$[f,g]$\xspace}
\newcommand{\wiredinv}{{\sc{Wired-}}\fgetr}

\newcommand{\area}{\ensuremath{\textrm{area}}\xspace}
\newcommand{\diam}{\ensuremath{\textrm{diam}}\xspace}

\newcommand{\pack}[3]{{\sc{Pack}}$[ #1 \rightarrow #2 , #3 ]$\xspace}

\newcommand{\I}{\ensuremath{\mathcal{I}}\xspace}
\newcommand{\p}{\ensuremath{\mathbf{p}}\xspace}

\newcommand{\s}{\ensuremath{\mathbf{s}}\xspace}
\newcommand{\pl}[2]{{#1}^{#2}}
\newcommand{\m}{\ensuremath{\mathbf{m}}\xspace}
\newcommand{\disk}{\ensuremath{\mathrm{disk}}\xspace}
\newcommand{\slack}{\ensuremath{\mathrm{\mu}}\xspace}
\newcommand{\eps}{\ensuremath{\varepsilon}\xspace}
\newcommand{\mydef}{:=}
\newcommand{\inn}[1]{#1^{\text{in}}}
\newcommand{\out}[1]{#1^{\text{out}}}
\newcommand{\polQ}{Q} 
\newcommand{\cont}{C}
\newcommand{\sqcont}{S}
\newcommand{\piecetype}{\ensuremath{\mathcal P}}
\newcommand{\conttype}{\ensuremath{\mathcal C}}
\newcommand{\motiontype}{\ensuremath{\mathcal M}}
\newcommand{\gadgets}{g}

\newcommand{\enc}[1]{\ensuremath{\left\langle \, #1 \,\right\rangle}\xspace}


\newcommand{\rotation}
{\includegraphics[page=1,scale =0.6]{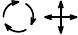}}
\newcommand{\translation}
{\includegraphics[page=2,scale =0.6]{figures/Symbols.pdf}}

\newcommand{\polygon}
{\includegraphics[page=3,scale =0.6]{figures/Symbols.pdf}\xspace}
\newcommand{\curved}
{\includegraphics[page=7,scale =0.6]{figures/Symbols.pdf}\xspace}
\newcommand{\convexcurved}
{\includegraphics[page=5,scale =0.6]{figures/Symbols.pdf}\xspace}
\newcommand{\convexpolygon}
{\includegraphics[page=4,scale =0.6]{figures/Symbols.pdf}\xspace}
\newcommand{\rectangle}
{\includegraphics[page=8,scale =0.6]{figures/Symbols.pdf}\xspace}
\newcommand{\disksymbol}
{\includegraphics[page=12,scale =0.6]{figures/Symbols.pdf}\xspace}



 


\title{Framework for \texorpdfstring{$\exists\mathbb R$}{ER}-Completeness of Two-Dimensional Packing Problems}

\ThCSshortnames{M.~Abrahamsen, T.~Miltzow, and N.~Seiferth}
 \ThCSshorttitle{Framework for \texorpdfstring{$\exists\mathbb R$}{ER}-Completeness of Two-Dimensional~Packing~Problems}
\ThCSauthor[1]{Mikkel Abrahamsen}{miab@di.ku.dk}[https://orcid.org/0000-0003-2734-4690]

\ThCSauthor[2]{Tillmann Miltzow}{t.miltzow@uu.nl}[https://orcid.org/0000-0003-4563-2864]

\ThCSauthor[3]{Nadja Seiferth}{nadja.seiferth@fu-berlin.de}[https://orcid.org/0000-0003-1462-006X]

\ThCSthanks{The paper was presented at the 61st Annual IEEE Symposium on Foundations of Computer Science (FOCS 2020) \cite{abrahamsen2020framework}. Mikkel Abrahamsen is supported by Starting Grant 1054-00032B from the Independent Research Fund Denmark under the Sapere Aude research career programme. BARC is supported by the VILLUM Foundation grant 16582. Tillmann Miltzow is supported by the Netherlands Organisation for Scientific Research (NWO) under project no.\ 016.Veni.192.25. Nadja Seiferth is supported by the German Science Foundation within the collaborative DACH project \emph{Arrangements and Drawings} as DFG Project MU3501/3-1.}

\ThCSaffil[1]{Basic Algorithms Research Copenhagen (BARC), University of Copenhagen.}
\ThCSaffil[2]{Utrecht University.}
\ThCSaffil[3]{Freie Universität Berlin.}
\ThCSkeywords{Packing, Polygons, \ER-completeness}

\ThCSyear{2024}
\ThCSarticlenum{11}
\ThCSdoicreatedtrue
\ThCSreceived{Apr 21, 2022}
\ThCSaccepted{Feb 15, 2024}
\ThCSpublished{Apr 26, 2024}

 \begin{document}
\maketitle
\begin{abstract}
The aim in packing problems is to decide if a given set of pieces can be placed inside a given container.
A packing problem is defined by the types of pieces and containers to be handled, and the motions that are allowed to move the pieces.
The pieces must be placed so that in the resulting placement, they are pairwise interior-disjoint.
We establish a framework which enables us to show that for many combinations of allowed pieces, containers and motions, the resulting problem is $\ER$-complete.
This means that the problem is equivalent (under polynomial time reductions) to deciding whether a given system of polynomial equations and inequalities with integer coefficients has a real solution.

We consider packing problems where only translations are allowed as the motions, and problems where arbitrary rigid motions are allowed, i.e., both translations and rotations.
When rotations are allowed, we show that it is an $\ER$-complete problem to decide if a set of convex polygons, each of which has at most $7$ corners, can be packed into a square.

Restricted to translations, we show that the following problems are $\ER$-complete:
(i) pieces bounded by segments and hyperbolic curves to be packed in a square, and
(ii) convex polygons to be packed in a container bounded by segments and hyperbolic curves.
\end{abstract}

\begin{figure}[h]
\centering
\includegraphics[page =1]{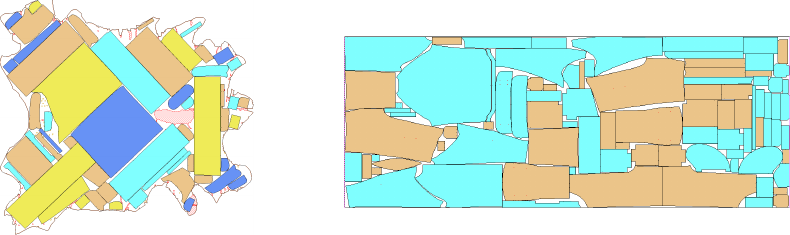}
\caption{Real examples of nesting on a leather hide (left) and a piece of fabric (right), kindly provided by MIRISYS and produced by their software for automatic nesting, \protect\url{https://www.mirisys.com/}.
}
\label{fig:DailyLife}
\end{figure}

\section{Introduction}
\label{sec:Intro}
Packing problems are everywhere in our daily lives.
To give a few examples, you solve packing problems when finding room for your tupperware in the kitchen, the manufacturer of your clothing arranges cutting patterns on a large piece of fabric in order to minimize waste, and at Christmas time you are trying to cut out as many cookies from a dough as you can.
In a large number of industries, it is crucial to solve complicated instances of packing problems efficiently.
In addition to clothing manufacturing, we mention leather, glass, wood and sheet metal cutting, selective laser sintering, shipping (packing goods in containers) and 3D printing (arranging the parts to be printed in the printing volume); see \Cref{fig:DailyLife}.

Packing problems can be easily and precisely defined in a mathematical manner, but many important questions are still completely elusive.
In this work, we uncover a fundamental aspect of many versions of geometric packing by settling their computational difficulty.

We denote \pack \piecetype \conttype \motiontype\ as the packing problem
with pieces of the type \piecetype, containers
of type $\conttype$ and motions of type $\motiontype$.
In an instance of \pack{\piecetype}{\conttype}{\motiontype}, we are
given pieces $p_1,\ldots,p_n$ of type $\piecetype$ and a container $\cont$ of type $\conttype$.
We want to decide if there is a motion of type~$\motiontype$ for each piece such that after moving the pieces by these motions, each piece is in $\cont$ and the pieces are pairwise interior-disjoint.
Such a placement of the pieces is called a \emph{valid} placement.

As the allowed motions, we consider \emph{translations} (\translation) and \emph{rigid motions} (\rotation), where a rigid motion is a combination of a translation and a rotation.
As containers and pieces, we consider squares ($\square$), convex polygons (\convexpolygon), simple polygons (\polygon) and curved polygons (\curved), where a \emph{curved polygon} is a region enclosed by a simple closed curve consisting of a finite number of line segments and arcs contained in hyperbolae (such as the graph of $y=1/x$).
Note that hyperbolae, like all conic sections, can be represented as rational quadratic B\'{e}zier curves~\cite{FANG2002297}, which are extensively used for computer-aided design and manufacturing.
It would therefore not be uncommon that curved pieces as the ones used here could appear in practical settings.

The problems with only translations allowed are relevant to some industries; for instance when arranging cutting patterns on a roll of fabric for clothing production, where the orientation of each piece has to follow the orientation of the weaving or a pattern printed on the fabric.
In other contexts such as leather, glass, or sheet metal cutting, there are usually no such restrictions, so rotations can be used to minimize waste.
As can be seen from \Cref{fig:DailyLife}, it is relevant to study packing problems where the pieces as well as the containers may be non-convex and have boundaries consisting of many types of curves (not just straight line segments).

We show that many of the above mentioned variants of packing are $\ER$-complete.
The complexity class $\ER$ will be defined below. 
We call the techniques we developed a \emph{framework}, since the same techniques turn out to be applicable to prove hardness for many versions of packing.
With adjustments or additions, they can likely be used for other versions or proofs of other types of hardness as well.

\subsubsection*{The Existential Theory of the Reals}
The term \emph{Existential Theory of the Reals} refers 
ambiguously to a formal language, 
a corresponding algorithmic problem (\etr) and a complexity class ($\ER$).
Let us start with the formal logic.
Let
\[\Sigma \mydef \{\forall, \exists, 0,1,x_1,\ldots, x_n, +, \cdot, =, \leq , < , \land
, \lor, \lnot\} \] be an alphabet for some $n\geq 1$.
A \emph{sentence} over $\Sigma$ is a well-formed formula with no free variables, i.e., so that every variable is bound to a quantifier.
The \emph{Existential Theory of the Reals} is the set of true sentences of the form
\[\exists x_1,\ldots,x_n\; \Phi(x_1,\ldots,x_n),\]
where $\Phi$ is a quantifier-free formula.
The algorithmic problem \etr is to decide whether a sentence of this form is true or not.
At last, this leads us to the complexity class \emph{Existential Theory of the Reals} (\ER),
which consists of all those languages that are
many-one reducible to \etr in polynomial time.
Given a quantifier-free formula $\Phi$, we define
the solution space of $\Phi$ as
$V(\Phi) \mydef \{\mathbf x\in\R^n : \Phi(\mathbf x)\}$.
Thus in other words, \etr is to decide 
if $V(\Phi)$ is empty or not.
It is currently known that
\begin{align}
\label{eq:inclusions}
\NP \subseteq \ER \subseteq \PSPACE .
\end{align}
To show the first inclusion is an easy exercise, whereas the second is non-trivial and was first proven by Canny~\cite{canny1988some}.
A problem $P$ is \emph{$\ER$-hard} if \etr is many-one reducible to $P$ in polynomial time, and $P$ is \emph{$\ER$-complete} if $P$ is $\ER$-hard and in $\ER$. 
None of the inclusions~\eqref{eq:inclusions} are known to be strict, but the first is widely believed to be~\cite{BIEDL2020105868}, implying that the $\ER$-hard problems are not in NP.
As examples of \ER-complete problems, we mention problems related to realization of order-types~\cite{richter1995realization, mnev1988universality, shor1991stretchability}, graph drawing~\cite{bienstock1991some, AreasKleist, LindaPHD, AnnaPreparation}, recognition of geometric graphs~\cite{cardinal2017intersection, cardinal2017recognition,kang2011sphere,mcdiarmid2013integer}, straightening of curves~\cite{erickson2019optimal},
the art gallery problem~\cite{ARTETR},
minimum convex covers~\cite{abrahamsen2021covering},
Nash-equilibria~\cite{berthelsen2019computational,garg2015etr},
linkages~\cite{abel, schaefer2013realizability, Schaefer-ETR},
matrix-decompositions~\cite{NestedPolytopesER,shitov2016universality,Shitov16a}, 
polytope theory~\cite{richter1995realization}, embedding of simplicial complexes~\cite{GeometricEmbedding} and training neural networks~\cite{abrahamsen2021training,TrainFull2NN}.
See also the surveys~\cite{CardinalSurvey, matousek2014intersection, Schaefer2010}.

\subsubsection*{\ER-membership}
Showing that the packing problems we are dealing with in this paper are contained in \ER is easy using the following recent result. 
\begin{theorem}[Erickson, Hoog, Miltzow~\cite{RobustComputation}]
For any decision problem $P$, there is a real verification algorithm for $P$ if
and only if $P \in \ER$.
\end{theorem}
A \emph{real verification algorithm} is like a verification algorithm for a problem in NP with the additional feature that it accepts real inputs for the witness and runs on the real RAM.
(We refer to~\cite{RobustComputation} for the full definition, as it is too long to include here.)

Thus in order to show that our packing problems lie in \ER, we have to specify a witness and a real verification algorithm. 
The witness is simply the motions that move the pieces to a valid placement.
The verification algorithm checks that the pieces are pairwise interior-disjoint and contained in the container.
Note that without the theorem above, we would need to describe an \etr-formula equivalent to a given packing instance in order to show \ER-membership.
Although this is not difficult for packing, it would still require some work.

\begin{table}
\centering
\includegraphics[page = 4,scale = 1.5]{figures/intro.pdf}
\caption{
This table displays $6$~variants of the 
packing problem with rotations and translations, and~$6$ with translations only.
$\ER$ means $\ER$-complete and NP means NP-complete.
We show that $10$ of the problems are \ER-complete, and the remaining $2$ are known to be $\NP$-complete.
The problems marked with * are the \emph{basic} problems.
The $\ER$-completeness of the remaining problems follows since there is a basic problem in the table which is more restricted.
}
\label{fig:Results}
\end{table}

\subsubsection*{Results}
We show that a wide range of two-dimensional packing problems are \ER-complete.
A compact overview of our results is displayed in \Cref{fig:Results}.
In the table, the second row (with problems \pack \piecetype\polygon\motiontype) is in some sense redundant, since the $\ER$-completeness results can be deduced from the more restricted third row (the problems \pack \piecetype\square\motiontype).
We anyway include the row since a majority of our reduction is to establish hardness of problems with polygonal containers, and only later we reduce these problems to the case where the container is a square.

A strength of our reductions is that in the resulting constructions, all corners can be described with rational coordinates that require a number of bits only logarithmic in the total number of bits used to represent the instance.
Therefore, we show that the problems are \emph{strongly} $\ER$-hard.
Another strength is that all the pieces 
have constant complexity, i.e., each piece can be described by its boundary as a union of $O(1)$ straight line segments and arcs contained in hyperbolae.

In the following we sketch an argument why \pack\polygon\polygon\translation\ is in $\NP$.
This may be folklore; the second author learned the proof from Günter Rote.
We show that a valid placement can be specified as the translations of the pieces represented by a number of bits polynomial in the input size. 
Consider a valid placement of the pieces.
For each pair of a segment $s$ and a corner $c$ (of a piece or the container), we consider the line $\ell(s)$ containing $s$ and note which of the closed half-planes bounded by $\ell(s)$ contains $c$.
Then we build a linear program (LP) using that information in the natural way.
Here, the translation of each piece is described by two variables and for each pair $(s,c)$, we have one constraint involving at most four variables, enforcing $c$ to be on the correct side of $\ell(s)$.
It is easy to verify that the constraint is linear.
The solution of the LP gives a valid placement of every piece and as the LP is polynomial in the input, so is the number of the bits of the solution to the LP. 
Note that if rotations are included, the corresponding constraints become non-linear.

Our results show that some packing problems with rotations or non-polygonal features allowed are \ER-hard and thus likely not in \NP.
This gives a confirmation to the operations research community that most likely, they cannot employ standard algorithm techniques (solvers for ILP and SAT, etc.)~that work well for many NP-complete problems like scheduling and TSP.
The main message for the theory community is that efficient algorithms dealing with rotations or non-polygonal shapes can probably only be found if we relax the problems considerably.

\subsubsection*{Basis problem}
In the first version of this paper~\cite{abrahamsen2020framework}, we could not show \ER-hardness of the problem \pack\convexpolygon\square\rotation.
We introduced the problem \rangeetrinv, which in turn was a restricted version of the problem \etrinv used to prove \ER-hardness of the art gallery problem~\cite{ARTETR}.
When proving \ER-hardness by reducing from these problems, one must create gadgets for simulating \emph{inversion} constraints of the form $xy=1$, or, equivalently, $xy-1 = 0$.
This is usually obtained by making gadgets for each of the inequalities $xy-1 \geq 0$ and $-xy+1 \geq 0$.
We managed to make a gadget for the constraint $xy-1 \geq 0$ using convex pieces only, but despite much effort, we were unable to realize $-xy+1 \geq 0$ unless we introduced non-convex pieces.
Note that the equation $xy-1 = 0$ defines a convex curve while $-xy+1=0$ defines a concave curve.
In the recent paper~\cite{miltzow2021classifying}, Miltzow and Schmiermann proved that it is not important to make a gadget realizing the inversion constraint exactly, but that it suffices to make gadgets realizing constraints of the forms $f(x,y)\geq 0$ and $g(x,y)\geq 0$ for any sufficiently well-behaved functions $f$ and $g$, where one is convex and the other is concave.
This is the basis for the reduction in this paper.
We now specify the details of the problem we are reducing from.

\begin{definition}[\fgetr formula]
Let $\delta>0$, $U\mydef[-\delta,\delta]$ and $f,g: U^2\longrightarrow\mathbb R$.
An \emph{\fgetr formula} $\Phi=\Phi(x_1,\ldots,x_n)$ is a conjunction
\[
\bigwedge_{i=1}^m C_i,
\]
where each constraint $C_i$ has one of the forms
\begin{align*}
x \geq 0, \quad x = \delta, \quad x+y=z,\quad f(x,y)\geq 0,\quad g(x,y) \geq 0,
\end{align*}
for $x,y,z \in \{x_1, \ldots, x_n\}$.
Each constraint of the form $f(x,y)\geq 0$ or $g(x,y) \geq 0$ is called a \emph{curved} constraint.
\end{definition}

\begin{definition}[Well-behaved and convexly/concavely curved function]\label{def:wellbehaved}
Let $\delta>0$ and $U\mydef [-\delta,\delta]$.
We say a function $f : U^2 \longrightarrow \R$ is \emph{well-behaved} if the following conditions are met.
\begin{itemize}
\item $f$ is a $C^3$-function, i.e., three times continuously differentiable,
\item $f(0, 0) = 0$, and all partial derivatives $f_x$, $f_y$, $f_{xx}$, $f_{xy}$ and $f_{yy}$ are rational in $(0,0)$, and
\item $f_x(0, 0) \neq 0$ or $f_y(0, 0) \neq 0$.
\end{itemize}

We write the curvature of a well-behaved function $f$ at $(0,0)$ as
\[\kappa = \kappa(f)  = \left(\frac{f_y^2f_{xx} - 2f_xf_yf_{xy} + f_x^2f_{yy}}{(f_x^2 + f_y^2)^{\frac{3}{2}}}\right)(0,0).\]
We say $f$ is \emph{convexly curved} if $\kappa(f) < 0$, and \emph{concavely curved} if $\kappa(f) >0$.
\end{definition}

\begin{theorem}[Miltzow and Schmiermann~\cite{miltzow2021classifying}]\label{thm:main}
Let $\delta>0$, $U\mydef [-\delta,\delta]$ and $f,g : U^2 \longrightarrow \mathbb R$ be two well-behaved functions, one being convexly curved, and the other being concavely curved.
Then deciding whether a \fgetr formula $\Phi$ is satisfiable is an $\ER$-hard problem, even when $\delta = n^{-c}$ for any constant $c>0$ and when we are promised that $V(\Phi)\subset U^n$.
\end{theorem}

The promise that we only need to look for solutions to a \fgetr formula in the tiny hyper-cube $U^n$ will be crucial in our reduction.
Note that the paper~\cite{miltzow2021classifying} is also focused on proving \ER-membership of problems like \fgetr.
We will only use the problem to prove \ER-hardness, which is why we present the theorem in a slightly simpler form.

\subsubsection*{Related work}
Milenkovich~\cite{DBLP:conf/stoc/Milenkovic96,DBLP:journals/algorithmica/Milenkovic97} described exact algorithms for \pack\polygon\polygon\translation\ with running times exponential in the number of pieces; see also the papers by Milenkovich and Daniels~\cite{doi:10.1111/j.1475-3995.1999.tb00171.x, DBLP:journals/algorithmica/DanielsM97}.
In another paper, Milenkovich~\cite{DBLP:journals/comgeo/Milenkovic99} gave an algorithm and described a robust floating point implementation for the problem \pack\polygon\polygon\rotation\ using a combination of computational geometry and mathematical programming.

Alt~\cite{DBLP:journals/eatcs/Alt16} provides a survey of the literature on packing problems from a theoretical point of view.
A lot of work has been done on bin packing, strip packing and knapsack, usually with rectangular pieces that can be translated or rotated by $90^\circ$.
We refer to the survey of Christensen, Khan, Pokutta and Tetali~\cite{CHRISTENSEN201763} for an overview.

Recently, Merino and Wiese studied a version of $2$-dimensional knapsack where the pieces are convex polygons and arbitrary rotations are allowed and presented a QPTAS for the problem~\cite{DBLP:conf/icalp/MerinoW20}.
Note that the problem \pack\convexpolygon\square\rotation\ is a special case of the knapsack problem, so it is also $\ER$-hard by our result.

Several packing variants are known to be NP-hard.
Here we mention the problem of packing squares into a square by translation~\cite{leung1990packing},
packing segments into a simple polygon by translation~\cite{PackingSegments},
packing disks into a square~\cite{demaine2010circle},
packing identical simple polygons into a simple polygon by translation~\cite{DBLP:journals/corr/abs-1209-5307},
and packing unit squares into a polygon with holes by translation~\cite{berman1982optimal,fowler1981optimal}.
Alt~\cite{DBLP:journals/eatcs/Alt16} proves by a simple reduction from the partition problem that packing rectangles into a rectangle is NP-hard, and this reduction works with and without rotations allowed (note that \emph{a priori}, it is not clear that rotations make the problems more difficult, and it is straightforward to define (artificial) problems that even get easier with rotations).
It is easy to modify the reduction to the problem of packing rectangles into a square, so this implies NP-hardness of all problems in \Cref{fig:Results}.

\begin{figure}
\centering
\includegraphics[page = 2]{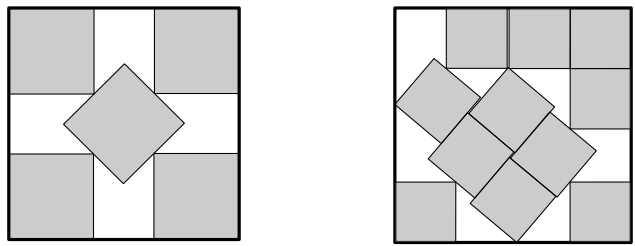}
\caption{Left: The optimal packing of five unit squares already requires rotations. Right: The currently best known packing of eleven unit squares into a larger square~\cite{Gensane2005}.}
\label{fig:UnitSquarePacking}
\end{figure}

A fundamental problem related to packing is to find the smallest square containing a given number of \emph{unit} squares, with rotations allowed. 
A long line of mathematical research has been devoted to this problem, initiated by Erd\H{o}s and Graham~\cite{erdos1975packing} in 1975, and it is still an active research area~\cite{chung2019efficient}.
Even for \emph{eleven} unit squares, the exact answer is unknown~\cite{Gensane2005}; see \Cref{fig:UnitSquarePacking}.
Other packing problems have much older roots, for instance Kepler's conjecture on the densest packings of spheres from 1611, famously proven by Hales in 2005~\cite{hales2005proof}.
The 2D analog, i.e., finding the densest packings of unit disks, was solved already in 1773 by Lagrange under the assumption that the disk configurations are lattices, and the general case was solved by Fejes T\'{o}th in 1940~\cite{fejes1942dichteste} (Thue already published a proof in 1910~\cite{thue1910} which is considered incomplete by some experts~\cite{10.2307/26395747}).

There is a staggering amount of papers in operations research on packing problems.
The research is mainly experimental and focuses on the development of heuristics to solve benchmark instances efficiently. 
We refer to some surveys for an overview~\cite{bennell2008geometry, bennell2009tutorial, dyckhoff1992cutting, hopper2001review,
leao2020irregular, sweeney1992cutting}.
In contrary to theoretical work, there is a lot of experimental work on packing pieces with irregular shapes and with arbitrary rotations allowed.

\subsubsection*{Open problems}

A natural continuation of our research is to study even more restricted packing variants.
The problem \pack{\rectangle}{\square}{\rotation} (packing arbitrary rectangles) is particularly interesting, as rectangles are very simple and widely studied. 
The problem \pack{\disksymbol}{\square}{\translation} (packing disks of arbitrary sizes, which curiously has some relevance to origami~\cite{demaine2010circle}) is interesting for similar reasons.
Our techniques seem not to extend to these special cases.
If both problems are \ER-complete, we expect that the proof techniques must be very different, since the non-linearity stems from rotations in the rectangle case, but from the shape of the pieces in the disk case.
A first step to show \ER-completeness of \pack{\disksymbol}{\square}{\translation} could be to show \ER-completeness of \pack{\convexcurved}{\polygon}{\translation}.
(Here, \convexcurved denotes convex curved polygons, or another reasonable generalization of convex polygons to pieces with curved boundaries.)
The access to both convex and concave constraints is key in the currently known techniques for proving \ER-hardness~\cite{miltzow2021classifying}.
If all the pieces are convex and only translations are allowed, it seems that we can only encode concave constraints.
It is therefore conceivable that \pack{\convexcurved}{\polygon}{\translation} and thus also \pack{\disksymbol}{\square}{\translation} are contained in \NP, by an argument similar as to why \pack\polygon\polygon\translation is in NP.

Another direction of research would be to consider the parameterized complexity of geometric packing problems.
A natural parameter is the number of pieces, $k$.
Then an instance of the problem \pack\polygon\polygon\rotation can be formulated as an \etr formula with $3k$ real variables, since the placement of each piece can be described by a two-dimensional translation and a rotation.
Using general purpose algorithms from real algebraic geometry~\cite{basu2006algorithms}, this implies that the instance can be decided in $L^{O(k)}$ time, where $L$ is the total length of the formula.
It is interesting to find out if this is best possible under widely believed hypotheses.
As a first step, one could investigate whether packing is W[1]-hard, or to show that $L^{O(k)}$ is the best possible running time assuming the exponential time hypothesis (ETH).
In a second step, it would be interesting to see if it is possible to solve the packing problem using an \etr formula with fewer real variables.
We note that W[1]-hardness or ETH-based lower bounds are not enough to give lower bounds on the number of real variables that are needed, as there could be a two-phase algorithm:
The first phase runs in $L^{O(k)}$ time, without using tools from real algebraic geometry.
The second phase solves an \etr formula with, say, only $O(\sqrt{k})$ real variables.
Neither W[1]-hardness nor an ETH lower bound of $L^{\Omega(k)}$ would exclude this scenario.

\subsubsection*{Acknowledgments}
We thank Reinier Schmiermann for useful discussions related to the use of tools from~\cite{miltzow2021classifying}.
We would also like to thank anonymous reviewers for comments on earlier versions of this article.

\section{Reduction skeleton}
\label{sec:overview}

In this section, we give an overview of the steps and concepts needed in our reductions.
The rest of the paper will then fill out the details.

\subsection{The problem \texorpdfstring{\wiredinv}{Wired-Curve-ETR[f,g]}}
We will reduce from an auxiliary problem called \emph{\wiredinv}.
An instance of this problem is a graphical representation of a \fgetr formula $\Phi$, i.e., a drawing of $\Phi$ of a specific form, which we call a \emph{wiring diagram}; see \Cref{fig:WiringDiagram}.
We denote by $n$ the number of variables of $\Phi$.

\begin{figure}[b]
\centering
\includegraphics[page=4]{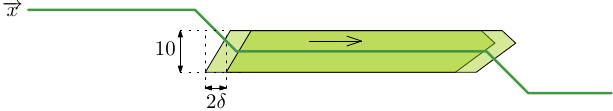}
\caption{A wiring diagram corresponding to the \fgetr formula $x_2+x_3=x_1\land f(x_2,x_1)\geq 0\land g(x_2,x_1)\geq 0$.}
\label{fig:WiringDiagram}
\end{figure}

The term ``wiring diagram'' is often used for drawings of a similar appearance used to represent electrical circuits
or pseudoline arrangements.
We define equidistant horizontal \emph{diagram lines} $\ell_1,\ldots,\ell_{2n}$ so that $\ell_1$ is the topmost one and $\ell_{2n}$ is bottommost. 
The distance between consecutive lines is~$10$.
In a wiring diagram, each variable $x_i$ in $\Phi$ is represented by two $x$-monotone polygonal curves $\overrightarrow{x_i}$ and $\overleftarrow{x_i}$, which we call \emph{wires}.
We think of $\overrightarrow{x_i}$ as oriented to the right and $\overleftarrow{x_i}$ as oriented to the left.
The wire $\overrightarrow{x_i}$ starts and ends on $\ell_{2i-1}$, and $\overleftarrow{x_i}$ starts and ends on $\ell_{2i}$.
Each wire consists of horizontal segments contained in the diagram lines and \emph{jump segments}, which are line segments connecting one diagram line $\ell_j$ to a neighbouring diagram line $\ell_{j\pm 1}$.
The wires are disjoint except that each jump segment must cross exactly one other jump segment.
Thus, the jump segments are used when two wires following neighbouring diagram lines swap lines.

The left and right endpoints of $\overrightarrow{x_i}$ and $\overleftarrow{x_i}$ are vertically aligned, and the wires appear and disappear in the order $(\overrightarrow{x_1},\overleftarrow{x_1}),\ldots,(\overrightarrow{x_n},\overleftarrow{x_n})$ from left to right in a staircase-like fashion.

In the wiring diagram, we represent each addition constraint of $\Phi$ as two inequalities, i.e., $x_i+x_j=x_k$ becomes $x_i+x_j\leq x_k$ and $x_i+x_j\geq x_k$.
Each addition inequality and each curved constraint ($f(x_i,x_j)\geq 0$ and $g(x_i,x_j)\geq 0$) is represented by an axis-parallel \emph{constraint box} intersecting the three or two topmost diagram lines; three for addition constraints and two for curved constraints.
These boxes are pairwise disjoint.
For a constraint $x_i+x_j\leq x_k$, the right-oriented wires $\overrightarrow{x_i},\overrightarrow{x_j},\overrightarrow{x_k}$ must inside the box occupy the lines $\ell_1,\ell_2,\ell_3$, respectively.
For $x_i+x_j\geq x_k$, we need the left-oriented wires $\overleftarrow{x_i},\overleftarrow{x_j},\overleftarrow{x_k}$ instead.
For a curved constraint $f(x_i,x_j)\geq 0$ or $f(x_i,x_j)\geq 0$, we need one of the wires of $x_i$ and one of the wires of $x_j$ to occupy $\ell_1$ and $\ell_2$.
Which combination and which order depends on the particular variant of packing that we are reducing to.

As we define our packing instance using a vertical line sweeping over the wiring diagram from left to right, we require that each vertical line is allowed to cross either zero or two jump segments and in the latter case, these two must cross each other.
A vertical line crossing a constraint box must not cross any jump segment.
This ensures that we only make one new feature in each step of the construction.

\begin{definition}
\label{def:wiredinv}
An instance $\mathcal I\mydef[\Phi,D]$ of the \emph{\wiredinv problem} consists of a \fgetr formula $\Phi$ together with a wiring diagram $D$ of $\Phi$.
\end{definition}

\begin{lemma}
\label{lem:Reduction-INV-WIRED}
Given a \fgetr formula $\Phi$ with variables $x_1,\ldots,x_n$, we can in $O(n^4)$ time construct a wiring diagram of $\Phi$.
\end{lemma}

\begin{proof}
We may assume that $\Phi$ has $O(n^3)$ constraints, since there will otherwise be duplicates of some constraints.
We construct a wiring diagram as follows; refer to \Cref{fig:WiringDiagram}.
We construct all the curves simultaneously from left to right.
We handle the constraints in order and define the curves as we go along.
For instance, for a constraint such as $x_i+x_j\leq x_k$, we route $\overrightarrow{x_i}$ to the line $\ell_1$ using jump segments.
This defines how all other curves should behave in the same range of $x$-coordinates where we have routed $\overrightarrow{x_i}$.
We then route $\overrightarrow{x_j}$ to the line $\ell_2$, and then route $\overrightarrow{x_k}$ to~$\ell_3$.
Each time we route a curve to a specific line, we introduce $O(n)$ crossings.
Therefore, we make $O(n^4)$ crossings in total.
\end{proof}

\subsection{Constructing a packing instance}
Let an instance $\I$ of \wiredinv be given with \fgetr formula $\Phi$ of variables $x_1,\ldots,x_n$ and $\delta\mydef n^{-300}$.
We are going to construct an instance of a packing problem with $N=O(n^4)$ pieces, since this will be the complexity of the size of the wiring diagram of $\I$.
The general idea is to build a packing instance on top of the wiring diagram.
We define a polygonal container $\cont\mydef \cont(\mathcal I)$ containing the wiring diagram in the interior, and a set of pieces $\p$ to be placed in $\cont$.
The container $\cont$ is bounded from below by a line segment, from left and right by $y$-monotone chains, and from above by an $x$-monotone chain.
See \Cref{fig:complete} for a sketch of a complete example.

In \Cref{sec:gadgets,sec:Curved}, we present reductions to the packing problems \pack \convexpolygon  \polygon \rotation, \pack \convexpolygon \curved \translation, and \pack \curved \polygon \translation, i.e., where the container is a polygon or a curved polygon.
In \Cref{sec:SquareContainer}, we show how for fixed polygon and motion type, packing into a polygonal container reduces to packing into a square container.
Together, these results imply hardness for \pack \convexpolygon \square \rotation\ and \pack \curved \square \translation.

\begin{figure}
\centering
\newlength{\imagewidth}
\settowidth{\imagewidth}{\includegraphics[page=2]{figures/FullExample2.pdf}}
\includegraphics[page=2, trim=0 0 0.8333333\imagewidth{} 0, clip, width = \textwidth]{figures/FullExample2.pdf} \\
\includegraphics[page=2, trim=0.166666\imagewidth{} 0 0.6666666\imagewidth{} 0, clip, width = \textwidth]{figures/FullExample2.pdf} \\
\includegraphics[page=2, trim=0.333333\imagewidth{} 0 0.5\imagewidth{} 0, clip, width = \textwidth]{figures/FullExample2.pdf} \\
\includegraphics[page=2, trim=0.5\imagewidth{} 0 0.3333333\imagewidth{} 0, clip, width = \textwidth]{figures/FullExample2.pdf} \\
\includegraphics[page=2, trim=0.666666\imagewidth{} 0 0.16666666\imagewidth{} 0, clip, width = \textwidth]{figures/FullExample2.pdf}
\includegraphics[page=2, trim=0.833333\imagewidth{} 0 0.0\imagewidth{} 0, clip, width = \textwidth]{figures/FullExample2.pdf}
\caption{A sketch of the instance of \protect\pack{\protect\polygon}{\protect\polygon}{\protect\rotation} we get from the wiring diagram in \Cref{fig:WiringDiagram}, broken over six lines.
The adders and curvers (swings) are marked with gray boxes.
The (light and dark) red, blue and green pieces are the variable pieces and the pieces of each nuance form a lane.}
\label{fig:complete}
\end{figure}

\subsubsection*{Defining the construction in steps from left to right}
We define the packing instance as we sweep over the wiring diagram of $\I$ with a vertical sweep line from left to right.
Each step corresponds to one of the following events:
\begin{itemize}
\item
the introduction of a pair of wires $(\overrightarrow{x_i},\overleftarrow{x_i})$,

\item
a crossing of two wires,

\item
an addition or curved constraint,

\item
the termination of a pair of wires $(\overrightarrow{x_i},\overleftarrow{x_i})$.
\end{itemize}
In each step, we add one or more \emph{gadgets}, each involving a constant number of pieces and possibly a constant number of edges to the boundary of the container $\cont$.
When the sweep line passes over the right endpoints of the last wires $(\overrightarrow{x_n},\overleftarrow{x_n})$, the construction of the container $\cont$ and all the pieces $\p$ is complete.

The overall goal of the construction is to prove the following theorem.

\begin{theorem}
\label{thm:Reduction}
Let $\I$ be an instance of \wiredinv.
For each of the problems \pack\convexpolygon\polygon\rotation, \pack\convexpolygon\curved\translation, and  \pack\curved\polygon\translation, we can in polynomial time construct an instance of the problem consisting of a container $\cont$ and a set of pieces \p such that $\I$ has a solution if and only if there is a valid placement of $\p$ in $\cont$.
\end{theorem}

From \Cref{thm:Reduction} and the \ER-hardness of \wiredinv, we now immediately get the claimed results of row one and two in \Cref{fig:Results}.

\begin{corollary}
\label{thm:packPolygon}
The problems \pack\convexpolygon\polygon\rotation, \pack\convexpolygon\curved\translation\ and  \pack\curved\polygon\translation\ are \ER-hard.
\end{corollary}

The third row in \Cref{fig:Results} follows from \Cref{sec:SquareContainer}, as described later in this section.
We now describe how values of variables can be encoded by the placement of certain pieces in our constructed packing instances.

\subsubsection*{Variable pieces}
Each variable $x$ of $\Phi$ will be represented by a number of \emph{variable pieces} in our construction, each of which is a convex polygon.
Each variable piece represents exactly one variable $x$, and we make a correspondence between certain placements of the piece and the values of $x$.
When adding a variable piece to our construction, we also specify the \emph{zero placement} of the piece, which is a specific placement where it encodes the value $0$ of $x$.
In the zero placement, the piece will have a pair of (long) horizontal edges which have distance $10$.
By sliding the piece to the left or to the right from the zero placement, we obtain placements of the piece that encode all real values of $x$, even values outside the range $[-\delta,\delta]$.
Each variable piece will be defined to be either right- or left-oriented.
By sliding a right-oriented (resp.~left-oriented) variable piece to the right by some amount $t\geq 0$, we obtain a placement that encodes the value $t$ (resp.~$-t$), while sliding it to the left by $t$ results in a placement encoding $-t$ (resp.~$t$).
If the piece is rotated differently or placed higher or lower than the zero placement, we do not define any value of $x$ to be encoded by the placement (we will in fact prove that in any solution to the constructed packing instance, no piece can have such an undesirable placement).
We define the \emph{canonical placements} of a variable piece to be the placements that encode values in the interval $[-\delta,\delta]$; see \Cref{fig:VariablePiece}.

On each wire $\overrightarrow{x_i}$ or $\overleftarrow{x_i}$ 
in the wiring diagram of $\I$, we will place several variable pieces representing $x_i$.
The pieces placed on one wire are called a \emph{lane}.
The variable pieces on $\overrightarrow{x_i}$ and $\overleftarrow{x_i}$ are oriented to the right and left, respectively. 
We also introduce some variable pieces which will be placed at other places than on the wires, namely above the topmost wire where they will be introduced in the steps of the construction corresponding to addition or curved constraints.

\begin{figure}
\centering
\includegraphics[page=1]{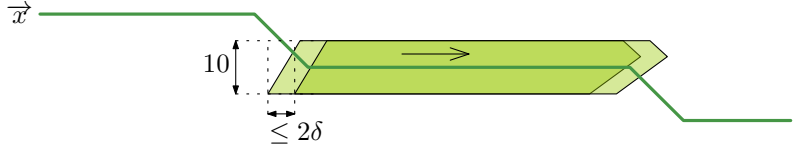}
\caption{A wire $\protect\overrightarrow x$ and a variable piece representing $x$ placed on top, showing the leftmost and rightmost canonical placements of the piece.
The large arrow in the piece indicates that the piece is right-oriented.}
\label{fig:VariablePiece}
\end{figure}

\begin{figure}
\centering
\includegraphics[page=5]{figures/Reduction-Overview.pdf}
\caption{
(a): anchor, (b): swap, (c): adder, (d): swing, (e): gramophone.}
\label{fig:reduction-gadgets}
\end{figure}

\subsubsection*{Gadgets}
Sketches of some of the gadgets can be seen in \Cref{fig:reduction-gadgets}.
In the wiring diagram, each variable~$x_i$ is represented by two wires $\overrightarrow{x_i}$ and $\overleftarrow{x_i}$ such that the left endpoints of $\overrightarrow{x_i}$ and $\overleftarrow{x_i}$ are vertically aligned at distance~$10$, as are the right endpoints.
In both ends of the wires, we build an \emph{anchor} (\Cref{sec:anchor}) which ensures that the pieces placed on $\overrightarrow{x_i}$ and those placed on $\overleftarrow{x_i}$ encode the value of $x_i$ consistently.
Furthermore, the anchor will ensure that the encoded value of $x_i$ is in the \emph{range} $I(x_i)$, which we define as
\[
I(x) =
\begin{cases}
\{\delta\} & \text{if $\Phi$ has a constraint } x = \delta,\\
[0, \delta] & \text{else, if $\Phi$ has a constraint } x \geq 0,\\
[-\delta, \delta] & \text{otherwise.}
\end{cases}
\]

Whenever two wires cross, we build a \emph{swap} (\Cref{sec:swap}).
The swap employs a central piece that can translate in all directions, so that when it is pushed by a variable piece, the push will propagate to the neighbouring variable piece on the other side of the crossing.
We describe \emph{adders} (\Cref{sec:addition}) to implement the addition constraints and \emph{curvers} (\Cref{sec:Curved}) for the curved constraints.
We describe two curvers (see \Cref{fig:reduction-gadgets} (d--e)), both of which exist in a convex and a concave variant.
Which version we use depends on the variant of packing we are reducing to.

Every time we add a gadget to the construction, we also introduce a constant number of new pieces.
Each variable piece is introduced in one gadget where the left end of the piece is defined.
The piece then extends outside the gadget to the right.
The right end of the piece will be defined in another gadget added later to the construction.
The piece is \emph{exiting} the former gadget and \emph{entering} the latter.
In between the left and right end of the piece, defined in these two gadgets, the piece is bounded by a pair of horizontal edges.
All pieces that are not variable pieces are contained within a single gadget.

\subsubsection*{Canonical placements}
Recall that we define canonical placements of each variable piece.
We do not define individual canonical placements of pieces that are not variable pieces, but instead we define canonical placements of all pieces of one or more gadgets:
A placement of a set of pieces is canonical if (1) the placement is valid (i.e., the pieces are in $\cont$ and are non-overlapping), (2) all variable pieces have a canonical placement, and (3) the pieces have certain relationships such as edge-edge contacts between each other.
Part (3) will be specified for each gadget individually.

\subsubsection*{Preservation of solutions}
The following lemma will be used to prove that for every solution to the \fgetr formula~$\Phi$, there is a canonical placement where the pieces encode that solution, meaning that for each variable $x$, all variable pieces representing $x$ encode the same value of $x$ as in the solution.
Define $\gadgets$ to be the total number of gadgets, and let $\p_i$ denote the set of all pieces introduced in the first $i$ gadgets, where $i\in\{0,\ldots,\gadgets\}$, so that $\p_0\mydef\emptyset$.
The proof will be given in the sections describing the individual types of gadgets.

\begin{lemma}[Solution preservation]
\label{lem:preservation}
Consider any $i\in\{1,\ldots,\gadgets\}$ and suppose that for every solution to $\Phi$, there is a canonical placement of the pieces $\p_{i-1}$ that encodes that solution.
Then the same holds for $\p_i$.
\end{lemma}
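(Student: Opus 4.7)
The plan is to prove the lemma by a direct extension argument: fix an arbitrary solution $\mathbf{a}\in V(\Phi)$, take (by hypothesis) a canonical placement of $\p_{i-1}$ encoding $\mathbf{a}$, and explicitly construct a canonical placement of $\p_i$ encoding the same $\mathbf{a}$ by positioning only the pieces in $\p_i\setminus\p_{i-1}$ while leaving those of $\p_{i-1}$ untouched. This reduces the global statement to a local one about the $i$-th gadget only.

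First I would carry out a case analysis on the type of the $i$-th gadget (wire introduction, swap, adder, one of the inversion gadgets from Figure~\ref{fig:reduction-gadgets}, or a terminating anchor). In each case the pieces of $\p_i\setminus\p_{i-1}$ split into a constant number of \emph{internal} pieces local to the gadget and, possibly, the left ends of a few \emph{new variable pieces} whose right ends will only be defined by a later gadget. Variable pieces entering the gadget from the left are already placed in $\p_{i-1}$ and will be kept in place. For each gadget type I would then prescribe an explicit extension rule: each new variable piece goes to the canonical placement encoding its $\mathbf{a}$-value (well-defined since $\mathbf{a}\in I(x_1)\times\cdots\times I(x_n)$ by the promise of \rangeetrinv), and each internal piece goes to the position that realises the edge--edge contacts demanded by condition~(3) of canonicity for that particular gadget.

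Next I would verify the three defining conditions of a canonical placement. Condition~(2) is built into the way the new variable pieces were placed, and condition~(3) is built into the way the internal pieces were placed. Validity (condition~(1)) splits into (a)~containment of the new pieces in $\cont$, (b)~interior-disjointness among the new pieces together with the variable pieces threading through the gadget, and (c)~interior-disjointness with the rest of $\p_{i-1}$. Part~(c) follows from a design invariant of the sweep-line construction: the $i$-th gadget is built inside a region of $\cont$ whose interior meets pieces of earlier gadgets only along the wires continuing through it, so those pieces are not disturbed. Parts~(a) and~(b) are local to the gadget and depend only on its geometry together with the values currently encoded on the entering and exiting wires.

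The hard part will be part~(b) for the adders and inversion gadgets: the geometries of these gadgets are rigged so that non-overlap of the internal pieces holds \emph{exactly} when the constraint being enforced (one of the inequalities $x+y\leq z$, $x+y\geq z$, $xy\leq 1$, or $xy\geq 1$) is satisfied with the currently encoded values, and this is the unique place in the whole argument where the assumption $\mathbf{a}\in V(\Phi)$ is actually used. Since the requisite metric computations differ from gadget to gadget, I would state Lemma~\ref{lem:preservation} as a skeleton here and discharge the per-gadget instances inside Sections~\ref{sec:anchor}--\ref{sec:All-inversion}, once each gadget has been defined in detail; each of those sections will then contribute the local verification of (a) and (b) for its own gadget type, completing the induction.
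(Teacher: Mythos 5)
Your proposal matches the paper's approach: the lemma is stated in Section~\ref{sec:overview} as a skeleton, and the extension argument is carried out per gadget in Sections~\ref{sec:gadgets} and~\ref{sec:All-inversion}, by placing each new variable piece at the position encoding the given solution and each internal piece so as to realize the edge--edge contacts required by condition~(3) of canonicity, with validity of the resulting placement of the adder and inversion gadgets resting on the solution satisfying the corresponding constraint of $\Phi$. This is the same decomposition and the same use of the hypothesis, so no further comparison is needed.
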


\subsubsection*{Soundness of the reduction}
As mentioned, each variable $x$ will be represented by many variable pieces in the complete construction.
A difficulty is that conceivably, such a piece may not be placed in a way that encodes a value of $x$.
Even if all the pieces happen to be placed such that they \emph{do} encode values of $x$, these values could be different and therefore not represent a solution to the formula $\Phi$.

For a small number $\Delta\geq \delta$, we are going to introduce a class of placements called \emph{aligned $\Delta$-placements}.
These are defined from the canonical placements by relaxing the requirements a bit.
In an aligned $\Delta$-placement, each variable piece must be placed so that it encodes a value, but it may slide $\Delta$ sideways from the placement encoding the value $0$ instead of at most $\delta$ as for the canonical placements.
The requirements to the other pieces are likewise relaxed and will be given later.
The values encoded by the variable pieces in an aligned $\Delta$-placement may therefore conceivably be outside the required range $[-\delta,\delta]$.
The following lemma tells us that this is not the case for $\Delta$ sufficiently small.
In fact, the existence of such a placement is enough to ensure that $\I$ has a solution.
The number $\slack$ is the \emph{slack} of the construction defined as the area of the container $\cont$ minus the total area of the pieces $\p$, and as will be explained later, $\mu=O(n^{-296})$ in our construction.
The number $g=O(n^4)$ is the number of gadgets.

\begin{lemma}[Soundness]
\label{lem:consistency}
Consider an aligned $g\slack$-placement and any variable $x$.
There is a specific non-empty subset of the pieces representing $x$ that encode the value of $x$ consistently (i.e., they all encode the same value of $x$) and the value is in the range $I(x)$.
Furthermore, these values of the variables satisfy the constraints of~$\Phi$.
\end{lemma}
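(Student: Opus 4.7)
The plan is to exploit the tight area budget---the total slack $\slack$ of the construction is $O(n^{-296})$---to translate the allowed $g\slack$ sliding freedom in an aligned placement into rigid algebraic behavior, by composing gadget-local lemmas. Since the gadgets occupy spatially disjoint regions of $\cont$ and their pieces are interior-disjoint, the total uncovered area inside $\cont$ is exactly $\slack$, and hence every individual gadget has at most $\slack$ of area slack available to its pieces. This localization is the main engine of the proof: although a variable piece is formally permitted to slide up to $g\slack$ from its middle placement, the geometry within each gadget is much more rigid than this bound suggests, because any deviation in an encoded value consumes a definite amount of area from the $\slack$ budget within the gadget it affects.

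The first step is to invoke the gadget-local lemmas proven in Sections~\ref{sec:gadgets} and~\ref{sec:All-inversion}. For each gadget type (anchor, swap, adder, inversion), the corresponding local lemma states that in a valid placement with at most $\slack$ of area slack inside that gadget, the encoded values of the variable pieces entering and exiting it must satisfy the expected relation: equality at anchors, value-transport at swaps, sum $v_i + v_j = v_k$ at adders, and product $v_i \cdot v_j = 1$ at inversion gadgets. With these local results in hand, I would traverse the wiring diagram of $\I$ from left to right and compose: consecutive variable pieces on a single wire are linked via swaps, adders, or inversion gadgets, while the two wires $\overrightarrow{x_i}$ and $\overleftarrow{x_i}$ of each variable are tied together at their endpoints by anchors. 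This yields a globally consistent encoded value $v_{x_i}$ for each variable, establishing the consistency claim.

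The range claim $v_{x_i} \in [1/2,2]$ follows from the construction of the middle placements---each of which lies in $I(x_i) \subseteq [1/2,2]$ by definition of \wiredinv---combined with the anchor-gadget local lemma, which supplies geometric walls preventing a variable piece from sliding so far that its encoded value leaves $[1/2,2]$. The addition and inversion constraints on $v_{x_1},\ldots,v_{x_n}$ are then immediate from the local lemmas for the adder and inversion gadgets, once consistency has been established. Formally, one obtains a point in $V(\Phi)$, which by the promise of \rangeetrinv must lie in $I(x_1) \times \cdots \times I(x_n)$.

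The main obstacle is obtaining tight per-gadget stiffness estimates: for each gadget type one must show that any deviation of magnitude $\eps$ in an encoded value of an entering or exiting variable piece consumes area at least polynomially in $\eps$ within the affected gadget, so that the $\slack$ budget pins the encoded values down to precisely the algebraic relation enforced by the gadget. The quantitative area-for-error tradeoffs are what drive the calibration of $\delta = n^{-300}$, $\slack = O(n^{-296})$, and $g = O(n^4)$: these choices are made so that the geometric budget suffices to force the encoded values into the narrow regime where the composed algebraic relations have only the exact solutions of $\Phi$. Establishing those stiffness bounds gadget by gadget is the content of Sections~\ref{sec:gadgets} and~\ref{sec:All-inversion}; granting them, Lemma~\ref{lem:consistency} follows by the compositional walk over the wiring diagram sketched above.
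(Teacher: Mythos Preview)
Your proposal has a genuine gap: the gadget-local lemmas do \emph{not} give exact equalities, and your area-budget argument cannot recover them. Each gadget enforces only a one-directional inequality. A swap gives $\enc{p_1}\leq\enc{p_2}$; an anchor gives $\enc{p_{\text{blue}}}\leq\enc{p_{\text{yellow}}}$; the adder for $x+y\leq z$ gives $\enc{p_x}+\enc{p_y}\leq\enc{K_z}$; the teeter-totter gives $x\cdot y\geq 1$, the seesaw $x\cdot y\leq 1$. The wiring diagram deliberately represents each equality constraint by \emph{two} separate gadgets, one for each inequality. Your claim that ``the corresponding local lemma states \ldots equality at anchors, value-transport at swaps, sum $v_i+v_j=v_k$ at adders'' is not what the gadgets deliver.

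Your area-stiffness idea---a deviation $\eps$ consumes area $\Omega(\eps)$ inside the gadget---is real (sliding a piece by $t$ creates $10t$ of empty space), but it bounds the deviation by the gadget's share of $\slack$, which is positive; it does not force the deviation to zero. Summed over $g$ gadgets you would only get that the values agree up to $O(\slack)$, whereas the lemma asserts exact consistency. The paper closes this gap structurally, not quantitatively: it introduces a directed dependency graph $G_x$ whose vertices are the variable pieces for $x$, proves the edge inequality $\enc{p_1}\leq\enc{p_2}$ for every edge (Lemma~\ref{lem:graphIneq}), and observes that the pieces on the two wires $\overrightarrow{x},\overleftarrow{x}$ form a directed cycle $K_x$ (Lemma~\ref{lem:graph}). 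A cycle of inequalities forces all of them to be equalities, so every piece on $K_x$ encodes the same value; the anchors then pin this value into $[1/2,2]$. The addition and inversion constraints follow because each is realized as a matched pair of $\leq$- and $\geq$-gadgets connected to the cycles via the attached paths. Your compositional walk is missing exactly this cycle-of-inequalities mechanism.
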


In fact, it will follow from \Cref{lem:consistency} that every aligned $g\slack$-placement is canonical, but this is not important for our proof of \Cref{thm:Reduction}.
The remaining work in proving the theorem will be to prove the following lemma.

\begin{lemma}
\label{lem:validAligned}
Every valid placement is an aligned $g\slack$-placement.
\end{lemma}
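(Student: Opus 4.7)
}
The plan is to establish the lemma by a gadget-by-gadget propagation argument starting from the anchors and sweeping along each wire, in the spirit of Lemma~\ref{lem:preservation} but in the converse direction. For every gadget $j\in\{1,\ldots,\gadgets\}$, define its \emph{local slack} $\slack_j$ as the area of the part of $\cont$ lying in the gadget's bounding region minus the total area of the portions of pieces intersecting that region. By construction of $\cont$ and $\p$, the bounding regions of the gadgets partition (up to a negligible overlap that can be charged) the interior of $\cont$, so $\sum_j \slack_j = \slack$, and in particular no single gadget has local slack exceeding~$\slack$ in a valid placement.

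First, I would prove a \emph{local pinning lemma} for each gadget type (anchor, swap, adder, and each of the inversion gadgets from Section~\ref{sec:All-inversion}): in any valid placement, and assuming that every variable piece entering the gadget is axis-aligned, placed on the correct diagram line $\ell_k$, and displaced horizontally from its middle placement by at most $\Delta$, every variable piece exiting the gadget enjoys the same properties, with horizontal displacement bounded by $\Delta + c\slack_j$ for an absolute constant~$c$. The pinning of orientation and vertical position is forced by the tight clearance inside the gadget: any rotation by a nontrivial angle, or vertical shift by more than $O(\slack_j)$, would create a geometric obstruction that can only be resolved by either an overlap (contradicting validity) or by freeing area greater than $\slack_j$ elsewhere in the gadget (contradicting the definition of $\slack_j$). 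The horizontal bound is obtained from an area accounting: the sum of the gaps between touching edges in the gadget equals $\slack_j$ up to a constant factor, so no translation tangential to the wire can exceed $c\slack_j$ beyond what the incoming pieces already permit.

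Second, I would initialise the propagation at the anchors built at the left ends of each wire pair $(\overrightarrow{x_i},\overleftarrow{x_i})$: the anchor lemma from Section~\ref{sec:anchor} directly gives that the first variable piece on each wire has displacement at most $c\slack_j \le c\slack$. Chaining the local pinning lemma along the wire, the displacement at the $k$-th gadget along the wire is at most $c\sum_{j\le k}\slack_j \le c\slack$; absorbing the constant by using the number of gadgets as a slack multiplier, every variable piece has horizontal displacement at most $\gadgets\slack$ from its middle placement, and is axis-aligned and on the correct diagram line. This is exactly the definition of an aligned $\gadgets\slack$-placement once the analogous (and easier) statements for the non-variable pieces inside each gadget are read off from the local pinning lemmas.

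The main obstacle is the local pinning lemma for the inversion gadgets. For anchors, swaps, and adders the relations between pieces are piecewise linear, and an elementary convexity/area argument shows that an $\varepsilon$-shift of one piece either forces an $\varepsilon$-shift in the neighbouring piece or eats area of order $\varepsilon$, giving the Lipschitz constant $c$ directly. For the teeter-totter, seesaw, and gramophone, the constraint $x\cdot y=1$ is realised by a contact between a piece and a hyperbolic arc, and one must verify that a small misplacement in the contact does not amplify into a large displacement of the exiting variable piece. Here the promise $I(x)\subseteq[2/3-\delta,2/3+\delta]\cup[1-\delta,1+\delta]\cup[3/2-\delta,3/2+\delta]$ with $\delta=n^{-300}$ is crucial: it confines the contact point to a short arc of bounded slope, on which the implicit map $x\mapsto 1/x$ is $O(1)$-Lipschitz, so the amplification factor through an inversion is an absolute constant and the chained bound $\gadgets\slack$ still holds.
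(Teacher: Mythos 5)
Your proposal misses the central difficulty that the paper's fingerprinting technique exists to address: in an arbitrary valid placement, you do not yet know \emph{which} piece is sitting in which gadget, so the propagation you describe cannot even get started. Your local pinning lemma is phrased in terms of ``the variable piece entering the gadget'' and ``the variable piece exiting the gadget,'' but this is exactly the information you need to derive, not assume. Your local slack accounting tells you that each gadget's region is almost fully covered by \emph{some} pieces, but nothing in that observation forces the covering pieces to be the intended ones, to be at the intended vertical positions, or even to be oriented the intended way. The paper closes this gap with Lemma~\ref{lemma:boundDiff0} (Single Fingerprint) and Lemma~\ref{lem:multipleFingerprints}, which combine the tiny slack with the unique angle property (Lemma~\ref{lem:UniqueAngles}) to prove that a wedge of the empty space with apex angle $\beta$ can only be covered efficiently by a piece that has a corner of angle within $[\beta-\sigma,\beta+\sigma]$, and by construction exactly one remaining piece has such a corner. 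Your proposal never invokes (or reinvents) anything playing the role of the unique angle property, so the ``tight clearance inside the gadget'' argument is not grounded: a piece intended for a different gadget could occupy the region without violating the local slack bound.

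A secondary issue is that your area accounting, even granted the identification of pieces, does not by itself pin down orientation; the paper needs a separate alignment step (Lemma~\ref{lem:alignedPl}, via alignment segments of prescribed length) \emph{after} fingerprinting has shown the placement is almost-canonical, because fingerprinting only bounds the displacement of one corner and the adjacent edges. Your description of rotation control as ``freeing area greater than $\slack_j$'' conflates these two stages. The paper's proof of Lemma~\ref{lem:validAligned} is therefore a two-step induction per gadget, with Lemma~\ref{lem:AlmostCanonicalPlacement} (fingerprinting) followed by Lemma~\ref{lem:alignedPl} (alignment), and the $g\slack$ factor comes from accumulating a $\slack$ displacement bound per gadget rather than from a local-slack partition; your claimed $c\slack$ total bound does not survive scrutiny because $\slack_j$ is itself a function of the unknown placement, making the estimate circular. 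Finally, your discussion of the gramophone focuses on the Lipschitz behaviour of $x\mapsto 1/x$, which is a true but peripheral observation: the almost-canonical step for those gadgets is again proved by fingerprinting, not by analysing the hyperbolic arc.
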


The proof of \Cref{thm:Reduction} is now straight-forward:

\begin{proof}[Proof of Theorem \ref{thm:Reduction}]
If $\I$ has a solution, then it follows directly from repeated use of \Cref{lem:preservation} that there is a valid placement.
Suppose now that there is a valid placement.
By \Cref{lem:validAligned}, the placement is an aligned $g\slack$-placement.
By \Cref{lem:consistency}, some variable pieces encode a solution to the formula $\Phi$.
Therefore, $\I$ has a solution.
\end{proof}

\subsection{Basic tools: Slack, fingerprinting and unique angles}

In the following we will describe some tools needed to prove \Cref{lem:validAligned}.

\subsubsection*{The slack of the construction}
The \emph{slack} of an instance of a packing problem is the area of the container $\cont$ minus the total area of the pieces, and we denote the slack of our construction by $\slack$.
We need the slack to be very small in order to use the fingerprinting technique which will be described later.

We now give an upper bound on the slack of the complete construction.
Our construction will be described as depending on the number $\delta\mydef n^{-300}$.
We place each variable piece so that it encodes the value $0$, and we place the remaining pieces as shown in the sections that describe the individual gadgets.
We now define $\slack'$ to be the area of the container $\cont$ that is not covered by pieces in this placement and thus trivially have $\slack\leq\slack'$.
By checking each type of gadget, it is straightforward to verify that the placement can be realized as a canonical (and thus valid) placement of the pieces in all gadgets except for the anchors, where some pieces are not completely contained in $\cont$.
The uncovered area in each gadget will appear as a thin layer along some of the edges of the pieces in the gadget.
This layer has thickness $O(\delta)$, and the edges along which it appears have total length $O(1)$, so the area is $O(\delta)$ in each gadget.
There will be no empty space outside the gadgets because that space will be completely covered by variable pieces.
Since the final construction has $g=O(N)=O(n^4)$ gadgets, it follows that $\slack\leq\slack'=O(n^4\delta)=O(n^{-296})$.
It may seem a little odd to measure the slack in this indirect way, but we found it to be the easiest way to get the bound since we do not explicitly specify the area of the container or the pieces in our construction. 
The bound $\slack = O(n^{-296})$ ensures that $\slack$ is sufficiently small that we can use of the fingerprinting technique, which is described in the following.

\begin{figure}
\centering
\includegraphics[page = 2]{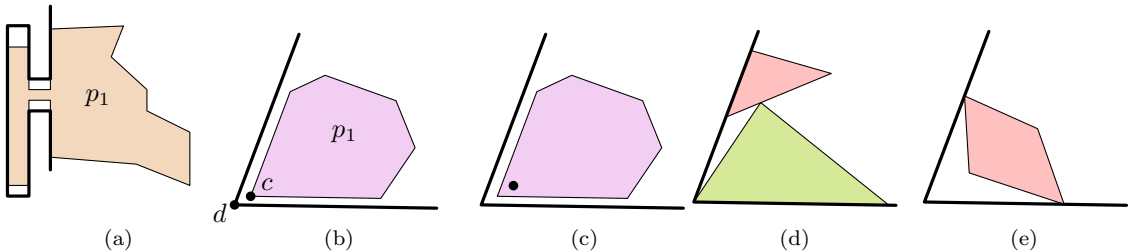}
\caption{(a): A pocket and an augmentation that fit perfectly together as in a jigsaw puzzle.
(b): A wedge of the empty space and a piece which fit together.
(c): The corner of the piece we are fingerprinting is marked with a dot.
(d) and (e): Two examples where space is wasted because a wedge is not occupied by a piece with a matching angle.
}
\label{fig:Jigsaw}
\end{figure}

\subsubsection*{Fingerprinting}
In order to prove \Cref{lem:validAligned}, we first show that every piece must be placed very close to a canonical placement using a technique we call \emph{fingerprinting}.
To grasp the idea of this technique, we first present another simpler technique that only works for non-convex pieces, and which we call the \emph{jigsaw puzzle} technique for obvious reasons; see \Cref{fig:Jigsaw} (a).
The idea behind this technique is to force each piece $p_1$ 
to be at a specific position by creating a pocket of the container and a corresponding
augmentation of the piece $p_1$ intended to be placed there.
This is done in a way that only the piece $p_1$ has an augmentation that fits into the pocket, just as the principle behind a jigsaw puzzle, and it can be done in a way that gives the piece freedom to slide back and forth or rotate by a slight amount, etc.
The pocket can also be created in another piece $p_2$ if $p_1$ is intended to be placed next to $p_2$.
Making enough of these pairs of pockets and extensions, we can therefore deduce where all the pieces are placed in all valid placements.

In fact, the jigsaw puzzle technique can be used to prove $\ER$-hardness of packing problems with non-convex pieces in a much simpler way than the proofs of this paper, but unfortunately, the technique is not directly realizable with convex pieces.
In \emph{fingerprinting}, instead of making complicated augmentations of the pieces, we only work with a piece $p_1$ with a convex corner~$c$ of a specific angle $\alpha_1$.
In the canonical placements, the empty space left by the other pieces forms a wedge with apex corner $d$ of angle $\alpha_1$ which can thus be covered very efficiently by~$p_1$ by placing the corner $c$ at or very close to $d$, as in \Cref{fig:Jigsaw} (b).
We make sure that every corner of every other piece has an angle $\alpha_2$ significantly different from $\alpha_1$, in the sense that $|\alpha_2-\alpha_1|=\Omega(N^{-2})$.
It should likewise hold that the total angle of any combination of corners of other pieces is different from $\alpha_1$ in that sense.
Furthermore, the slack $\slack$ is tiny, as described above.
As a result, we can show that if $c$ is \emph{not} placed very close to $d$, this will result in the empty space in a neighbourhood around $d$ with an area exceeding $\slack$, because no other piece (or combination of pieces) can cover that neighbourhood efficiently, illustrated in \Cref{fig:Jigsaw}~(d-e).
In our constructions, fingerprinted corners are marked with a dot; see \Cref{fig:Jigsaw}~(c).
For technical reasons, the fingerprinted corners must have angles in the range from~$5\pi/180$ to~$\pi/2$.

We add a few remarks about the use of fingerprinting.
First of all, the situation shown in \Cref{fig:Jigsaw}~(b) is simplified.
In our applications, the fat segments (bounding the empty space left by the other pieces) do not need to meet at the apex corner $d$, since a short portion (of length $O(\delta)$) close to the corner can be missing.
Furthermore, the angle between the two fat segments does not have to be exactly $\alpha_1$ before the technique can be used; just very close to $\alpha_1$ (this will be important when we are fingerprinting more than one piece in a row).

Second, the fingerprinting does not imply that the piece $p_1$ \emph{must} be placed with the corner~$c$ coincident with $d$, but only that the distance $\|cd\|$ has to be small.
This is used deliberately in our constructions, since it allows for the piece $p_1$ to move slightly.

Third, when we introduce a new gadget and its $k$ new pieces $p_i,\ldots,p_{i+k-1}$, 
we use fingerprinting iteratively to argue where the new pieces must be placed.
Here, the $j$'th piece $p_{i+j-1}$, $j\in\{1,\ldots,k\}$, can be fingerprinted in a wedge of the empty space formed by the preceding pieces $p_i,\ldots,p_{i+j-2}$.
However, the bound on the uncertainty of where $p_{i+j-1}$ is placed increases with $j$.
Slightly simplified, the bound grows as $O(n^{O(j)}\sqrt{\slack})$, and we need the bound to be at most some small constant to be of any use.
We prefer to create an instance where we need only a logarithmic number of bits to represent the coordinates of the container and the pieces, since this will prove that the packing problems are \emph{strongly} $\ER$-hard.
It is therefore important that we only apply fingerprinting iteratively a constant number of times, i.e., that $k=O(1)$, as we will otherwise need to choose the slack $\slack$ to smaller than $n^{-q}$ for every constant $q>0$, and then it will require a superlogarithmic number of bits to represent the coordinates of our instance.
In the construction, we will always have $k\leq 7$ and we can do with choosing $\delta\mydef n^{-300}$ so that $\slack=O(n^{-296})$.

In \Cref{sec:FingerPrinting}, we will develop the fingerprinting technique in detail.
We consider this part the technically most challenging of the paper.
The technique is versatile and can likely be used in other reductions to packing.
Let us for instance mention that fingerprinting also works for packing problems where we are allowed to reflect the pieces when placing them.
This gives two possibilities for each piece, and one must be excluded by other reasons, such as overlap with other pieces.
As the conditions for the fingerprinting technique are technical, we will list them in \Cref{sec:FingerPrinting} in detail.
We will show for each gadget that those conditions are met.

After using fingerprinting iteratively a constant number of times for the new pieces, we use other techniques, such as the \emph{alignment} (to be described in the sequel), to argue about their placement.

\subsubsection*{Choosing unique angles}
As described in the previous paragraph, whenever we apply the fingerprinting technique to argue that a corner with angle $\alpha_1$ of some piece is placed close to some specific point in the container, we need that every combination of corners of the other pieces have angles that sum to an angle $\alpha_2$ such that $|\alpha_2-\alpha_1|=\Omega(N^{-2})$.
This will be called the \emph{unique angle property}.
In order to obtain this property, the construction will be designed so that each piece has a special corner where the angle can be chosen freely (within some interval of angles of size $\Omega(1)$).
Likewise, the wedge (where the special corner is intended to be placed) formed by the boundary of the container or the other pieces is flexible, so that the angle of the wedge can match the chosen angle of the corner.
In our figures, the special fingerprinted corners are marked with a dot, as in \Cref{fig:Jigsaw} (c).
The following lemma is used to choose these free angles such that we get the unique angle property.

\begin{lemma}
\label{lem:UniqueAngles}
Let $S_k\mydef \{\frac ik+\frac 1{2k^2}\mid i\in\{1,\ldots,k\}\}$, consider a subset $R\subseteq S_k$, and let $x\mydef \sum_{r\in R} r$.
If $x\in S_k$, then $R=\{x\}$.
\end{lemma}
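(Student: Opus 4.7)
The plan is to exploit the very specific arithmetic form of the elements of $S_k$, namely that each element has a fractional "tail" of $1/(2k^2)$ on top of a multiple of $1/k$, and to read off the cardinality $|R|$ modulo $2k$ from the condition $x \in S_k$.

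Concretely, suppose $R \subseteq S_k$ and write $R = \{i_1/k + 1/(2k^2), \ldots, i_m/k + 1/(2k^2)\}$ for distinct indices $i_1,\ldots,i_m \in \{1,\ldots,k\}$, where $m \mydef |R|$. Set $I \mydef i_1 + \cdots + i_m$. Then
\[
x \;=\; \sum_{r\in R} r \;=\; \frac{I}{k} + \frac{m}{2k^2}.
\]
If the hypothesis $x \in S_k$ holds, then also $x = j/k + 1/(2k^2)$ for some $j \in \{1,\ldots,k\}$. Clearing denominators by multiplying through by $2k^2$, this gives
\[
2kI + m \;=\; 2kj + 1,
\]
so in particular $m \equiv 1 \pmod{2k}$.

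Now observe that because $R \subseteq S_k$ and $|S_k| = k$, we have $0 \le m \le k$. Combined with $m \equiv 1 \pmod{2k}$, the only possibility is $m = 1$. (The case $m = 0$ is excluded because it would give $x = 0 \notin S_k$, or directly because $0 \not\equiv 1 \pmod{2k}$.) Hence $R$ is a singleton, and its unique element must equal $x$, so $R = \{x\}$, as claimed.

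I do not expect a serious obstacle here; the whole proof is a one-line modular-arithmetic observation once one writes the sum in the natural form. The only thing to be a bit careful about is the bookkeeping of the offset $1/(2k^2)$ — it is this choice of offset that makes the cardinality $m$ recoverable modulo $2k$ from $x$, and that in turn forces $m=1$ given the trivial upper bound $m \leq k$.
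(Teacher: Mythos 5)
Your proof is correct and follows essentially the same approach as the paper: both arguments compute $x = I/k + m/(2k^2)$ where $m = |R|$, observe that membership in $S_k$ forces the $m/(2k^2)$ term to match $1/(2k^2)$, and conclude $m = 1$ from the bound $m \le k$. Your version just spells out the "can only be in $S_k$ if $m=1$" step via explicit modular arithmetic ($m \equiv 1 \pmod{2k}$), which the paper leaves implicit.
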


\begin{proof}
If $R$ consists of $m$ elements, then $x$ has the form $\frac jk+\frac m{2k^2}$ for some $j\in\N$.
A number of this form, for $m\leq k$, can only be in $S_k$ if $m=1$.
\end{proof}

The lemma provides a set of $k$ numbers in a range of size $O(1)$ and any number is $\Omega(k^{-2})$ away from the sum of any combination of other numbers.
We multiply the numbers in $S_k$ by~$\pi$ to get a set of rational angles and choose the free angles from such a set $\pi S_k$, for $k=O(N)$.
The free angles are restricted to various subintervals of $[0,\pi]$, so we choose $k$ so large that $\pi S_k$ contains enough angles from each of these subintervals.
However, as each subinterval has size $\Omega(1)$, we can do with $k=O(N)$.

\subsection{Proof structure of \texorpdfstring{\Cref{lem:validAligned}}{Lemma \ref{lem:validAligned}}}
In the proof of \Cref{lem:validAligned}, we use the fingerprinting technique to prove that in every valid placement, the pieces are placed almost as in a canonical placement.
To explain the structure of the argument in more detail, we need some notions of placements that are close to being canonical, which will be defined in the following paragraph.

\subsubsection*{Almost-canonical placements and aligned placements}
We say that a valid placement of the pieces of a gadget is \emph{almost-canonical} if there exists rigid motions that move the pieces to a canonical placement such that every point in each piece is moved a distance of at most $n^{-1}$ (in other words, the \emph{displacement} between the actual placement and the canonical placement of each piece is $n^{-1}$).

We say that a placement of the pieces of a gadget is an \emph{aligned $\Delta$-placement} for $\Delta\geq\delta$ if (i) the placement is almost-canonical, and (ii) for each variable $x$, each variable piece representing~$x$ encodes a value in the range $[-\Delta,\Delta]$.
Note that since the placement is almost-canonical, we can always assume $\Delta\leq n^{-1}+\delta=n^{-1}+n^{-300}$.

The following lemma says that the pieces of every new gadget can be assumed to be almost-canonical if the preceding pieces have an aligned $\Delta$-placement, for $\Delta$ sufficiently small.
Recall that $\p_i$ is the set of all pieces introduced in the first $i$ gadgets, where $i\in\{0,\ldots,\gadgets\}$, so that $\p_0\mydef\emptyset$.
The lemma will follow from the use of fingerprinting, and the proof will be given in the sections describing the individual types of gadgets.

\begin{lemma}[Almost-canonical Placement]
\label{lem:AlmostCanonicalPlacement}
For any $i\in\{1,\ldots,\gadgets\}$, consider a valid placement~$P$ (of all the pieces) for which the pieces $\p_{i-1}$
have an aligned $(i-1)\slack$-placement. 
It then holds for~$P$ that the pieces $\p_i$ have an almost-canonical placement. 
\end{lemma}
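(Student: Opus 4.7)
}
The plan is to prove the lemma gadget by gadget, carrying out the argument in the sections that introduce each gadget type (anchor, swap, adder, teeter-totter, seesaw, gramophone). For a fixed index $i$, let $k = O(1)$ be the number of pieces introduced in the $i$-th gadget, and name them $p_i, \ldots, p_{i+k-1}$ in the order in which they will be analysed. By hypothesis, the pieces of $\p_{i-1}$ are placed so that after rigid motions of displacement at most $n^{-1}$, they sit in a canonical placement, and in particular each variable piece that enters the new gadget from the left is rotated as in the canonical placement, has the prescribed $y$-coordinate, and encodes a value within $(i-1)\slack$ of the canonical one. The strategy is then to apply the fingerprinting technique iteratively to $p_i, \ldots, p_{i+k-1}$ in this order, using the already-bounded positions of $\p_{i-1}$ together with the previously fingerprinted pieces of the current gadget to form, at each step, the wedge of empty space into which the next fingerprinted corner must settle.

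First I would spell out, for each gadget, the wedge that pins down the $j$-th new piece $p_{i+j-1}$: its two bounding segments will be edges of pieces in $\p_{i-1} \cup \{p_i,\ldots,p_{i+j-2}\}$ or of the container $\cont$, and its apex angle will be within an additive error of $O(n^{j-1}\sqrt{\slack})$ of the angle $\alpha$ of the fingerprinted corner of $p_{i+j-1}$, inherited from the accumulated positional uncertainties of the earlier pieces. Next I would invoke the unique angle property, guaranteed by Lemma~\ref{lem:UniqueAngles} applied to the set $\pi S_k$ from which the free angles are drawn: no sum of angles of other pieces' corners lies within $\Omega(N^{-2})$ of $\alpha$, so no other combination of pieces can cover the wedge to within slack $\slack = O(n^{-296})$. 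Fingerprinting then yields that the fingerprinted corner of $p_{i+j-1}$ lies within distance $O(n^{O(j)}\sqrt{\slack})$ of the apex of its wedge, and together with the contact constraints imposed by the wedge edges this pins the rigid motion of $p_{i+j-1}$ to within displacement $O(n^{O(j)}\sqrt{\slack})$ from its canonical placement. Since $k\le 7$ is constant and $\slack = O(n^{-296})$, the final displacement is at most $O(n^{O(1)-148}) \ll n^{-1}$ for all sufficiently large $n$, which is exactly the almost-canonical condition.

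The hard part will be verifying, for each of the six gadget types, that the fingerprinting prerequisites are actually met in the geometric configuration: the fingerprinted corner has angle in $[5\pi/180, \pi/2]$; the two wedge-bounding segments are long enough and meet at a corner that is at most $O(\slack)$ away from the true wedge apex (the short missing piece near the apex, of length $O(\delta)$, permitted by the technique is legitimate); and the choice of free angles from $\pi S_k$ for the $N = O(n^4)$ pieces leaves enough values in each prescribed subinterval so that the unique angle property holds globally. I expect the adder and the inversion gadgets to require the most care here, since they combine several new pieces that mutually depend on one another's fingerprinted positions and because variable pieces enter and leave the gadget on both sides, so the wedges there involve pieces from $\p_{i-1}$ whose position is known only up to $n^{-1}$ rather than to within $\sqrt{\slack}$. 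To handle this, I would argue that the wedge apex angle, although perturbed by the $n^{-1}$ uncertainty on the two bounding segments, still differs from every wrong sum of angles by $\Omega(N^{-2}) \gg n^{-1}$, because the unique angle property is robust under angular perturbations of that size by construction; the positional uncertainty of the apex itself only shifts where $p_{i+j-1}$ must sit, but does not destroy the wedge-area lower bound that drives fingerprinting. Once this verification is carried out gadget by gadget in Sections~\ref{sec:anchor}--\ref{sec:addition} and Section~\ref{sec:All-inversion}, the stated almost-canonical bound follows uniformly.
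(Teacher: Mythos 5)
Your high-level strategy matches the paper's: proceed gadget by gadget, apply iterated fingerprinting (the paper's Lemma~\ref{lemma:boundDiff0} repeated, i.e.\ Lemma~\ref{lem:multipleFingerprints}) to the $k\le 7$ new pieces in order, verify the wedge/unique-angle prerequisites, and deduce a displacement bound far below $n^{-1}$. That is the correct skeleton.

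However, there is a genuine gap in the way you handle the uncertainty coming from $\p_{i-1}$. You reason as if the bounding edges of the wedge are positioned only to within $n^{-1}$ (the almost-canonical slop), and then claim this can be absorbed because ``$\Omega(N^{-2})\gg n^{-1}$''. That inequality is false: $N=O(n^4)$, so $N^{-2}=O(n^{-8})\ll n^{-1}$, i.e.\ the angular separation guaranteed by the unique-angle property is far \emph{smaller} than $n^{-1}$, not larger. Worse, a $\lambda$ of order $n^{-1}$ in the single-fingerprint bound $O(\lambda/\sigma+\sqrt{\slack/\sigma})$ would contribute $n^{-1}\cdot n^{8}=n^{7}$, which is useless. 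So the patch you propose does not work. It is also unnecessary: the hypothesis of the lemma is not merely that $\p_{i-1}$ is almost-canonical but that it is an \emph{aligned} $(i-1)\slack$-placement. By definition of ``aligned'', the variable pieces of $\p_{i-1}$ that bound the new gadget have \emph{exactly} canonical orientation and $y$-coordinate and encode values within $(i-1)\slack\le g\slack=O(n^{-292})$ of the canonical ones. Hence the soundness parameter the paper plugs into the Multiple Fingerprints lemma is $\lambda=O(\delta+(i-1)\slack)=O(n^{-292})$, which makes $\lambda/\sigma$ negligible and yields the stated $O(n^{-48})$ displacement. You should replace your $n^{-1}$-robustness argument with this direct appeal to the aligned hypothesis; otherwise the error analysis does not close.
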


\begin{figure}
\centering
\includegraphics[page = 3]{figures/Reduction-Overview.pdf}
\caption{The alignment segment $\ell$ makes us conclude that the pieces must be horizontally aligned: otherwise, they would overlap, cross the container boundary, or cover more of the alignment segment than what is available.
}
\label{fig:Align}
\end{figure}

\subsubsection*{Aligning pieces}
Once we know that the pieces $\p_i$ have an almost-canonical placement, provided by the previous lemma, we can use so-called \emph{alignment segments} to further restrict where the new pieces $\p_i\setminus\p_{i-1}$ introduced in gadget $i$ can be placed.
In particular, we will be able to fix the rotations of some pieces to be as in the canonical placements.
The idea is sketched in \Cref{fig:Align}.
From the rough placements we get from fingerprinting, we know that a set of the pieces each has a pair of parallel edges that are both cut through by a vertical alignment segment~$\ell$.
If we sum the distance between the two parallel edges over all the pieces, we get exactly the length of~$\ell$.
Since the portions of~$\ell$ covered by the pieces must be pairwise disjoint in a valid placement, we can conclude that the pieces have to be rotated so that the parallel edges are perpendicular to~$\ell$.
This technique will be used to prove the following lemma for each gadget individually.
Using the lemma repeatedly together with \Cref{lem:AlmostCanonicalPlacement}, we get that every valid placement is also an aligned $\gadgets\slack$-placement, proving \Cref{lem:validAligned}.

\begin{lemma}[Aligned placement]
\label{lem:alignedPl}
  For any $i\in\{1,\ldots,\gadgets\}$, consider a valid placement $P$ (of all the pieces) for which the pieces $\p_{i-1}$
  have an aligned $(i-1)\slack$-placement and the pieces $\p_i$ have an almost-canonical placement.
  It then holds for $P$ that the pieces $\p_i$ have an aligned $i\slack$-placement. 
\end{lemma}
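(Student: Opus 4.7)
The plan is to lift the hypothesis --- namely that $\p_i$ has an almost-canonical placement and $\p_{i-1}$ has an aligned $(i-1)\slack$-placement --- to the stronger conclusion that $\p_i$ has an aligned $i\slack$-placement. Condition (i) in the definition of aligned placement (that $P$ is almost-canonical on $\p_i$) holds by hypothesis, so the work is in establishing condition (ii): each variable piece in $\p_i$ must have exactly canonical rotation and $y$-coordinate, and must encode a value within $i\slack$ of the middle value $m$ of its interval $I(x)$. The argument will be carried out gadget-by-gadget in Sections~\ref{sec:gadgets} and \ref{sec:All-inversion}, with one instance of the lemma proved for each gadget type.

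The main tool is the alignment-segment technique sketched in Figure~\ref{fig:Align} (left). For each gadget type I would exhibit alignment segments $\ell$ contained in the container with the property that a set of pieces in the gadget each have a pair of parallel edges cut by $\ell$, and the sum of pairwise distances between these parallel edges equals the length of $\ell$. Since the portions of $\ell$ that lie inside different pieces are pairwise disjoint in any valid placement (the pieces are interior-disjoint and contained in the container), the equality is tight: there is no slack left for these parallel edges to be tilted. Hence every such piece must have its parallel edges perpendicular to $\ell$, which is precisely the canonical rotation. Combined with the almost-canonical hypothesis, which already placed every piece within distance $n^{-1}$ of a canonical placement, this pins down the $y$-coordinate of each variable piece in the gadget to its canonical value, since the piece can only slide along the direction of its horizontal edges.

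Once rotations and $y$-coordinates are fixed, the remaining freedom is horizontal sliding, which is exactly what encodes the value of the associated variable. Variable pieces that pass through gadget $i$ from $\p_{i-1}$ already encode values in $[m-(i-1)\slack,m+(i-1)\slack]$ by hypothesis, and the contact structure of gadget $i$ propagates these encoded values to the newly introduced variable pieces in $\p_i\setminus\p_{i-1}$. The only way this propagation can lose precision is through empty space inside the gadget; since the total slack of the whole construction is $\slack$, any additional deviation contributed by a single gadget is bounded by $\slack$. The encoded values of all variable pieces therefore lie in $[m-i\slack,m+i\slack]$, establishing condition (ii).

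The main obstacle will be the per-gadget verification: the anchor, swap, adder, and the three inversion gadgets (teeter-totter, seesaw, gramophone) each require their own choice of alignment segments and their own bookkeeping of how the encoded values propagate along the chain of contacts. I expect the inversion gadgets to be the most delicate, since their contact geometry nonlinearly couples the two variables via $xy=1$, so tiny angular deviations can translate into disproportionately large horizontal shifts; isolating the canonical rotation via alignment segments \emph{before} bounding the deviation of encoded values is essential precisely to decouple these two sources of error.
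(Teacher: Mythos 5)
Your proposal matches the paper's approach: Lemma~\ref{lem:alignedPl} is proven gadget-by-gadget, first using alignment segments (whose length equals the summed widths of the parallel-edge pairs they cross) to pin the rotations and $y$-coordinates to their canonical values, and then bounding the horizontal displacement by a slack argument --- sliding a height-$10$ variable piece by $t$ creates empty area $10t$, so $t\leq\slack/10\leq\slack$. One small point of interest: your worry that the inversion gadgets would be the most delicate turns out to be unfounded --- for the teeter-totter and seesaw the lemma holds vacuously since those gadgets have no exiting variable pieces, and the gramophone reuses the swap argument; the nonlinear coupling you anticipate is instead confined to Lemma~\ref{lem:inversion}, not to the alignment step.
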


It now remains to prove \Cref{lem:consistency}.

\subsection{Proof structure of \texorpdfstring{\cref{lem:consistency}}{Lemma \ref{lem:consistency}}}
The proof of \Cref{lem:consistency} goes along the following lines.
In an aligned $\gadgets\slack$-placement, each variable piece $p_x$ encodes a value for the variable $x$ it is representing, which we will denote by $\enc{p_x}$.
The problem is that different pieces representing the same variable $x$ may conceivably not encode the value consistently.
However, recall that we build lanes of pieces on top of the two wires $\overrightarrow x,\overleftarrow x$, and these meet at the left and right endpoints of the wires.
We prove that the values encoded by these pieces $p_1,\ldots,p_m$ make a cycle of inequalities: $\enc{p_1}\leq\enc{p_2}\leq\cdots\leq\enc{p_m}\leq\enc{p_1}$.
It thus follows that all these pieces encode a value of $x$ consistently.
Furthermore, the anchors, which are the gadgets that we place at the left and right endpoints of the wires $\overrightarrow x,\overleftarrow x$, will ensure that $\enc{p_1}\in I(x)$, so that the encoded values are in the correct range.

In our construction, we also make additional lanes of pieces going to the adders and curvers.
The functionality of the specific gadgets imply that the addition and curved constraints of $\Phi$ are all satisfied.
In order to describe the structure of the argument, we introduce a graph~$G_x$ for each variable $x$ as described in the next paragraph.

\begin{figure}
\centering
\includegraphics[page = 7]{figures/Reduction-Overview.pdf}
\caption{An abstract drawing of the dependency graphs of the instance we get from the wiring diagram in \Cref{fig:WiringDiagram} and how the graphs connect to the gadgets for addition and curved constraints.
The number of vertices on the cycles and paths are neither important nor correct.
}
\label{fig:Graph}
\end{figure}

\subsubsection*{Dependency graph of variable pieces}
For each variable $x$, we introduce a directed \emph{dependency graph} $G_x$.
The vertices of $G_x$ are the variable pieces representing $x$.
Consider a gadget and two variable pieces $p_1,p_2$ appearing in the gadget and both representing $x$.
We add an edge from $p_1$ to $p_2$ in $G_x$ if $p_1$ is an entering right-oriented piece or an exiting left-oriented piece and $p_2$ is an exiting right-oriented piece or an entering left-oriented piece.
In crossings between the two wires $\overrightarrow x,\overleftarrow x$ representing $x$, there will be a swap where this rule introduces unintended edges, so we make one exception described in \Cref{sec:swap} where the swap is described in detail.

The following lemma is going to follow trivially from the way we make the lanes on top of the wires $\overrightarrow x,\overleftarrow x$ for each variable $x$, and the way we connect the gadgets representing addition and curved constraints to these lanes.
See \Cref{fig:Graph} for an illustration.

\begin{lemma}
\label{lem:graph}
For each variable $x$, the graph $G_x$ consists of a directed cycle $K_x$ with some directed paths attached to it (oriented towards or away from $K_x$).
The vertices of the cycle $K_x$ are the variable pieces appearing on the wire $\overrightarrow x$ from left to right and the wire $\overleftarrow x$ from right to left in this order.
For each path attached to $K_x$, the vertex farthest from $K_x$ is a piece entering or leaving a gadget representing an addition or curved constraint.
\end{lemma}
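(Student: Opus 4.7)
The plan is to verify the claim by a direct case analysis over the gadgets containing pieces representing $x$, applying the edge rule in the definition of $G_x$ in each case. Since the lemma is meant to be a structural consequence of the construction, the proof is essentially bookkeeping: describe the objects and check that the resulting edges assemble exactly into a cycle with attached paths.

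First I would fix notation. Let $q_1,\ldots,q_k$ be the right-oriented variable pieces placed along $\overrightarrow x$ in left-to-right order, and $r_1,\ldots,r_m$ the left-oriented pieces along $\overleftarrow x$ in left-to-right order. Recall that consecutive pieces on a lane share a unique gadget, in which the left piece is entering (its right end is defined there) and the right piece is exiting (its left end is defined there). Applying the edge rule (entering-right or exiting-left $\to$ exiting-right or entering-left) in the shared gadget of $q_i,q_{i+1}$ produces the edge $q_i\to q_{i+1}$, while in the shared gadget of $r_j,r_{j+1}$ it produces $r_{j+1}\to r_j$. In the left anchor, the only $x$-pieces are $q_1$ (exiting-right) and $r_1$ (exiting-left), yielding $r_1\to q_1$; symmetrically, in the right anchor we have $q_k$ (entering-right) and $r_m$ (entering-left), yielding $q_k\to r_m$. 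Concatenating these edges gives the directed cycle $q_1\to q_2\to\cdots\to q_k\to r_m\to r_{m-1}\to\cdots\to r_1\to q_1$, which is exactly $K_x$ as described in the lemma.

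Next I would treat the swap gadgets at wire crossings. If the crossing involves a wire of $x$ with a wire of a different variable, then the swap sees exactly one entering and one exiting $x$-piece on the $x$-wire, and the rule yields precisely the expected cycle edge between them. The only case requiring care is a self-crossing of $\overrightarrow x$ with $\overleftarrow x$, where the rule would in addition introduce two spurious edges connecting $x$-pieces on different wires; these are precisely the edges that the exception announced in Section~\ref{sec:swap} excludes, so only the two cycle edges remain. For the attached paths, every addition or inversion constraint involving $x$ creates an auxiliary lane of $x$-pieces branching off one of the wires and leading to the constraint gadget. Such a lane is again a sequence of pieces each introduced (exiting) in one gadget and terminated (entering) in the next, and the same entering/exiting analysis produces a directed path of $G_x$-edges attached to $K_x$ at the branch point. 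By construction the terminal piece of this path is precisely the one that enters the addition or inversion gadget, which is the final clause of the lemma.

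The only real obstacle is being meticulous about entering versus exiting in the atypical gadgets: the two anchors (where both meeting $x$-pieces play the same role), the self-swap (where the exception of Section~\ref{sec:swap} must be invoked to kill the unintended edges), and the branching points at the base of each constraint lane, where the analysis must confirm that the branch edges are oriented so that the attached paths hang off $K_x$ either towards or away from it without disturbing the cyclic orientation of $K_x$. Once these few cases are handled, the cycle-plus-paths structure of $G_x$ and the identification of the far endpoints of the paths with the constraint-entering pieces follow immediately from the construction.
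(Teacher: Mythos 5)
Your proposal is correct and carries out exactly the bookkeeping the paper leaves implicit: the paper gives no explicit proof of this lemma, but simply remarks that it ``is going to follow trivially from the way we make the lanes on top of the wires \ldots and the way we connect the gadgets representing addition and inversion inequalities to these lanes'' and points to Figure~\ref{fig:Align}. Your gadget-by-gadget application of the edge rule, the handling of the anchors that close the cycle, the invocation of the swap exception at self-crossings, and the observation that split-induced branches lead to constraint gadgets constitute precisely the argument being invoked.

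One small over-claim worth correcting: you write that \emph{every} addition or inversion constraint involving $x$ creates an auxiliary lane of $x$-pieces branching off a wire. This is not so in all cases. For an addition constraint $x+y\le z$ the $z$-lane passes \emph{through} the adder, so the adder's green pieces are consecutive vertices of $K_z$ rather than the end of an attached path (this is stated explicitly in the proof of Lemma~\ref{lem:addition}); similarly, in the gramophone the $x$- and $y$-lanes both run through the gadget, so its yellow and blue pieces lie on $K_x$ and $K_y$ (see the proof of Lemma~\ref{lem:inversion} for the gramophone). This does not damage your argument, because the lemma only asserts that each attached path, when one exists, terminates at a constraint gadget — it does not claim a one-to-one correspondence between constraints and attached paths — but you should state the implication in the direction you actually need: whenever $G_x$ branches, it is at a split, and the branched lane terminates at a piece entering a constraint gadget.
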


In the following, we consider a given aligned $\gadgets\slack$-placement of all the pieces.
Since all edges of $G_x$ are between pieces appearing in the same gadget, the following lemma will be proven for each gadget individually.
\begin{lemma}[Edge inequality]
\label{lem:graphIneq}
Consider a variable $x$ and an edge $(p_1,p_2)$ of $G_x$.
Then $\enc{p_1}\leq \enc{p_2}$.
\end{lemma}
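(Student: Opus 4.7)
The plan is to prove Lemma~\ref{lem:graphIneq} by case analysis on the type of gadget that contains the edge $(p_1,p_2)$. By the construction of $G_x$, every edge has both endpoints in a common gadget, so it suffices to treat each of the gadget types separately: anchor, swap, adder, teeter-totter, seesaw, and gramophone. These individual verifications will be performed in Sections~\ref{sec:anchor}--\ref{sec:All-inversion}, alongside the geometric descriptions of the gadgets themselves.

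In each case, the argument follows the same template. Because the placement is an aligned $\gadgets\slack$-placement, every variable piece is constrained to have its canonical rotation and its canonical $y$-coordinates, so its only freedom is a horizontal translation; moreover, $\enc{\cdot}$ is by definition a monotone function of that horizontal shift, increasing in the direction imposed by the piece's orientation. An edge $(p_1,p_2)\in G_x$ is, by the rule defining $G_x$, oriented from a piece whose inward horizontal shift would push mass into the interior of the gadget to a piece whose outward horizontal shift would pull mass out. The interior-disjointness of all pieces together with their containment in the polygonal or curved container $\cont$ then forces a chain of horizontal inequalities along the short transmission path through the gadget's internal non-variable pieces. Composing this chain gives the monotonicity statement $\enc{p_1}\leq \enc{p_2}$.

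Concretely, for each gadget I will identify a sequence of horizontal alignment segments joining the inward-facing end of $p_1$ to the outward-facing end of $p_2$ through the internal pieces. Each such segment can be covered only by the horizontal widths of the pieces that meet it, and these widths sum (in a canonical configuration) exactly to the segment's length. Since the total slack $\slack=O(n^{-296})$ is much smaller than any single displacement of interest and since the pieces are already known to be in an almost-canonical placement, any attempt to arrange $\enc{p_2}<\enc{p_1}$ would either create an overlap between two pieces or push a piece outside of $\cont$.

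The main obstacle lies in the swap and inversion gadgets. In the swap, the push is transmitted through a central piece with two-dimensional translational freedom, so one must argue that only its horizontal component is relevant for the chain of inequalities and that the one-edge exception mentioned in Section~\ref{sec:swap} is handled so the rule defining $G_x$ captures the genuine transmitted inequality. In the inversion gadgets, the transmission passes through an intermediate piece whose contact with its neighbors realizes the relation $xy=1$ along a hyperbolic arc; establishing monotonicity along a non-linear contact requires a short computation using the monotonicity of the hyperbola $y=1/x$ on the relevant interval, rather than the straight-segment sum used for the remaining gadgets. These geometry-specific verifications will be supplied together with the corresponding gadget definitions.
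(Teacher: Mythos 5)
Your overall plan — proving the edge inequality gadget by gadget, using the fact that in an aligned placement each variable piece has fixed rotation and $y$-coordinates so that only horizontal shifts matter, and then arguing that the internal transmission piece cannot be compressed between $p_1$ and $p_2$ — matches the paper's approach. However, there is one genuine misunderstanding in your treatment of the inversion gadgets.

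You write that ``in the inversion gadgets, the transmission passes through an intermediate piece whose contact with its neighbors realizes the relation $xy=1$ along a hyperbolic arc; establishing monotonicity along a non-linear contact requires a short computation using the monotonicity of the hyperbola $y=1/x$.'' This conflates two different lemmas. The teeter-totter and the seesaw contribute \emph{no} edges whatsoever to any dependency graph $G_x$: for $x\cdot y\geq 1$ both entering pieces are left-oriented and there are no exiting pieces, and for $x\cdot y\leq 1$ both entering pieces are right-oriented and again there are no exiting pieces, so the rule defining edges of $G_x$ never fires. The hyperbolic relation enforced by these gadgets is the content of Lemma~\ref{lem:inversion}, not of Lemma~\ref{lem:graphIneq}. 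The gramophone does contribute edges (left yellow to right yellow in $G_x$, left blue to right blue in $G_y$), but these transmit horizontally through the pink/orange pieces by ordinary straight edge contacts exactly as in the split and swap; the curved boundary touches only the top corner $c$ of the pink piece and plays no role in the edge-inequality argument. So the hyperbola never enters Lemma~\ref{lem:graphIneq} at all.

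A smaller point: your description of the mechanism (``alignment segments whose covered lengths sum to the segment length'') is closer to the tool the paper uses for the aligned-placement bound of Lemma~\ref{lem:alignedPl} than for Lemma~\ref{lem:graphIneq}. For the edge inequality, the paper instead argues directly that the intermediate piece's parallel edges act as a rigid spacer between $p_1$ and $p_2$ in the relevant horizontal direction, so $\enc{p_2}$ cannot drop below $\enc{p_1}$; for the anchor a short rotation argument handles the non-canonical case (Figure~\ref{fig:anchor-Pushing}). This is a slight mismatch of technique but not a gap in the logic, provided you check that the slack $\slack$ is small enough that no overlap is created; the real omission is the misidentified role of the hyperbola and the (nonexistent) edges in the teeter-totter and seesaw.
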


From \Cref{lem:graph} and~\Cref{lem:graphIneq}, we now get the following (except that the part about the anchor gadget will be proven in \Cref{sec:anchor}).

\begin{lemma}
\label{lem:consistentCycle}
For each variable $x$, all the pieces of the cycle $K_x$ encode the value of $x$ consistently.
Furthermore, due to the design of the anchor gadget, the value is in $I(x)$.
\end{lemma}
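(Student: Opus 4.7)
The plan is to combine the structural statement about $K_x$ from Lemma~\ref{lem:graph} with the local edge inequalities from Lemma~\ref{lem:graphIneq} in a purely graph-theoretic way; no further geometric reasoning is needed for this lemma. First I would fix an arbitrary variable $x$ and let $p_1, p_2, \ldots, p_m$ denote, in cyclic order, the variable pieces on the directed cycle $K_x$ guaranteed by Lemma~\ref{lem:graph}. By that lemma, these are precisely the pieces lying on the wire $\overrightarrow{x}$ read from left to right, followed by the pieces on $\overleftarrow{x}$ read from right to left, so every consecutive pair $(p_j, p_{j+1})$ (indices mod $m$) is an edge of $K_x$.

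Next I would invoke Lemma~\ref{lem:graphIneq} on each of these $m$ edges, yielding the chain
\[
\enc{p_1} \leq \enc{p_2} \leq \cdots \leq \enc{p_m} \leq \enc{p_1}.
\]
Since the first and last terms coincide, every inequality in the chain must in fact be an equality, so $\enc{p_1} = \enc{p_2} = \cdots = \enc{p_m}$. This establishes the consistency part of the statement: all pieces on $K_x$ encode the same value of $x$.

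For the range assertion, I would rely on the anchor gadget, whose analysis is postponed to Section~\ref{sec:anchor}. The anchors live at the common left endpoints and common right endpoints of $\overrightarrow{x}$ and $\overleftarrow{x}$ and hence involve pieces on the cycle $K_x$; their design will force at least one of the pieces $p_j\in K_x$ to satisfy $\enc{p_j}\in[1/2,2]$. Combined with the consistency established above, this gives the common value of $x$ encoded around $K_x$ in the required range.

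The main obstacle in this lemma is not the cycle argument itself, which collapses the chain of inequalities into equalities in one line, but rather ensuring that the two cited inputs are correctly in place: Lemma~\ref{lem:graph} must be verified to genuinely produce a single cycle for each variable (so no hidden second component contributes pieces whose values are left unconstrained), and Lemma~\ref{lem:graphIneq} must be established individually for every gadget type in which an edge of $G_x$ can occur (anchor, swap, adder, and each inversion gadget). Both of these will be handled in the sections devoted to the individual gadgets, and the proof of the present lemma is then essentially the two-step argument above.
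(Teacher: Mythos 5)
Your proposal is correct and takes essentially the same approach as the paper: the consistency claim is derived by applying Lemma~\ref{lem:graphIneq} around the cycle from Lemma~\ref{lem:graph} to collapse $\enc{p_1}\leq\cdots\leq\enc{p_m}\leq\enc{p_1}$ into equalities, and the range claim is deferred to the anchor analysis in Section~\ref{sec:anchor}. The only slight imprecision is your phrasing that the anchor forces ``at least one'' piece into $[1/2,2]$; in the paper the anchor's right-oriented piece enforces the lower bound $\geq 1/2$ and its left-oriented piece enforces the upper bound $\leq 2$, and it is only \emph{after} combining with consistency that a single piece (and hence all of $K_x$) is constrained to the full interval.
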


By the above lemma, we may write $\enc{K_x}$ to denote the value represented by all pieces of~$K_x$.

\subsubsection*{Adders and curvers work}
We will show in \Cref{sec:addition} and \Cref{sec:Curved} that the adders and curvers actually enforce addition 
and curved constraints as they are supposed to.
This entails showing that the gadgets implement the addition constrainst or various convexly or concavely curved constraints in a geometric sense and also that the variable pieces of the gadgets are correctly connected to the cycles in the respective dependency graphs.
In particular, we will show the following two lemmas.

\begin{lemma}[Adders work]
\label{lem:addition}
For each constraint $x+y= z$ in the formula $\Phi$, we have $\enc{K_x}+\enc{K_y}=\enc{K_z}$.
\end{lemma}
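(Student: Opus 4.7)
}
The plan is to establish the two inequalities $\enc{K_x}+\enc{K_y}\le \enc{K_z}$ and $\enc{K_x}+\enc{K_y}\ge \enc{K_z}$ separately and then combine them. Recall that $\Phi$ expresses each addition constraint $x+y=z$ as the conjunction of $x+y\le z$ and $x+y\ge z$, and our construction installs a dedicated adder gadget for each. It therefore suffices, for each direction, to analyse one adder and propagate its local output along the dependency graph.

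First I would prove a local geometric claim about a single adder. Fix the $\le$-adder and let $p_x^{\text{in}}, p_y^{\text{in}}, p_z^{\text{in}}$ denote its three entering variable pieces. Lemma~\ref{lem:validAligned} tells us that the given valid placement is an aligned $\gadgets\slack$-placement, and Lemmas~\ref{lem:AlmostCanonicalPlacement} and~\ref{lem:alignedPl} then force the non-variable pieces of the adder to sit within $n^{-1}$ of their canonical positions with the canonical rotations. From the geometry of the adder (Figure~\ref{fig:reduction-gadgets}(c)) I would then read off that the signed slide of $p_z^{\text{in}}$ from its middle placement is at least the sum of the signed slides of $p_x^{\text{in}}$ and $p_y^{\text{in}}$; equivalently,
\[
\enc{p_x^{\text{in}}}+\enc{p_y^{\text{in}}}\le\enc{p_z^{\text{in}}}.
\]
The $\ge$-adder, being a mirrored variant, yields the reverse local inequality.

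Next I would propagate these local inequalities to the cycles $K_x,K_y,K_z$. By Lemma~\ref{lem:graph}, each entering piece is the farthest vertex of a directed path in its dependency graph attached to the corresponding cycle, and the paths are oriented by the construction so that for the $\le$-adder they run $K_x\to p_x^{\text{in}}$, $K_y\to p_y^{\text{in}}$, and $p_z^{\text{in}}\to K_z$ (and oppositely for the $\ge$-adder). Applying Lemma~\ref{lem:graphIneq} edge by edge along each path and using Lemma~\ref{lem:consistentCycle} to identify every cycle vertex with its common encoded value, I obtain
\[
\enc{K_x}\le\enc{p_x^{\text{in}}},\quad \enc{K_y}\le\enc{p_y^{\text{in}}},\quad \enc{p_z^{\text{in}}}\le\enc{K_z}.
\]
Chaining these with the local adder inequality gives $\enc{K_x}+\enc{K_y}\le\enc{K_z}$, and the symmetric argument for the $\ge$-adder delivers the matching lower bound, completing the proof.

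The main obstacle is the local geometric claim about the adder. Concretely, one must check that the central pieces of the adder couple the slides of the three entering pieces in exactly the asserted additive way, ruling out any small perturbation that could mimic a violation by exploiting the tolerance provided by the aligned placement. Since $\slack=O(n^{-296})$ while every slide lies in an interval of length $2\delta=2n^{-300}$, the available area slack is orders of magnitude smaller than any genuine discrepancy in the additive relation; nevertheless, the tolerance of order $\gadgets\slack$ has to be accounted for carefully and combined with the fingerprinting and alignment bounds, and this is precisely the work done in Section~\ref{sec:addition}.
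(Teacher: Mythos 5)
Your proposal follows essentially the same route as the paper: prove $\enc{K_x}+\enc{K_y}\leq\enc{K_z}$ from the $\le$-adder by combining the local geometric inequality at the gadget with the edge inequalities (Lemma~\ref{lem:graphIneq}) along the attached paths, then argue symmetrically for the $\ge$-adder. The only minor inaccuracy is that in the construction the green pieces for $z$ lie directly on the cycle $K_z$ rather than at the far end of an attached path, so the paper gets $\enc{p_z^{\text{in}}}=\enc{K_z}$ via Lemma~\ref{lem:consistentCycle}; this is even stronger than the inequality you need, so the argument goes through unchanged.
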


\begin{lemma}[Curvers work]
\label{lem:inversion}
For each of the problems \pack\convexpolygon\polygon\rotation, \pack\convexpolygon\curved\translation, and  \pack\curved\polygon\translation, there exists well-behaved functions $f$ and $g$ that are convexly and concavely curved, respectively, such that for every constraint of the form $f(x,y)\geq 0$ in the \fgetr formula $\Phi$, we have $f(\enc{K_x}, \enc{K_y})\geq 0$, and for every constraint $g(x,y)\geq 0$, we have $g(\enc{K_x}, \enc{K_y})\geq 0$.
\end{lemma}

Combining \Cref{lem:consistentCycle,lem:addition,lem:inversion}, we then have a proof of \Cref{lem:consistency}.

\subsection{Square container}
In \Cref{sec:SquareContainer}, we describe a reduction from problems of type \pack{\piecetype}{\polygon}{\motiontype}  to \pack{\piecetype}{\square}{\motiontype}.
It will be crucial that the container $\cont$ is \emph{$4$-monotone}, as defined below.

\begin{restatable}{definition}{deffourmonotone}
\label{def:4monotone}
A simple closed curve $\gamma$ is \emph{4-monotone} if $\gamma$ can be partitioned into four parts $\gamma_1,\ldots,\gamma_4$ in counterclockwise order that move monotonically down, to the right, up, and to the left, respectively.
A polygon $\polQ$ is $4$-monotone if the boundary of $\polQ$ is a $4$-monotone curve.
\end{restatable}

\begin{lemma}
\label{lem:4monotone}
In the reductions resulting from using the gadgets described in \Cref{sec:gadgets,sec:Curved}, the resulting container is $4$-monotone.
\end{lemma}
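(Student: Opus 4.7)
The plan is to unpack the claim directly from the description of the container given earlier in Section~\ref{sec:overview}. There it is already asserted that $\cont$ is ``bounded from below by a line segment, from left and right by $y$-monotone chains, and from above by an $x$-monotone chain.'' Orienting the boundary counterclockwise starting at the top-left corner, the four parts in the order required by Definition~\ref{def:4monotone} are: the left $y$-monotone chain (traversed downward), the bottom horizontal segment (traversed to the right), the right $y$-monotone chain (traversed upward), and the top $x$-monotone chain (traversed to the left). Hence proving Lemma~\ref{lem:4monotone} reduces to verifying that the construction really does produce a boundary of this shape for each of the packing problems listed in Theorem~\ref{thm:packPolygon}.

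First I would fix the coordinate convention: the wiring diagram sits above the horizontal line that will become the bottom edge of $\cont$, and the sweep runs from left to right, so the left and right ends of the container correspond to the first and last anchor gadgets, while the top of the container is assembled incrementally as we sweep. The bottom segment is present by construction and is trivially $x$-monotone. It therefore remains to check (i) that the left and right portions of $\cont$, which are produced entirely by the anchor gadgets, are $y$-monotone, and (ii) that every boundary edge added to the top of the container by any gadget (anchor, swap, adder, and the inversion gadgets teeter-totter, seesaw, and gramophone) extends the current top chain in an $x$-monotone way.

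I would then perform the verification gadget by gadget, using the explicit pictures in Section~\ref{sec:gadgets} and Section~\ref{sec:All-inversion}. For each gadget I would identify the edges it contributes to the container boundary, determine which of the four monotone parts each such edge belongs to, and check the corresponding monotonicity. For the anchors at the left and right ends this amounts to confirming that the outgoing boundary is $y$-monotone; for all other gadgets it amounts to confirming that every newly added top edge lies above the previous top chain and, projected to the $x$-axis, continues to the right of it without backtracking. Because each gadget sits in a vertical strip disjoint from the strips of the other gadgets along the sweep, $x$-monotonicity of the whole top chain then follows by concatenation.

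The main obstacle is the case-by-case bookkeeping for the inversion gadgets and the swap, since these are the most geometrically involved: several curved or slanted edges appear, and one has to check that none of them doubles back in the vertical direction. Once that check is completed for each variant used in the four reductions mentioned in Theorem~\ref{thm:packPolygon}, the conclusion of Lemma~\ref{lem:4monotone} follows. I would remark along the way that the paper has already flagged that certain gadget designs were chosen precisely to preserve this property (see the footnote attached to the statement of the square-container reduction), which is exactly the invariant being verified here.
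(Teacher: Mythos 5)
Your proposal matches the paper's proof essentially exactly: both decompose the boundary into the two $y$-monotone staircases produced by the left and right anchors, the single horizontal bottom segment, and the $x$-monotone top chain assembled by the remaining gadgets, and both reduce the claim to a gadget-by-gadget inspection of the edges each gadget contributes to the top chain. The paper's version is simply terser, asserting the $x$-monotonicity of the top chain "can easily be verified by inspecting the boundary added due to each type of constraint gadget," whereas you make the same verification plan explicit.
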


\begin{proof}
The boundary of the resulting container has a left and a right staircase, $\gamma_1$ and $\gamma_3$, created by the left and right anchors, respectively, and these staircases are $y$-monotone, and their upper and lower endpoints are horizontally aligned.
The lower endpoints of the staircases $\gamma_1$ and $\gamma_3$ are connected by a single horizontal line segment $\gamma_2$ bounding the bottom lane from below.
The upper endpoints of the staircases are connected by a curve $\gamma_4$ which bound the topmost lane and the adders and curvers from above.
The curve $\gamma_4$ is $x$-monotone, as can easily be verified by inspecting the boundary added due to the adders and curvers.
Hence, the container is $4$-monotone.
\end{proof}

\begin{figure}
\centering
\includegraphics[page = 6]{figures/Reduction-Overview.pdf}
\caption{Construction used in the reduction to packing problems with a square container.
The space left by the exterior pieces (blue, green, orange, and turquoise) is exactly the $4$-monotone container $\cont$ of the instance we are reducing from.}
\label{fig:squareSkeleton}
\end{figure}

We get from the lemma that the packing problems are even $\ER$-hard for $4$-monotone containers.
Let $\I_1$ be an instance of a packing problem where the container $\cont\mydef\cont(\I_1)$ is 4-monotone.
We place $\cont$ in the middle of a larger square and fill out the area around $\cont$ with pieces carefully; the details are given in \Cref{sec:SquareContainer} and \Cref{fig:squareSkeleton} shows an example of the construction.
We call these new pieces the \emph{exterior} pieces, whereas we call the pieces of $\I_1$ the \emph{inner} pieces.
Using fingerprinting and other arguments, we are able to prove that there is essentially only one way to fit the exterior pieces in the square, and the space left for the inner pieces is exactly the container $\cont$.
Therefore, there exists a valid placement of the pieces in the resulting instance if and only if there is one of the inner pieces in $\cont$.
We get the results in the third row of \Cref{fig:Results} as expressed by the following corollary.

\begin{restatable}{corollary}{thmSquarePack}
\label{thm:squarePack}
The problems \pack \convexpolygon \square \rotation\ and \pack \curved \square \translation\ are \ER-hard.
\end{restatable}

\section{Fingerprinting}
\label{sec:FingerPrinting}

In this section, we develop a technique to argue that pieces are roughly
at the position where we intend them to be. 
The high level idea is based on a few properties.
First, the slack $\slack$, i.e., the difference between the area of the container
and the total area of the pieces, is very small.
Second, every piece $p$ has a specific corner $v$ with a unique angle that
fits precisely at one position.
If a piece is placed at a different location than the intended one, the empty space would exceed $\slack$.
In order to make such arguments, we first need to carefully define a few concepts.

\subsubsection*{Motion and Placement}
We encode a rotation by a \emph{rotation matrix}, which is a matrix $M$ of 
the form 
\[M = \begin{pmatrix}
a & -b \\
b & a
\end{pmatrix},\]
with $\det M = 1$.
From a translation $t\in\R^2$ and a rotation $M$, we get a \emph{motion} $m \mydef (t,M)$.
If only translations are allowed, we require that $M$ is the identity.

Given a piece $p$ and a motion 
$m = (M,t)$,
then we denote by $p^m$ the piece $p$ after moving $p$
according to $m$, i.e.,
\[\pl p m \mydef \{Mx + t : x\in p\}.\]
The set $\pl p m$ is the \emph{placement} of $p$ by $m$.
Given a tuple of $\mathbf{p} = (p_1,\ldots,p_k)$ of pieces
and a tuple $\m = (m_1,\ldots,m_k)$ of motions,
then we denote by 
\[\p^\m = (p_1^{m_1},\ldots, p_k^{m_k})\] the \emph{placement} 
of $\p$ by $\m$.
We may write $\pl {p_i}\m$ instead of $\pl {p_i}{m_i}$.

Given a container $\cont$, pieces $\p=(p_1,\ldots,p_k)$ and a motion $\m$,
we say that $\m$ (resp.~$\p^\m$) is a \emph{valid} motion (resp.~placement), if (i) $\pl {p_i}\m\subset C$ for all $i$, and (ii) $\pl {p_i}\m$ and $\pl {p_j}\m$ are interior-disjoint for all $i\neq j$.

\subsubsection*{Other geometric definitions}
Let $ab$ and $cd$ be two (oriented) line segments.
The \emph{angle between} $ab$ and $cd$ is the minimum angle that $ab$ can be turned such that $ab$ and $cd$ become parallel and point in the same direction, i.e., after turning, we should have $(b-a)^\bot \cdot (d-c)=0$ and $(b-a)\cdot (d-c)>0$.

Consider two motions $m_1$ and $m_2$ of a piece $p$.
The \emph{displacement} between $m_1$ and $m_2$ is
$
\sup_{x\in p}\|\pl x{m_1}-\pl x{m_2}\|.
$
The \emph{displacement angle} is the absolute difference in how much $m_1$ and $m_2$ rotate $p$ in the interval $[0,\pi]$.

Given a compact set $S$ in the plane, we denote by $\area(S)$ the area of $S$, and we define the \emph{diameter} of $S$ as $\diam(S)\mydef \max_{a,b\in S}\|ab\|$.

Given a container $\cont$ and pieces \p, we define the \emph{slack} as $\slack=\area(\cont) - \sum_{p\in\p}\area(p)$.

Let $A\subset\R^2$.
Then $A^c=\{x\in\R^2\mid x\notin A\}$.
Let $A,B\subset\R^2$.
Then we denote the Minkowski sum as $A\oplus B\mydef \{a+b\mid a\in A,b\in B\}$ and the Minkowski difference as $A\ominus B\mydef (A^c\oplus B)^c$.
For $\lambda>0$, define $\disk(\lambda)\mydef \{(x,y)\in\R^2\mid x^2+y^2\leq\lambda^2\}$.

\subsection{Fingerprinting a single piece}
\label{sec:fingerprintSingle}

Now let us go one level deeper into the details of the fingerprinting technique; see also \Cref{fig:finger-placingNew}.
We consider a case where we already know the position of some pieces (possibly with some uncertainty), and we consider the empty space $E$ where the remaining pieces must be placed.
Ideally, we could identify a corner $w$ of the empty space and a corner $v$ of a remaining piece $p$ which has exactly the same angle as $w$ and deduce that $p$ must be placed with $v$ at $w$.
Unfortunately, this is not the case, for two reasons.
First, we want to give most pieces some tiny but non-zero amount of wiggle room. 
This is important as pieces are meant to represent variables.
Second, we do not know the precise position of the other pieces as previous fingerprinting steps could only infer approximate and not exact positions of those pieces.
Thus, we will identify for each piece $p$ a triangle $T$, which has a corner $y$ with the same angle as $v$.
The edges adjacent to~$y$ will be very close to but not exactly on the boundary of the empty space $E$. 
The triangle $T$ will be our main protagonist in the forthcoming proofs and formal definitions. 
It may partially overlap existing pieces or have some distance to already placed pieces.
Another key player is the uncertainty value $\lambda\geq 0$, which is a measure of how much $T$ is off from the ideal.
We are now ready to go into the full details of the fingerprinting.

\begin{figure}
\centering
\includegraphics[page=2]{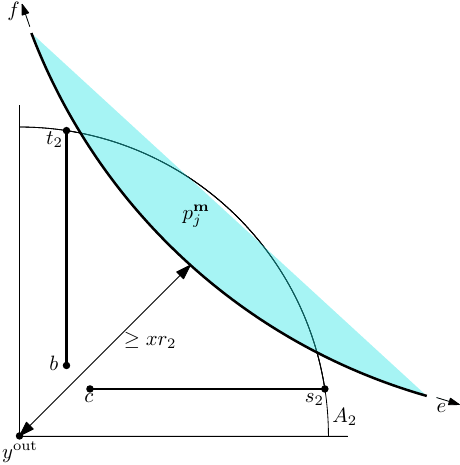}
\caption{We are considering the placement of the pieces $p_1,\ldots,p_{i-1}$ according to a valid motion $\m$.
The white area is the empty space $E$ available for the remaining pieces $p_i,\ldots,p_N$.
The radius of the gray circles centered at $x,y,z$ is the uncertainty value $\lambda$; the first circle must contain $a$, the second $b$ and $c$, and the third $d$, where $ab$ and $cd$ are segments on the boundary of $E$.}
\label{fig:finger-placingNew}
\end{figure}

\subsubsection*{Setup}
We are given a container $\cont$ and pieces $\p=(p_1,\ldots,p_N)$.
Each piece $p\in\p$ is a simple polygon with the following properties.
\begin{itemize}

\item
Each segment of $p$ has length at least $1$.

\item
The diameter of $p$ is at most some number $d_{\max}$. 

\item
The polygon $p$ is \emph{fat} in the following sense.
For any two points $v,w$ on different and non-neighbouring segments of $p$, we have $\|vw\|\geq \tau\mydef 1/100$.
\end{itemize}

In \Cref{sec:curvedPolygons}, we will show that the results developed in the following for polygonal pieces also hold when the pieces are allowed to be curved polygons (provided that the curvature is sufficiently small and the segments do not curve within distance $1$ from the corners).

\subsubsection*{The empty space $E$}
We consider an arbitrary valid motion $\m$ and analyze how we can infer something about the placement of the pieces $p_i,\ldots,p_N$ from the placement of the first $i-1$ pieces $p_1,\ldots,p_{i-1}$.
We can think of this situation as if we have already decided where to place the first $i-1$ pieces $p_1,\ldots,p_{i-1}$ in $\cont$ so that they are interior-disjoint, and we are now reasoning about where to place the next piece.

Let
\[
E\mydef \overline{\cont\setminus\bigcup_{j=1}^{i-1} \pl {p_j}\m}
\]
be (the closure of) the uncovered space available for the remaining pieces $p_i,\ldots,p_N$, see \Cref{fig:finger-placingNew}.
Then $E$ is a subset of $\cont$ bounded by a finite number of line segments, and each of these segments is contained in edges of the pieces $\pl {p_1}\m,\ldots,\pl {p_{i-1}}\m$ or $\cont$. 

\subsubsection*{Covering a wedge of $E$}
Assume that there is a special triangle $T\subset \cont$ with corners $x,y,z$ and with the following properties.
We have $\|xy\|=\|yz\|=1$.
Let $\lambda\geq 0$ be a (small) number that will be defined whenever we are going to apply the fingerprinting.
The value $\lambda$ can be thought of as a measure of uncertainty of the already placed pieces and the distance from the boundary of $T$ to the boundary of $E$.
Let $\inn{T}=\inn{x}\inn{y}\inn{z}$ denote the triangle $T\ominus\disk(\lambda)$ (where $\ominus$ is the Minkowski difference and $\disk(\lambda)$ the disk of radius $\lambda$, as defined in the beginning of this section), such that $\inn{x},\inn{y},\inn{z}$ are on the angular bisectors from $x,y,z$, respectively.

\begin{definition}\label{def:ubounding}
We say that $E$ is \emph{$\lambda$-bounding} $T$ at $y$ 
if the following two conditions hold:
\begin{itemize}
\item There are segments $ab$ and $cd$ on the boundary of $E$ such that each of the distances $\|ax\|$, $\|by\|$, $\|cy\|$, $\|dz\|$ is at most $\lambda$.

\item The interior of $\inn{T}$ is a subset of $E$.

\item The angle $\beta$ of $T$ at $y$ satisfies $\beta \in[\alpha_{\min},\alpha_{\max}]$, where $\alpha_{\min}\mydef 5\pi/180$ and $\alpha_{\max}\mydef \pi/2$.
\end{itemize}
\end{definition}

The second requirement means that no piece among $p_1,\ldots,p_{i-1}$ covers anything of $\inn T$ when placed according to $\m$.
The triangle $\inn{T}$ will have an area much larger than $\slack$, which implies that almost all of $\inn{T}$ must be covered by the pieces $\pl {p_i}\m,\ldots,\pl {p_N}\m$.

\subsubsection*{Unique angle property}
We assume that the pieces $p_i,\ldots,p_N$ have the following property, which we denote as the \emph{unique angle property} with respect to the angle $\beta$ of $y$ and a (small) number $\sigma>0$:

Consider any set $S\mydef \{v_1,\ldots,v_m\}$ such that it contains at most one corner from each piece $p_i,\ldots,p_N$.
If the sum of angles of corners in $S$ is in the interval $[\beta-\sigma,\beta+\sigma]$, then $S$ consists of only one corner $v$, i.e., $S=\{v\}$.

In most applications of the fingerprinting technique, there will be only one such set $S=\{v\}$.
In other words, the angle range $[\beta-\sigma,\beta+\sigma]$ uniquely identifies a specific piece and a specific corner $v$ of the piece.
However, in \Cref{sec:SquareContainer}, we are going to consider a special case where the container is a square where there will be more such sets.

We will argue that almost all of $\inn{T}$ must be covered by a piece with a corner $v$ with an angle in the range $[\beta-\sigma,\beta+\sigma]$, and the corner $v$ must be placed close to $y$.
Informally, since $\slack$ is much smaller than the area of $\inn T$, almost all of $\inn T$ must be covered by the pieces $p_i,\ldots,p_N$.
Because of the unique angle property, it is only possible to cover a sufficient amount of $\inn T$ by placing a piece with such a corner $v$ close to $y$ and with the adjacent edges close to parallel to $yx$ and $yz$, since the edges $ab$ and $cd$ of $\partial E$ are preventing $\inn T$ from being covered in another way.

\subsubsection*{Main lemma}
To sum up, we have made these assumptions:
\begin{itemize}
\item
We consider a valid placement $\m$ of the pieces $\p$.

\item
The empty space $E$ is $\lambda$-bounding the triangle $T$ at the corner $y$ of $T$.

\item
The pieces $p_i,\ldots,p_N$ have the unique angle property with respect to the angle $\beta$ of the corner $y$ and the small number $\sigma$.
\end{itemize}
In \Cref{sec:proof:lemma:boundDiff0}, we are going to prove the following lemma in the setting described above.

\begin{lemma}[Single fingerprint]
\label{lemma:boundDiff0}
There is a piece $p \in\{p_i,\ldots,p_N\}$ with a corner $v$ such that the angle of $v$ is in $[\beta-\sigma,\beta+\sigma]$ and 
$\|y\pl {v}\m\|= O\left(\lambda/\sigma+\sqrt{\slack/\sigma}\right)$.

Furthermore, let $u,w$ be the corners preceding and succeeding $v$ in counterclockwise direction, respectively.
Then the angle between $\pl {v}\m\pl {u}\m$ and $yx$ is
 $O\left(\lambda/\sigma+\sqrt{\slack/\sigma}\right)$, 
 as is the angle between $\pl {v}\m\pl {w}\m$ and $yz$.
\end{lemma}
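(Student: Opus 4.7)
The plan is to exploit the fact that the interior triangle $\inn T$ has area $\Omega(\sin\beta)\geq\Omega(\sin\alpha_{\min})$, a fixed positive constant, whereas $\slack$ is by assumption much smaller. Since $\inn T\subset E$ by the $\lambda$-bounding hypothesis, no piece $p_1,\dots,p_{i-1}$ intersects the interior of $\inn T$, so the pieces $p_i,\dots,p_N$ together cover $\inn T$ up to a missing area of at most $\slack$. In particular, any small neighbourhood of $y$ inside $\inn T$ must be nearly fully covered by them, which will force a specific piece to sit there.

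First I would zoom in on a disk $D_r(y)$ with $r$ to be chosen of order $\lambda/\sigma+\sqrt{\slack/\sigma}$, and in any event smaller than $\mu/3$. The fatness condition then guarantees that each placed piece intersects $D_r(y)$ in at most one local cell: either a corner together with its two adjacent edge-stubs, or a single half-plane-like strip bounded by a single edge. The region $D_r(y)\cap\inn T$ is a circular sector of angular width $\beta$; the boundary segments $ab,cd$ of $\partial E$ have endpoints within $\lambda$ of $x,y,z$, so their directions deviate from those of $yx,yz$ by at most $O(\lambda)$.

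Next I would carry out angle accounting at $y$. An angular deficit of size $\gamma$ inside the sector $D_r(y)\cap\inn T$ leaves uncovered area $\Omega(\gamma r^2)$, which must be at most $\slack$. Conversely, any set $S$ consisting of at most one corner from each piece $p_i,\dots,p_N$ whose placement lies inside $D_r(y)$ has its interior-angle sum $\beta'$ differing from $\beta$ by at most the angular deficit plus an $O(\lambda)$ boundary-tilt correction. Combining these and tuning $r$ so that $\sigma$ dominates $O(\lambda)+O(\sqrt{\slack}/r)$ yields $|\beta-\beta'|\leq\sigma$, and the unique angle property then forces $S=\{v\}$ to be a single corner $v$ of a piece $p\in\{p_i,\dots,p_N\}$ with interior angle in $[\beta-\sigma,\beta+\sigma]$, pinning down precisely the piece and corner claimed by the lemma.

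Finally I would bound $\|y\pl v\m\|$ and the angular misalignments of the adjacent edges. If $\pl v\m$ sits at distance $d$ from $y$, then by the unique angle property the wedge of angular width $\Omega(\sigma)$ extending to radius $\Theta(d)$ around $y$ admits no substitute covering by other corners and must remain uncovered, contributing area $\Omega(\sigma d^2)$ up to an $O(\lambda d)$ boundary correction. Bounding this by $\slack$ yields $d=O(\lambda/\sigma+\sqrt{\slack/\sigma})$. A parallel argument applied to the wedge between $\pl v\m\pl u\m$ and $yx$, and to the wedge between $\pl v\m\pl w\m$ and $yz$, controls the angular misalignment $\theta$: any such deviation either forces the edge to cross the neighbouring piece whose boundary contains $ab$ (violating validity of $\m$) or leaves an uncovered sliver with area linear in $\theta$ on a base of constant length, giving the same bound on $\theta$. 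I expect the main obstacle to be the angle accounting in step three: ruling out the nuisance scenario where several small corners of different pieces conspire to sum within $\sigma$ of $\beta$ is exactly what the unique angle property is engineered to prevent, but invoking it cleanly requires carefully converting uncovered area into angular deficit at $y$ while simultaneously absorbing the $\lambda$-level discrepancy between $\partial T$ and $\partial E$ through the right choice of $r$.
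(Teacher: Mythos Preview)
Your high-level plan is sound and matches the paper's intuition, but the angle-accounting step has a genuine gap that your single-scale setup cannot close.

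The problem is the sentence ``any set $S$ consisting of at most one corner from each piece whose placement lies inside $D_r(y)$ has its interior-angle sum $\beta'$ differing from $\beta$ by at most the angular deficit plus an $O(\lambda)$ boundary-tilt correction.'' When several corners sit at scattered points of $D_r(y)$ rather than all at $y$, there is no clean relationship between $\sum$(interior angles) and the area of the sector they jointly cover. A single wedge with apex at distance $r/2$ from $y$ and opening angle exactly $\beta$ may leave $\Omega(r^2)$ of the sector uncovered while having angle sum equal to $\beta$; conversely, two wedges with apices on opposite sides of $y$ can overlap heavily and have angle sum far exceeding $\beta$ while still covering essentially the whole sector. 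Your ``angular deficit'' is therefore not well defined in a way that simultaneously controls uncovered area from below and $|\beta-\beta'|$ from above.

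The paper resolves this with a \emph{two-scale} structure that your proposal is missing. It introduces radii $r_0=\Theta(\lambda)$, $r_1=\Theta(\lambda/\sigma+\sqrt{\slack/\sigma})$, and $r_2=\Theta(r_1/\sigma)$, and first argues (via an area bound and a case analysis using fatness and the obstacle segments $ab,cd$) that every piece covering the annular region $B$ between radii $r_0$ and $r_1$ must have a corner inside the radius-$r_1$ disk \emph{and} both incident edges must reach out past the radius-$r_2$ arc $A_2$. The crucial point is that angle coverage is then measured on $A_2$, not at $y$: since every apex lies within $r_1$ of the centre while $A_2$ has radius $r_2\gg r_1$, each wedge subtends an arc on $A_2$ whose angular measure equals its interior angle up to an $O(r_1/r_2)=O(\sigma)$ error. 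This is the content of the paper's ``geometric core lemma,'' and it is exactly what converts covered-area (equivalently, covered-arc-fraction) into an estimate $\sum\beta_i\in[\beta-\sigma,\beta+\sigma]$ so that the unique angle property can be invoked. Your single radius $r$ collapses these two scales and loses the $O(r_1/r_2)$ control that makes the angle accounting rigorous; introducing the second, larger radius is the missing idea.
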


\subsection{Fingerprinting more pieces at once}
\label{sec:fingerprintMorePieces}

In this section, we consider the iterated use of the fingerprinting technique (in particular \Cref{lemma:boundDiff0}) for some number of times.
This describes the situation whenever we have introduced the pieces of a new gadget to the construction.
More precisely, we consider the situation where we know how the pieces $p_1,\ldots,p_{i-1}$ must be placed, and we want to deduce how the following $k$ pieces $p_i,\ldots,p_{i+k-1}$, for some $k\geq 1$, must then be placed.
To this end, consider an arbitrary valid motion $\m$.
Consider a set $\s$ of \emph{intended} motions $s_i,\ldots,s_{i+k-1}$ of the pieces $p_i,\ldots,p_{i+k-1}$.
We are going to define what it means for the intended motions $\s$ to be \emph{sound}, and then we prove that if they are sound, then the valid motion $\m$ must place the pieces $p_i,\ldots,p_{i+k-1}$ in a way similar to the intended motions $\s$.

To define soundness of the intended motions, we first define the empty space $E^\s_j$, for $j\in\{i,\ldots,i+k-1\}$, as
\[
E^\s_j\mydef \overline{\cont\setminus\left(\bigcup_{l=1}^{i-1} \pl {p_l}{m_l}\cup \bigcup_{l=i}^{j-1} \pl {p_l}{s_l}\right)}.
\]
Thus, $E^\s_j$ is the free space where the piece $p_j$ can be placed if the pieces $p_1,\ldots,p_{i-1}$ are placed according to $\m$ while the pieces $p_i,\ldots,p_{j-1}$ are placed according to the intended motions $\s$.

\begin{definition}
\label{def:sound}
We say that the intended motion $s_j$, $j\in\{i,\ldots,i+k-1\}$, is $\lambda$-\emph{sound}, for a value $\lambda\geq 0$, if there exists a triangle $T_j=x_jy_jz_j$ and a corner $v_j$ of $p_j$ such that the following holds,
\begin{itemize}
\item the angle $\beta_j$ of $T_j$ at $y_j$ is in the range $[\alpha_{\min},\alpha_{\max}]$,
\item $\|x_jy_j\|= 1$ and $\|y_jz_j\|= 1$,
\item $E^\s_j$ is $\lambda$-bounding $T_j$ at $y_j$ 
(recall \Cref{def:ubounding}),
\item
if a set $S$ of at most one corner from each of the pieces $p_j,\ldots,p_N$ has a sum of angles in the range $[\beta_j-\sigma,\beta_j+\sigma]$, then $S=\{v_j\}$ (note that this is a stronger version of the unique angle property since the original definition just requires $S$ to be a singleton, while here $S$ must contain a specific corner $v_j$),
\item $T_j\subset \pl {p_j}{s_j}$, and
\item $\pl {v_j}{s_j}=y_j$ and $x_jy_j,y_jz_j\subset\partial \pl {p_j}{s_j}$.
\end{itemize}
We likewise define the placement $\pl {p_j}{s_j}$ to be $\lambda$-\emph{sound} if the motion $s_j$ is $\lambda$-sound.
\end{definition}

\begin{lemma}
\label{lem:fingerprinting-summary}
There exists an absolute constant $c>0$ such that the following holds.
Define
\begin{align*}
\Lambda_i & \mydef 0,\quad\text{and} \\ 
\Lambda_j & \mydef cd_{\max}/\sigma\cdot \Lambda_{j-1}+cd_{\max}(\lambda/\sigma+\sqrt{\slack/\sigma}),
\end{align*}
for $j>i$.
If the motions $s_i,\ldots,s_{i+k-1}$ are $\lambda$-sound, then for each $j\in\{i,\ldots,i+k-1\}$, the displacement between the motions $m_j$ and $s_j$ of the piece $p_j$ is at most $\Lambda_{j+1}$.
It holds that
\[
\Lambda_{k+1}\leq (k+1) (cd_{\max}/\sigma)^{k+1}(\lambda/\sigma+\sqrt{\slack/\sigma}),
\]
which is a bound on all the mentioned displacements.
\end{lemma}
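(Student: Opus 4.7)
The plan is to prove the displacement bound by induction on $j \in \{i, \ldots, i+k-1\}$, using the Single Fingerprint Lemma at each step. Define $E^\m_j \mydef \overline{\cont \setminus \bigcup_{l<j} \pl{p_l}{m_l}}$ as the empty space left under the actual valid motion $\m$ after placing the first $j-1$ pieces. For $l<i$ we have $s_l = m_l$ by definition, and for $l \in \{i,\ldots,j-1\}$ the inductive hypothesis says that the displacement between $m_l$ and $s_l$ is at most $\Lambda_{l+1} \leq \Lambda_j$ (since the $\Lambda_\cdot$ sequence is nondecreasing). Hence the placements $\pl{p_l}{m_l}$ and $\pl{p_l}{s_l}$ are within Hausdorff distance $\Lambda_j$, and so are their boundaries.

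The pivotal step is to show that $E^\m_j$ is $(\lambda + \Lambda_j)$-bounding $T_j$ at $y_j$ in the sense of Definition~\ref{def:ubounding}. For the first clause, take the witness segments $ab, cd \subset \partial E^\s_j$ coming from the $\lambda$-soundness of $s_j$; each lies on $\partial \pl{p_l}{s_l}$ for some $l<j$ (or on $\partial \cont$), and the corresponding segments on $\partial E^\m_j$ lie within Hausdorff distance $\Lambda_j$ of them, so by the triangle inequality their endpoints are within $\lambda + \Lambda_j$ of $x_j, y_j, z_j$. For the second clause, any point $p \in T_j \ominus \disk(\lambda + \Lambda_j)$ has distance at least $\Lambda_j$ from the complement of $\inn{T_j} = T_j \ominus \disk(\lambda)$; since $\inn{T_j} \subset E^\s_j$, the point $p$ has distance at least $\Lambda_j$ from every $\pl{p_l}{s_l}$, and because these pieces move by at most $\Lambda_j$ under $\m$, they still do not cover $p$, so $p \in E^\m_j$. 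Applying Lemma~\ref{lemma:boundDiff0} to $E^\m_j$ with parameter $\lambda + \Lambda_j$ then produces a piece in $\{p_j, \ldots, p_N\}$ with a corner whose angle lies in $[\beta_j - \sigma, \beta_j + \sigma]$ placed within $O((\lambda + \Lambda_j)/\sigma + \sqrt{\slack/\sigma})$ of $y_j$, with adjacent edges of the same angular proximity to $y_jx_j$ and $y_jz_j$. The strengthened unique-angle clause in Definition~\ref{def:sound} identifies this corner as $v_j$ of $p_j$. Since all of $p_j$ lies within $d_{\max}$ of $v_j$, the positional and angular bounds combine to give displacement $O(d_{\max}((\lambda+\Lambda_j)/\sigma + \sqrt{\slack/\sigma}))$ between $m_j$ and $s_j$; choosing the absolute constant $c$ to absorb the hidden constant yields exactly the bound $\Lambda_{j+1}$.

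For the closed form, set $\tau \mydef cd_{\max}/\sigma$ and $\eta \mydef cd_{\max}(\lambda/\sigma + \sqrt{\slack/\sigma})$, so that the recurrence unrolls to $\Lambda_{i+l} = \eta \sum_{r=0}^{l-1} \tau^r$. Assuming $\tau \geq 1$ (which holds in our applications since $\sigma$ is small compared to $d_{\max}$), this is at most $l\tau^{l-1}\eta$, and a direct estimate shows $l\tau^{l-1}\eta \leq (l+1)\tau^{l+1}(\lambda/\sigma + \sqrt{\slack/\sigma})$, matching the claim. The main technical obstacle is the careful two-sided transfer of the $\lambda$-bounding property from the intended empty space to the actual one: the same Hausdorff displacement $\Lambda_j$ simultaneously shifts the witness segments (loosening the endpoint distances) and potentially shaves off some of the guaranteed covered region (shrinking the inner triangle), and both effects must be absorbed cleanly into the enlarged bounding parameter before the single-fingerprint lemma and the unique-angle clause can pin down the correct piece.
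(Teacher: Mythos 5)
Your proof is correct and takes essentially the same approach as the paper's: induction on $j$, transferring the $\lambda$-bounding property from the intended empty space $E^\s_j$ to the actual one $E^\m_j$ with parameter $\lambda+\Lambda_j$, applying the Single Fingerprint Lemma, and unrolling the geometric recurrence. You are somewhat more careful than the paper in the transfer step, explicitly noting that \emph{all} of $p_i,\ldots,p_{j-1}$ (not just $p_{j-1}$) may have shifted and that the uniform bound $\Lambda_j$ works because the sequence is nondecreasing, which is a worthwhile clarification of a point the paper leaves terse.
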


\begin{proof}
We proceed by induction on $j$.
For $j=i$, we apply \Cref{lemma:boundDiff0}.
We get that
$\|y_i\pl {v_i}\m\|\leq O\left(\lambda/\sigma+\sqrt{\slack/\sigma}\right)$.
Furthermore, the second half of \Cref{lemma:boundDiff0} implies that the displacement angle between the motions $m_i$ and $s_i$ is likewise at most $O\left(\lambda/\sigma+\sqrt{\slack/\sigma}\right)$.
We therefore get that the displacement between $m_i$ and $s_i$ is
\[
cd_{\max}(\lambda/\sigma+\sqrt{\slack/\sigma})=\Lambda_{i+1}
\]
for some constant $c$.

Suppose now that the statement holds for indices $i,i+1,\ldots,j-1$.
Define
\[
E^\m_j\mydef \overline{\cont\setminus\bigcup_{l=1}^{j-1} \pl {p_l}{m_l}}.
\]
Since $E^\s_j$ is $\lambda$-bounding $T_j$ at $y_j$ and the displacement between $m_{j-1}$ and $s_{j-1}$ is at most $\Lambda_j$, we get that $E^\m_j$ is $(\Lambda_j+\lambda)$-bounding $T_j$ at $y_j$.
Therefore, \Cref{lemma:boundDiff0} gives that the displacement between $m_j$ and $s_j$ is at most
\[
cd_{\max}((\Lambda_j+\lambda)/\sigma+\sqrt{\slack/\sigma})=cd_{\max}/\sigma\cdot \Lambda_j+cd_{\max}(\lambda/\sigma+\sqrt{\slack/\sigma})=\Lambda_{j+1},
\]
for the constant $c$ introduced above.
Unfolding the expression, we get
\begin{align*}
\Lambda_{k+1} & =\sum_{j=0}^k (cd_{\max}/\sigma)^j\cdot cd_{\max}(\lambda/\sigma+\sqrt{\slack/\sigma}) \\
& \leq \sum_{j=0}^k (cd_{\max}/\sigma)^{j+1}\cdot (\lambda/\sigma+\sqrt{\slack/\sigma}) \\
& \leq (k+1) (cd_{\max}/\sigma)^{k+1}(\lambda/\sigma+\sqrt{\slack/\sigma}). 
\end{align*}
\end{proof}

The following lemma will be used to fingerprint the pieces in each gadget individually.

\begin{lemma}[Multiple fingerprints]
\label{lem:multipleFingerprints}
Consider a gadget together with its pieces $p_i,\ldots,p_{i+k-1}$, for $k\leq 7$, which are introduced in some step of the construction.
Suppose that there is a valid motion $\m$ of the complete construction.
Furthermore, suppose that there exists intended motions $s_i,\ldots,s_{i+k-1}$ which are $\gadgets\slack$-sound. 
Then the displacement between the intended motion $s_j$ and the actual motion $m_j$ is of the order $O(n^{-48})$.
\end{lemma}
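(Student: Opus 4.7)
The plan is to apply Lemma~\ref{lem:fingerprinting-summary} directly with the parameters of our construction. The hypothesis that $\m$ moves $p_1,\ldots,p_{i-1}$ into a canonical placement means that the earlier pieces sit exactly where intended, so only the $k$ new pieces in the current gadget contribute to the iterative error propagation driven by Lemma~\ref{lem:fingerprinting-summary}. The $\gadgets\slack$-soundness of the intended motions $s_i,\ldots,s_{i+k-1}$ supplies all the triangles $T_j$, the special corners $v_j$, and the unique-angle conditions that Definition~\ref{def:sound} requires in order to invoke the iterative machinery, with uncertainty parameter $\lambda = \gadgets\slack$.

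Next I would plug the concrete parameter values into the bound of Lemma~\ref{lem:fingerprinting-summary}. With $\gadgets = O(n^4)$ and $\slack = O(n^{-296})$, we get $\lambda = O(n^{-292})$. Each piece has diameter $d_{\max} = O(1)$ by construction, and our choice of free angles via Lemma~\ref{lem:UniqueAngles} with the parameter $O(N) = O(n^4)$ yields a unique-angle gap $\sigma = \Omega(N^{-2}) = \Omega(n^{-8})$. With $k \leq 7$, the factor $(cd_{\max}/\sigma)^{k+1}$ is at most $O(n^{64})$, while
\[
\lambda/\sigma + \sqrt{\slack/\sigma} = O(n^{-284}) + O(n^{-144}) = O(n^{-144}).
\]
Multiplying these gives $\Lambda_{k+1} = O(n^{-80})$, which is comfortably within $O(n^{-48})$, as claimed.

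The only genuine obstacle is that the exponent $k+1$ in $(cd_{\max}/\sigma)^{k+1}$ makes the bound extremely sensitive to how many pieces are fingerprinted in a single batch. This is precisely why the statement restricts to $k \leq 7$, matching the design constraint that every gadget introduces at most seven new pieces, and why $\delta = n^{-300}$ was chosen so generously. If $k$ were allowed to grow with $n$, the error would blow up super-polynomially and we would need to pick an even smaller slack, which would then force super-logarithmic precision in the coordinates of the container and pieces, destroying the \emph{strong} \ER-hardness that the framework aims to establish.
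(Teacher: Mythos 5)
Your approach is the same as the paper's: plug the construction's parameters into Lemma~\ref{lem:fingerprinting-summary} with uncertainty $\lambda = g\slack$. However, there is a factual error in one parameter. You claim $d_{\max}=O(1)$, but in this construction $d_{\max}=O(n^4)$: variable pieces stretch from the gadget where they are introduced to the gadget where they terminate, and these can be separated by a horizontal distance comparable to the width of the wiring diagram, which is $O(N)=O(n^4)$. The role of $d_{\max}$ in Lemma~\ref{lem:fingerprinting-summary} is precisely to convert a small rotation error into a displacement at the far end of a long piece, so this factor genuinely matters. With the correct $d_{\max}=O(n^4)$, the factor $(cd_{\max}/\sigma)^{k+1}$ becomes $O(n^{96})$ rather than your $O(n^{64})$, and the paper's computation
\[
8\,(cd_{\max}/\sigma)^{8}\bigl(\lambda/\sigma+\sqrt{\slack/\sigma}\bigr)
= O\bigl(n^{96}\cdot n^{-144}\bigr) = O(n^{-48})
\]
matches the claimed bound exactly, with no room to spare. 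Your numerical conclusion $O(n^{-80})$ happens to still be $O(n^{-48})$, but only because the erroneous $d_{\max}$ made the bound tighter than it actually is; the reasoning needs the corrected diameter. Your discussion of why $k$ must be bounded and why $\delta=n^{-300}$ was chosen is consistent with the paper's framing.
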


\begin{proof}
In our construction, we have $d_{\max}=O(n^4)$, $\slack=O(n^{-296})$, and $g=O(n^4)$.
We use the method described in the proof of \Cref{lem:UniqueAngles} to choose unique angles.
As our reduction results in a packing instance of $N=O(g)=O(n^4)$ pieces, we get the unique angle condition satisfied for a value of $\sigma$ of the order $\Omega(N^{-2})=\Omega(n^{-8})$.
We now get from \Cref{lem:fingerprinting-summary} with $\lambda=g\slack=O(n^{-292})$ that the displacement is at most
\begin{align*}
8 (cd_{\max}/\sigma)^8(\lambda/\sigma+\sqrt{\slack/\sigma})
=\,& O((n^4n^8)^8(n^{-292}n^8+\sqrt{n^{-296}n^8})) \\
=\,& O(n^{96}n^{-144})=O(n^{-48}). 
\end{align*}
\end{proof}

\subsection{Proof of Single fingerprint \texorpdfstring{(\Cref{lemma:boundDiff0})}{(Lemma \ref{lemma:boundDiff0})}}\label{sec:proof:lemma:boundDiff0}

\begin{figure}
\centering
\includegraphics[page=3,width=\textwidth]{figures/fingerprinting2.pdf}
\caption{Setup in the proof of \Cref{lemma:boundDiff0}.
The circles centered at $x,y,z$ bound the disks of radius $\lambda$ in which the points $a,b,c,d$ are known to be.
The figure is not to scale.
In practice, $\lambda$ is very small so that the triangles $\out{T},T,\inn{T}$ are almost equally large.
Furthermore, the radius of $A_2$ is much larger than the radius of $A_1$, which in turn is much larger than the radius of $A_0$.
The bottom figure shows the regions $D_1$ and $D_2$, that together make up~$D$.
The segments $ab$ and $cd$ are drawn with thick lines to indicate that these act as a restriction to where the next piece can be placed.}
\label{fig:actualPlacementT}
\end{figure}

\subsubsection*{Proof setup}
We prove auxiliary 
\Cref{lemma:zeta,lemma:wedges,lemma:stickingOut,lemma:areaBounds,lemma:covering1,lemma:covering2}
and then show \Cref{lemma:boundDiff0} (restated as \Cref{lemma:boundDiff01}).
See \Cref{fig:actualPlacementT}.
Let $\zeta(\theta)\mydef \frac{1}{\sin (\theta/2)}$.
The function $\zeta$ is important when computing the distances between corresponding corners of offset versions of the same triangle, as the following lemma makes clear.

\begin{lemma}\label{lemma:zeta}
(1) 
Consider a triangle $U=efg$ and define for some $s>0$ the triangle $\inn{U}\mydef U\ominus \disk(s)=\inn{e}\inn{f}\inn{g}$, so that $\inn{e},\inn{f},\inn{g}$ are on the angular bisectors of $e,f,g$, respectively.
Then $\|e\inn{e}\| = s \zeta(\theta)$, where $\theta$ is the angle of $U$ at $e$.

(2)
Let $e,f,g$ be points such that the distances from a point $\inn{e}$ to each of the segments $ef$ and $eg$ is at most $s$.
Then $\|e\inn{e}\|\leq s\zeta(\theta)$, where $\theta\in[0,\pi)$ is the angle between $ef$ and $eg$.
\end{lemma}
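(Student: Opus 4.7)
The plan is to establish part~(1) by direct right-triangle trigonometry inside the shrunk triangle $\inn{U}$, and to reduce part~(2) to the same trigonometric estimate via a short case analysis on where the closest points of the two segments to $\inn{e}$ lie.

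For part~(1), the Minkowski difference $\inn{U}=U\ominus\disk(s)$ places $\inn{e}$ on the angular bisector at $e$, and by construction the perpendicular distance from $\inn{e}$ to each of the two edges of $U$ incident to $e$ is exactly $s$. Dropping such a perpendicular onto $ef$ yields a right triangle with angle $\theta/2$ at $e$, opposite leg of length $s$, and hypotenuse $\|e\inn{e}\|$. Hence $\sin(\theta/2)=s/\|e\inn{e}\|$, which rearranges to $\|e\inn{e}\|=s\zeta(\theta)$.

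For part~(2), let $p$ and $q$ be the points of $ef$ and $eg$ closest to $\inn{e}$, so $\|\inn{e}p\|,\|\inn{e}q\|\leq s$. If $p=e$ or $q=e$, then $\|e\inn{e}\|\leq s$, and since $\zeta(\theta)\geq 1$ for $\theta\in[0,\pi)$, we are done. Otherwise I introduce the unsigned angles $\alpha_1\mydef\angle(ef,e\inn{e})$ and $\alpha_2\mydef\angle(eg,e\inn{e})$. Because $p\neq e$, the foot of the perpendicular from $\inn{e}$ onto the line through $ef$ must lie on the ray from $e$ through $f$ (otherwise the closest segment point would be $e$), which forces $\alpha_1<\pi/2$; symmetrically $\alpha_2<\pi/2$. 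A direct check, splitting according to whether $\inn{e}$ lies inside or outside the wedge of size $\theta$ between the two rays, gives $\alpha_1+\alpha_2\geq\theta$, so after possibly swapping the roles of $ef$ and $eg$ I may assume $\alpha_1\geq\theta/2$.

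To conclude, the perpendicular distance from $\inn{e}$ to the line through $ef$ equals $\|e\inn{e}\|\sin\alpha_1$ and is at most $\|\inn{e}p\|\leq s$. Since $\theta/2\leq\alpha_1<\pi/2$ and $\sin$ is increasing on $[0,\pi/2]$, this gives $\sin(\theta/2)\leq\sin\alpha_1\leq s/\|e\inn{e}\|$, i.e., $\|e\inn{e}\|\leq s/\sin(\theta/2)=s\zeta(\theta)$. The only delicate step—and essentially the main obstacle to a slicker proof—is that one genuinely needs the segment distances rather than distances to the full lines through $ef$ and $eg$: a naive line-distance argument only yields the strictly weaker bound $2s/\sin\theta$, which exceeds $s\zeta(\theta)$ once $\theta>2\pi/3$. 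The segment hypothesis is precisely what pins the perpendicular foot onto the correct side of $e$, gives $\alpha_i<\pi/2$, and ultimately makes the sharp bound $s\zeta(\theta)$ available.
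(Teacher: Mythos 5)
Your proof of part~(1) coincides with the paper's: drop the perpendicular from $\inn{e}$ to $ef$ and read off $\sin(\theta/2)=s/\|e\inn{e}\|$ from the resulting right triangle. For part~(2), the paper offers only a one-line extremality claim --- that $\|e\inn{e}\|$ is maximized when both segment-distances equal $s$, which is said to force $\inn{e}$ onto the angular bisector, reducing to part~(1) --- without justifying why that is the extremal configuration, nor why equal distances would pin $\inn{e}$ to the bisector (the closest points on the segments may be endpoints rather than perpendicular feet). Your argument is a genuine, more rigorous alternative: the case split on whether a closest point is $e$ itself, the observation that $p\neq e$ forces the perpendicular foot onto the ray from $e$ through $f$ so that $\alpha_1<\pi/2$, and the angle-sum bound $\alpha_1+\alpha_2\geq\theta$ giving $\alpha_i\geq\theta/2$ for some $i$, together yield the bound cleanly from $\|e\inn{e}\|\sin\alpha_1\leq s$. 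This is correct and fills in exactly the gap in the paper's sketch. One small correction to your closing remark: what a line-distance argument yields is $s/\cos(\theta/2)$, which exceeds $s\zeta(\theta)$ once $\theta>\pi/2$ (not $2\pi/3$), and $2s/\sin\theta$ is not the natural line-distance bound; since in the paper's applications $\beta\leq\pi/2$ always, line distances would actually suffice there, though your proof correctly covers the full range $\theta\in[0,\pi)$ of the lemma as stated.
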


\begin{proof}
Proof of (1): Let $p$ be the projection of $\inn{e}$ on $ef$.
Then $\|e\inn e\|=\|p\inn e\|/\sin(\theta/2)=s\zeta(\theta)$.

Proof of (2): For a given angle $\theta$, the distance $\|e\inn{e}\|$ is maximum if the distances from $\inn{e}$ to each segment $ef$ and $eg$ are both $s$, so that, in particular, $\inn{e}$ is on the angular bisector between the segments $ef$ and $eg$.
We then proceed as in the proof for (1).
\end{proof}

\Cref{lemma:zeta} gives that $\|y\inn{y}\|=\lambda \zeta(\beta)$.
Let $\out{T}=\out{x}\out{y}\out{z}$ be the triangle we get by offsetting the edges of $T$ outwards in a parallel fashion by distance $\lambda\cdot \zeta(\alpha_{\min})$, i.e., $\out{T}$ is the triangle such that $\out{T}\ominus \disk(\lambda\cdot \zeta(\alpha_{\min}))=T$.
The corners $\out{x},\out{y},\out{z}$ are on the angular bisectors of $x,y,z$, respectively.

Define
$\psi  \mydef \zeta(\alpha_{\min})+\zeta(\alpha_{\min})^2=O(1)$.
By \Cref{lemma:zeta}, we have $\|y\out{y}\|=\lambda \zeta(\alpha_{\min})\zeta(\beta) \leq \lambda \zeta(\alpha_{\min})^2$, and $\|\out{y}\inn{y}\|=\lambda(\zeta(\beta)+ \zeta(\alpha_{\min})\zeta(\beta))\leq \lambda\psi$.

\subsubsection*{Subdividing $\out{T}$ by arcs}
For some small constants $c_1,c_2>0$, we define three radii as
\begin{align*}
r_0 & \mydef \lambda\psi, \\
r_1 & \mydef c_1\left(\lambda/\sigma+\sqrt{\slack/\sigma}\right), \\
r_2 & \mydef c_2\left(r_1/\sigma\right).
\end{align*}
We require that $r_2$ is much smaller than $1$, say $r_2< 1/10$ (as it turns out, by choosing $\delta$ small enough, we can make $\slack$ and $\lambda$ so small that $r_2$ is below any desired constant).
For the ease of presentation, we will not explicitly specify the constants $c_1,c_2$, but it will follow from the analysis that constants exist that will make the arguments work.
Note that in our application of \Cref{lem:multipleFingerprints}, we will have $\sigma=\Theta(n^{-8})$.
Furthermore, one should think of $r_1$ as much larger than $r_0$ and $r_2$ as much larger than $r_1$.

Let $A_i$ be the arc with center $\out{y}$ and radius $r_i$ from the point $\out{s_i}$ on segment $\out{y}\out{z}$ counterclockwise to the point $\out{t_i}$ on segment $\out{x}\out{y}$.
Let $D$ be the region bounded by segments $\out{t_2}\out{y}$ and $\out{y}\out{s_2}$ and the arc $A_2$.
The arc $A_1$ separates $D$ into two regions $D_1$ and $D_2$, where~$A_2$ appears on the boundary of $D_2$.

\begin{figure}
\centering
\includegraphics[page=7]{figures/fingerprinting2.pdf}
\caption{The setup of \Cref{lemma:wedges}.
}
\label{fig:lemma}
\end{figure}

\subsubsection*{Geometric core lemma}
The following lemma is the geometric core of our argument and the setup is shown in \Cref{fig:lemma}.
We use this lemma to conclude that if a set $Q$ of pieces cover most of $D_1$, then they have corners whose angles sum to a number close to $\beta$.
It then follows from the unique angle property that $Q$ consists of just one piece.

\begin{lemma}\label{lemma:wedges}
Let $\{W_i=x_iy_iz_i\mid i=1,\ldots,m\}$ be a collection of triangles where each $y_i$ is a corner in $D_1$ and the edge $x_iz_i$ is disjoint from $D$.
Let $\beta_i$ be the angle of $W_i$ at $y_i$ and suppose that $\beta_i\in [\alpha_{\min},\alpha_{\max}]$.
Suppose that $W_1,\ldots,W_m$ are pairwise interior disjoint and that the interior of each $W_i$ is disjoint from $\out{t_2}\out{y}$ and $\out{y}\out{s_2}$.
Let $\mathcal W\mydef\bigcup_{i=1}^m W_i$, and let $\rho\in[0,1]$ be the fraction of $A_2$ covered by $\mathcal W$.
Then
\begin{align}
& \frac{\area(D_1\cap \mathcal W)}{\area(D_1)}\leq \rho+O(r_1/r_2),\text{ and}\label{eq:arearho} \\
& \sum_{i=1}^m \beta_i\in [\beta\rho-O(r_1/r_2),\beta\rho+O(r_1/r_2)]. \label{eq:anglesum}
\end{align}
\end{lemma}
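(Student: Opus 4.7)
Both conclusions follow from a single small-angle estimate combined with a convexity argument. For any point $w$ on $A_2$ and any point $q$ with $\|\out{y}q\|\leq r_1$, the directions $\overrightarrow{\out{y}w}$ and $\overrightarrow{qw}$ differ by at most $\arcsin(r_1/(r_2-r_1))=O(r_1/r_2)$, as follows from the law of sines in the triangle $\out{y}qw$ together with $r_1\ll r_2$ (which holds since $r_2=\Omega(r_1/\sigma)$). I first establish~\eqref{eq:anglesum}. Since $y_i\in D_1$ has distance $\leq r_1<r_2$ from $\out{y}$ while $x_iz_i$ is disjoint from $D$, the boundary of $W_i$ must cross $\partial D$; the hypothesis that $\mathrm{int}(W_i)$ is disjoint from the wall segments $\out{t_2}\out{y}$ and $\out{y}\out{s_2}$, together with $\beta_i\leq\alpha_{\max}<\pi$, forces both edges $y_ix_i$ and $y_iz_i$ to exit $D$ through $A_2$, at points $w_i^-,w_i^+$ respectively. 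Applying the small-angle estimate with $q=y_i$, the arguments $\theta_i^\pm$ of $w_i^\pm$ as seen from $\out{y}$ agree with the directions of the edges at $y_i$ up to $O(r_1/r_2)$; and since the two edges meet at angle $\beta_i$ at $y_i$, we obtain $\theta_i^+-\theta_i^-=\beta_i+O(r_1/r_2)$. By convexity of $W_i$, the sub-arc of $A_2$ between $w_i^-$ and $w_i^+$ lies in $W_i$, and interior-disjointness of the $W_i$ forces the intervals $[\theta_i^-,\theta_i^+]$ to be pairwise disjoint. Summing gives $\beta\rho=\sum_i(\theta_i^+-\theta_i^-)=\sum_i\beta_i+m\cdot O(r_1/r_2)$. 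Since $\beta_i\geq\alpha_{\min}$ and the intervals fit into $[0,\beta]$, we have $m=O(1)$, which establishes~\eqref{eq:anglesum}.

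For~\eqref{eq:arearho} I switch to polar coordinates centered at $\out{y}$. The sector $D_1$ has area $\tfrac12\beta r_1^2$, and for each $i$, convexity of $W_i$ together with $y_i\in D_1$ implies that the angular extent of $D_1\cap W_i$ at $\out{y}$ is contained in $[\theta_i^-,\theta_i^+]$. Therefore
\[
\area(D_1\cap W_i)\leq\int_{\theta_i^-}^{\theta_i^+}\!\!\int_0^{r_1}r\,dr\,d\theta=\tfrac12(\theta_i^+-\theta_i^-)r_1^2,
\]
and summing gives $\area(D_1\cap\mathcal W)\leq\tfrac12 r_1^2\sum_i(\theta_i^+-\theta_i^-)=\tfrac12 r_1^2(\beta\rho+O(r_1/r_2))$. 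Dividing by $\area(D_1)=\tfrac12\beta r_1^2$ yields~\eqref{eq:arearho}.

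The main obstacle is the case analysis verifying that $y_ix_i$ and $y_iz_i$ really cross $A_2$, ruling out that a triangle ``wraps around'' one of the wall tips $\out{t_2},\out{s_2}$ or avoids $A_2$ entirely. This requires combining all of the structural hypotheses in the statement (wall-disjointness of the interior, apex in $D_1$, opposite edge outside $D$, and $\beta_i<\pi$); once this topology is pinned down, the small-angle approximation and the polar-coordinate bookkeeping are routine.
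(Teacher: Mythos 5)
Your argument for~\eqref{eq:anglesum} is sound; it repackages the paper's with the small-angle estimate applied at the apex $y_i$, and the identity $\beta\rho=\sum_i(\theta_i^+-\theta_i^-)$ together with $m=O(1)$ are correct (the later $O(r_1/r_2)$ you insert into that identity is spurious but harmless). However, your proof of~\eqref{eq:arearho} contains a genuine gap. The step ``convexity of $W_i$ together with $y_i\in D_1$ implies that the angular extent of $D_1\cap W_i$ at $\out{y}$ is contained in $[\theta_i^-,\theta_i^+]$'' is false. Concretely, take $\out{y}$ at the origin, $\beta=\pi/2$, $y_i=(0.9r_1,0.1r_1)$, and a wedge opening in directions $[10^\circ,50^\circ]$ as seen from $y_i$. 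All hypotheses of the lemma can be satisfied (both edges exit through $A_2$ with $\theta_i^-\approx 10^\circ$ and $\theta_i^+\approx 50^\circ$ up to $O(r_1/r_2)$, and the interior of $W_i$ misses the walls), yet the apex $y_i$, which certainly lies in $D_1\cap W_i$, has argument $\arctan(1/9)\approx 6.3^\circ$, outside $[\theta_i^-,\theta_i^+]$ by a constant margin. Your small-angle estimate pins down arguments only for points on $A_2$, at distance $\Theta(r_2)$ from both $\out{y}$ and $y_i$; near the apex, $\out{y}$ and $y_i$ are vantage points at constant relative separation and they see entirely different angular windows.

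The conclusion still holds, but the area estimate must be centered at $y_i$, not at $\out{y}$. That is the paper's route: it uses $\beta\le\pi/2$ to show that every segment $y_iw$ with $w\in D_1\cap W_i$ has length at most $r_1+O(r_1^2/r_2)$, and then bounds $D_1\cap W_i$ by a near-sector with apex $y_i$, angle $\beta_i$, and radius $r_1+O(r_1^2/r_2)$, giving $\area(D_1\cap W_i)\le\tfrac12\beta_i r_1^2+O(r_1^3/r_2)$. Together with $\beta_i=(\theta_i^+-\theta_i^-)\pm O(r_1/r_2)$ and $m=O(1)$ this yields~\eqref{eq:arearho}. (Incidentally, had your reasoning been valid, it would prove the sharper bound $\area(D_1\cap\mathcal W)/\area(D_1)\le\rho$ with no error term at all; that the lemma is stated with an $O(r_1/r_2)$ slack should itself have been a warning sign.)
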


\begin{proof}
We first analyze just a single triangle $W_i$ and then generalize to all of $W_1,\ldots, W_m$.
For $j\in\{1,2\}$, let $A_j^{(i)}\mydef A_j\cap W_i$ be the arc on $A_j$ contained in $W_i$, and let $\gamma_i\in (0,\beta]$ be the angle spanned by $A_2^{(i)}$.
We claim that
\begin{align}
&\beta_i\in [\gamma_i-O(r_1/r_2),\gamma_i+O(r_1/r_2)],\text{ and}\label{eq:betagamma}\\
&\area(D_1\cap W_i)\leq r_1^2\gamma_i/2+O(r_1^3/r_2). \label{eq:areaWi}
\end{align}

Before proving~\eqref{eq:betagamma} and~\eqref{eq:areaWi}, we show how~\eqref{eq:arearho} and~\eqref{eq:anglesum} follow.
We get from~\eqref{eq:areaWi} that
\[
\frac{\area(D_1\cap W_i)}{\area(D_1)}\leq \frac{r_1^2\gamma_i/2+O(r_1^3/r_2)}{r_1^2\beta/2}=\gamma_i/\beta+O(r_1/r_2).
\]

Note that by~\eqref{eq:betagamma} and since $\beta_i\in[\alpha_{\min},\alpha_{\max}]$ for all $i$, we know that the number of triangles is $m=O(1)$.
We now have that
\[
\frac{\area(D_1\cap \mathcal W)}{\area(D_1)}=\sum_{i=1}^m \frac{\area(D_1\cap W_i)}{\area(D_1)}\leq \sum_{i=1}^m \left(\gamma_i/\beta+O(r_1/r_2)\right)=\rho+O(r_1/r_2),
\]
which proves~\eqref{eq:arearho}.
Likewise,~\eqref{eq:anglesum} follows from~\eqref{eq:betagamma} using that $m=O(1)$ as
\[
\sum_{i=1}^m \beta_i\in \left[\sum_{i=1}^m \left(\gamma_i-O(r_1/r_2)\right),\sum_{i=1}^m \left(\gamma_i+O(r_1/r_2)\right)\right]=
[\rho\beta-O(r_1/r_2),\rho\beta+O(r_1/r_2)].
\]

\begin{figure}
\centering
\includegraphics[page=9]{figures/fingerprinting2.pdf}
\caption{Left: In the shown example, $\beta_i=\gamma_i-\alpha_s+\alpha_t$.
Since $\alpha_s=O(r_1/r_2)$ and $\alpha_t=O(r_1/r_2)$, we always have $\beta_i=[\gamma_i-O(r_1/r_2),\gamma_i+O(r_1/r_2)]$.
Right: The angle $\theta$ is an argument of the segment $y_iw$.
The argument is bounded by $\beta+O(r_1/r_2)$.
}
\label{fig:lemma2}
\end{figure}

For the following proof of~\eqref{eq:betagamma}, we refer to \Cref{fig:lemma2} (left).
Let $s'_1\in y_iz_i$ and $t'_1\in y_ix_i$ be the endpoints of $A_1^{(i)}$ and define $s'_2$ and $t'_2$ similarly as the endpoints of $A_2^{(i)}$.
Note first that if $y_i=\out{y}$, we have $\beta_i=\gamma_i$.
However, in general $y_i$ is just a point within distance $r_1$ from $\out{y}$.
Therefore, the angle $\alpha_s$ between the segments $\out{y}s'_2$ and $y_is'_2$ is at most $\arcsin(r_1/r_2)$, which is obtained when $\out{y}y_is'_2$ is a triangle with $\|\out{y}y_i\|=r_1$ and a right angle at $y_i$.
Since $r_2$ is much larger than $r_1$, we have $\arcsin(r_1/r_2)=O(r_1/r_2)$.
Similarly, the angle $\alpha_t$ between the segments $\out{y}t'_2$ and $y_it'_2$ is at most $\arcsin(r_1/r_2)=O(r_1/r_2)$.
Note that $\beta_i=\gamma_i\pm \alpha_s\pm \alpha_t$, so we get $\beta_i\in [\gamma_i-O(r_1/r_2),\gamma_i+O(r_1/r_2)]$.

By the \emph{argument} of a line $\ell$, we mean the counterclockwise angle from the $x$-axis to $\ell$.
The argument of a line segment $s$ is the argument of the line containing $s$.
Assume without loss of generality that $\out{y}\out{s_2}$ is horizontal with $\out{s_2}$ to the right of $\out{y}$, so that the argument of any line through $\out{y}$ and a point on $A_2$ is in the range $[0,\beta]$.
We claim that then the argument of every segment $y_iw$, where $w\in W_i$, is in the range $[-O(r_1/r_2),\beta+O(r_1/r_2)]$.
To verify the upper bound, note that the argument of $y_iw$ is maximum if $w=x_i$ and $t'_2=\out{t_2}$, see \Cref{fig:lemma2} (right).
By an argument as the one used in the previous paragraph, we get that the argument can be at most $O(r_1/r_2)$ larger than $\beta$.
The lower bound follows similarly.

\begin{figure}
\centering
\includegraphics[page=8]{figures/fingerprinting2.pdf}
\caption{Left: Longest segments in $D_1$.
The segment $\ell_1$ shows the case that the argument is in $[0,\beta]$, and then the longest segment has length $r_1$.
The segment $\ell_2$ shows the case where the argument is in $(\beta,\beta+O(r_1/r_2)]$, and then the segment has length $r_1+O(r_1^2/r_2)$.
Right: Figure to show that $\area(D_1\cap W_i)\leq r_1^2\gamma_i/2+O(r_1^3/r_2)$.
The segments $y_is'_1$ and $y_it'_1$ are assumed to have the same length $r_1+O(r_1^2/r_2)$, and the angle $\beta_i$ at $y_i$ is $\gamma_i+O(r_1/r_2)$.
We consider the point $y'_i$ such that $\|y'_is'_1\|=\|y'_it'_1\|=r_1$.
Then the angle at $y'_i$ is likewise $\gamma_i+O(r_1/r_2)$, so the area of the blue triangle is $r_1^2\gamma_i/2+O(r_1^3/r_2)$.
The white triangles have area at most $O(r_1^3/r_2)$, so the total area of $D_1\cap W_i$ is at most $r_1^2\gamma_i/2+O(r_1^3/r_2)$.
}
\label{fig:lemma2NEW}
\end{figure}

We now observe that each of the segments $y_is'_1$ and $y_it'_1$ has length at most $r_1+O(r_1^2/r_2)$, as follows.
See \Cref{fig:lemma2NEW} (left).
Since $\beta\leq\pi/2$, the longest segment $\ell_1$ in $D_1$ with an argument in $[0,\beta]$ connects $\out{y}$ to a point on $A_1$ and has length $r_1$.
The longest segment $\ell_2$ in $D_1$ with an argument in $(\beta,\beta+O(r_1/r_2)]$ 
connects $\out{t_1}$ to a point $p$ on $\out{y}\out{s_1}$.
We then get
\begin{align*}
\|\ell_2\| & = \|\out{t_1}p\| \leq \|\out{t_1}\out{y}\|+\|\out{y}p\| \\
&\leq r_1+r_1\tan(O(r_1/r_2)) = r_1+r_1\frac{\sin(O(r_1/r_2))}{\cos(O(r_1/r_2))} \\
& \leq r_1+r_1\frac{O(r_1/r_2)}{1-O(r_1/r_2)}=r_1+O\left(\frac{r_1^2}{r_2-r_1}\right)=r_1+O(r_1^2/r_2),
\end{align*}
where the last equality follows since $r_2$ is much larger than $r_1$.
Similarly, the longest segment in~$D_1$ with an argument in $[-r_1/r_2,0)$ connects $\out{s_1}$ to a point on $\out{y}\out{t_1}$ and has length $r_1+O(r_1^2/r_2)$.
Hence, $r_1+O(r_1^2/r_2)$ is also an upper bound on the length of $y_is'_1$ and $y_it'_1$.

We get an upper bound on $\area(D_1\cap W_i)$ in the case that $\beta_i$ and the edges $y_is'_1$ and $y_it'_1$ all reach the upper bounds.
This might not be realizable, but still provides an upper bound.
See \Cref{fig:lemma2NEW} (right).
If the edges have length $\|y_is'_1\|=\|y_it'_1\|=r_1$, the area is $r_1^2(\gamma_i+O(r_1/r_2))/2=r_1^2\gamma_i/2+O(r_1^3/r_2)$.
Extending the edges to $\|y_is'_1\|=\|y_it'_1\|=r_1+O(r_1^2/r_2)$, we are adding two triangles each of which has area at most $r_1\cdot O(r_1^2/r_2)=O(r_1^3/r_2)$, and the desired bound~\eqref{eq:areaWi} follows.
\end{proof}

\begin{figure}
\centering
\includegraphics[page=4]{figures/fingerprinting2.pdf}
\caption{Regions $B$ and $D'$.
The region $D'$ is drawn with a pattern of falling gray lines.
The fat part of~$A_2$ is the arc $A'_2$ that bounds $D'$.
A piece covering part of $B$ has edges crossing one or both of $A'_2$ and~$bc$.
The figure is not to scale---in practice $B$ will cover almost all of the region $D_1$ below the arc $A_1$.
}
\label{fig:regionDBNEW}
\end{figure}

We want to apply \Cref{lemma:wedges} to a set of pieces covering parts of $D_1$; see \Cref{fig:regionDBNEW}.
Let $\inn{s_i}$ and $\inn{t_i}$ be the intersection points of $A_i$ with $\inn{y}\inn{z}$ and $\inn{x}\inn{y}$, respectively.
Let $B\subset D_1\cap\inn{T}$ be the region bounded by segments $\inn{s_0}\inn{s_1}$, $\inn{t_0}\inn{t_1}$, and the arcs $A_0\cap\inn{T}$ and $A_1\cap\inn{T}$.
We consider only pieces that cover a part of $B$.
The reason we do not consider all pieces covering a part of~$D_1$ is that a piece covering a part of $D_1$ but not $B$ might violate the assumptions of \Cref{lemma:wedges}.
In particular, such a piece might not have an interior disjoint from $\out{t_2}\out{y}$ and $\out{y}\out{s_2}$, as is seen in case~(ii.a) in \Cref{fig:stickingFig2}.
Cases~(i.b) and~(ii.b) in \Cref{fig:stickingFig,fig:stickingFig2}, 
respectively, show the possibilities of a piece covering a part of $B$, and here the piece fits the assumptions of \Cref{lemma:wedges}.
The following lemma makes this intuition precise.
Conceivably, there may be some pieces that fulfill the conditions of \Cref{lemma:wedges} but do not cover a part of $B$.
However, even considering only pieces covering a part of $B$, we will be able to arrive at our desired conclusion.
Recall that the segments $ab$ and $cd$ are bounding some pieces (or the container $\cont$) in the valid motion $\m$, so these segments act as obstacles that restrict the placement of a piece $\pl{p_j}\m$ covering a part of $B$.

\begin{lemma}
\label{lemma:stickingOut}
A piece $\pl {p_j}\m$ covering a part of the interior of $B$ has the following properties:
\begin{itemize}
\item There is a corner $\pl v\m$ of $\pl {p_j}\m$ contained in $D_1$.

\item The edges of $\pl {p_j}\m$ adjacent to $\pl v\m$ cross $A_2$.
\end{itemize}
\end{lemma}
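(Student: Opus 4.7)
The plan is to establish both claims by exploiting the constraint $\pl{p_j}\m \subset E$ together with the geometry of $\partial E$ near $y$, where $\partial E$ follows the segments $ab$ and $cd$. As a foundational observation, I would first show that the piece's boundary can only cross $\partial D_1$ through the arc $A_1$: the radial sides $\out y \out t_1$ and $\out y \out s_1$ are parallel to $yz$ and $yx$ but offset outward by distance $\lambda \zeta(\alpha_{\min}) > \lambda$, so they lie on the $E^c$ side of $ab$ and $cd$. Any attempt by the piece to touch a radial side of $D_1$ would require crossing $\partial E$, which is impossible.

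For the first conclusion I argue by contradiction, assuming $\pl{p_j}\m$ has no corner in $D_1$. The fatness assumption together with $\diam(D_1) \leq 2 r_1 < \mu$ forces at most two mutually adjacent edges of the piece to meet $D_1$, and when there are two, their common corner lies strictly outside $D_1$. The piece's boundary in $D_1$ therefore consists of at most two straight chord sub-segments entering and exiting through $A_1$. Since the inner side of each such chord (the component containing $\out y$) contains points of $E^c$ near $\out y$, the piece's region in $D_1$ must lie on the outer side of every chord, giving a thin lens (one chord) or a wedge-cap (two chords meeting at an exterior corner). I would then use that the chord itself must stay strictly inside $E$, i.e., on the $\inn T$ side of both $ab$ and $cd$, together with the explicit bounds on $r_0, r_1$ and the location of $B$ (strictly above $\inn y \inn z$, strictly beyond $\inn y \inn x$, and at distance at least $r_0$ from $\out y$), to show that no such outer lens or wedge-cap can contain a point in the interior of $B$. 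This corresponds to ruling out the configurations depicted as case~(i) in the referenced figures.

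For the second conclusion, having the corner $\pl v\m \in D_1$, each adjacent edge of the piece has length at least $1$, while $r_2$ is chosen much smaller than $1$. Following such an edge in a straight line from $\pl v\m$ (at distance at most $r_1$ from $\out y$), the edge reaches points at distance at least $1 - r_1 > r_2$ from $\out y$, and hence leaves the disk of radius $r_2$ around $\out y$. By the same obstruction argument as before, now applied to the radial sides $\out y \out t_2$ and $\out y \out s_2$ of $D$, the edge cannot exit $D$ through its radial sides, so it must cross $A_2$.

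The main obstacle is the geometric case analysis in the first conclusion, where each potentially permissible configuration of outer lenses and wedge-caps must be carefully ruled out to show the piece cannot reach the interior of $B$ without placing a corner inside $D_1$.
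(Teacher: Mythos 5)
The proposal's foundational observation is false, and this gap cannot be patched without essentially rediscovering the paper's case analysis. You claim that the radial sides $\out y\out{t_1}$ and $\out y\out{s_1}$ (and later $\out y\out{t_2}$, $\out y\out{s_2}$) lie on the $E^c$ side of $ab$ and $cd$, so that the piece can only cross $\partial D_1$ (resp.\ $\partial D$) through the arc. But the segments $ab$ and $cd$ only bound $E$ up to their endpoints $b$ and $c$, which are within distance $\lambda$ of $y$; near the apex $\out y$ (which is at distance up to $\lambda\zeta(\alpha_{\min})^2$ from $y$) there is a gap of width $\|bc\|\leq 2\lambda$ in the known boundary of $E$. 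The portion of each radial side closest to $\out y$ lies past the endpoint $b$ or $c$ of the corresponding bounding segment, so nothing stops it from being inside $E$. Consequently a piece can enter $D_1$ by crossing a radial side near $\out y$, through the gap between $b$ and $c$, and your chord-decomposition and ``outer lens or wedge-cap'' reasoning, which relies on all crossings of $\partial D_1$ passing through $A_1$, is not valid. The same unproved obstruction is what your argument for the second conclusion leans on.

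The paper avoids this precisely by introducing the region $D'$ bounded by the sub-segments $s_2c\subset cd$ and $t_2b\subset ab$ of $\partial E$, together with the segment $bc$ (the gap) and the arc $A'_2$. It then observes that an edge of the piece can cross $\partial D'$ only through $bc$ or $A'_2$, and performs the case analysis accordingly. Cases (ii.a), (ii.b), (ii.c) handle exactly the scenario your observation dismisses: edges crossing the $bc$ gap. In (ii.a) the fatness and angle bounds show the resulting corner is within $r_0$ of $\out y$ and hence too close to cover $B$; in (ii.b) a corner inside $D_1$ with both adjacent edges crossing $A_2$ is produced; in (ii.c) fatness rules out two non-coinciding crossings of $bc$. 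Any correct proof needs an argument for the gap near $\out y$, and the proposal has none.
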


\begin{figure}
\centering
\includegraphics[page=5,width=\textwidth]{figures/fingerprinting2.pdf}
\caption{
Cases from the proof of \Cref{lemma:stickingOut}.
Left: Case (i.a). In this case, the piece $\pl{p_j}\m$ is too far from $\out{y}$ to cover any of $B$.
Right: Case (i.b). This case agrees with the statement of the lemma.}
\label{fig:stickingFig}
\end{figure}

\begin{figure}
\centering
\includegraphics[page=6,width=\textwidth]{figures/fingerprinting2.pdf}
\caption{
Cases from the proof of \Cref{lemma:stickingOut}.
Top left: Case (ii.a). In this case, $\pl{p_j}\m$ cannot cover any of $B$.
Top right: Case (ii.b). This case agrees with the statement of the lemma.
Bottom left: Case (ii.c). This case violates the fatness assumption of the pieces.}
\label{fig:stickingFig2}
\end{figure}

\begin{proof}
Let $s_2$ and $t_2$ be the intersection points of $A_2$ with segment $cd$ and $ab$, respectively, as shown in \Cref{fig:regionDBNEW}.
Let $A'_2$ be the part of $A_2$ from $s_2$ counterclockwise to $t_2$.
Let $D'$ be the region bounded by segments $s_2 c$, $bc$, $t_2b$, and the arc $A'_2$.
Then $B\subset D'\subset D$.
Since $r_2<1/2$, the diameter of $D$ is less than $1$.
Since $\pl {p_j}\m$ covers a part of the interior of $B$, there is one or more edges of $\pl {p_j}\m$ that cross the boundary of $D'$.
An edge of $\pl {p_j}\m$ can only cross the boundary of $D'$ at a point on the segment $bc$ or the arc $A'_2$, since the segments $cs_2$ and $bt_2$ are bounding some other pieces.
We divide into the following cases, which are also shown in 
\Cref{fig:stickingFig,fig:stickingFig2}, 
 respectively:
\begin{itemize}
\item Case (i):
No edge of $\pl{p_j}\m$ crosses $bc$.
Then there is an edge $ef$ crossing $A'_2$.
We have the following two cases:
\begin{itemize}
\item Case (i.a):
The edge $ef$ crosses $A'_2$ twice.
Since $\beta\leq\pi/2$, we get that the distance from $\out{y}$ to $ef$ is at least $r_2/\sqrt 2$.
Since $r_1$ is much smaller, this edge cannot contribute to covering a part of $B$, so there must be some edges of $\pl{p_j}\m$ crossing the boundary of $D'$ that do not belong to this case.

\item Case (i.b):
One of the endpoints $e$ and $f$ is inside $D'$ while the other is outside.
Assume without loss of generality that $f$ is inside.
It follows that the succeeding edge $fg$ likewise intersects $A'_2$ due to the minimum length of the edges, and the claim holds.
\end{itemize}

\item Case (ii):
An edge $ef$ of $\pl{p_j}\m$ crosses $bc$.
Suppose that as we follow $ef$ from $e$ to $f$, we enter~$D'$ as we cross $bc$.
In particular $e\notin D'$.
There must likewise be another edge $gh$ of~$\pl {p_j}\m$ crossing $bc$, since otherwise, the interior of $\pl{p_i}\m$ would intersect $ab$ or $cd$.
We have the following cases depending on whether an endpoint of $ef$ coincides with one of $gh$:
\begin{itemize}
\item Case (ii.a):
$f$ coincides with an endpoint of $gh$.
Assume without loss of generality that $f=g$.
By \Cref{lemma:zeta} part (1), we have $\|\out{y} y\| \le \lambda \zeta(\alpha_{\min})^2$, and by part (2), we have $\|y f\|\leq \lambda \zeta(\alpha_{\min})$.
Therefore $\|\out{y} f\|\leq \|\out{y}y\|+\|y f\|\leq r_0$.
But then the edges $ef$ and $gh$ do not get far enough into $D'$ so that the wedge they form can cover a part of $B$, as every point in $B$ has distance at least $r_0$ to $\out{y}$.
Therefore, there must be some edges of $\pl{p_j}\m$ crossing the boundary of $D'$ that do not belong to this case.

\item Case (ii.b):
$e$ coincides with an endpoint of $gh$.
Assume without loss of generality that $e=g$.
Because the angle at $e$ is at least $\alpha_{\min}$, it follows from \Cref{lemma:zeta} part (2) that $\|y e\|\leq \lambda \zeta(\alpha_{\min})$.
Therefore, $e$ is contained in $D$.
It then follows that $ef$ and $gh$ both cross $A_2$, since they must exit $D$, and the claim of the lemma thus holds.

\item Case (ii.c):
No endpoint of $ef$ coincides with one of $gh$.
Since $\|bc\|\leq 2\lambda<\tau$ and both segments $ef$ and $gh$ cross $\|bc\|$, we conclude that the fatness condition is violated in this case.
\end{itemize}
\end{itemize}
\end{proof}

Here, we give an informal description of the following three lemmas.
\Cref{lemma:areaBounds} states that the area of $B$ grows quadratically in $r_1$ while $D_1\setminus B$ grows only linearly.
\Cref{lemma:covering1} says that almost all of $B$ must be covered by pieces, as the uncovered area will otherwise be larger than $\slack$.
We are then able to conclude in \Cref{lemma:covering2} that almost all of $D_1$ must be covered by the pieces covering $B$, as $B$ has asymptotically the same area as $D_1$.
This eventually makes it possible to apply \Cref{lemma:wedges} in the proof of \Cref{lemma:boundDiff0}.

\begin{lemma}
\label{lemma:areaBounds}
We have
\begin{itemize}
\item
$\area(D_1\setminus B)=O(r_0^2+\lambda r_1)$.
\item
$\area( B)=\Omega(r_1^2-r_0^2)$. 
\end{itemize}
\end{lemma}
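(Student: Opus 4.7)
The plan is to exploit the simple structure of the two regions: $D_1$ is the circular sector around $\out y$ of radius $r_1$ and angle $\beta\in[\alpha_{\min},\pi/2]$ (by definition bounded by the two radial segments $\out y\out{s_1}$ and $\out y\out{t_1}$ together with the arc $A_1$), while $B$ is essentially the annular sub-sector with inner radius $r_0$, trimmed down to the strictly smaller triangle $\inn T$. Because $\inn T$ arises from $T$ by offsetting each side inward by $\lambda$ and $\out T$ by offsetting each side outward by $\lambda\zeta(\alpha_{\min})$, and because $\beta\geq\alpha_{\min}$ is bounded below by a constant, the band $\out T\setminus\inn T$ has perpendicular width $O(\lambda)$; since $D_1\subset\out T$, we also have $D_1\setminus\inn T\subset\out T\setminus\inn T$.

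For the first bound, I split $D_1\setminus B$ into two parts: (i) the small sub-sector of $D_1$ within distance $r_0$ of $\out y$, of area $\beta r_0^2/2=O(r_0^2)$, and (ii) the part of $D_1$ that falls outside $\inn T$. Part (ii) lies inside the width-$O(\lambda)$ band $\out T\setminus\inn T$ and meets $D_1$ only in thin strips along the two straight edges $\out y\out{s_1}$ and $\out y\out{t_1}$ (each of length $r_1$); the third side of $T$ is too far from $\out y$ to intersect $D_1$, since $r_1$ is much smaller than the sides of $T$. Each such strip has area $O(\lambda r_1)$, so part (ii) contributes $O(\lambda r_1)$ in total. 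Adding (i) and (ii) gives $\area(D_1\setminus B)=O(r_0^2+\lambda r_1)$.

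For the second bound, $\area(B)=\area(D_1)-\area(D_1\setminus B)\geq \beta r_1^2/2 - O(r_0^2+\lambda r_1)$. The leading term is $\Omega(r_1^2)$ because $\beta\geq\alpha_{\min}$ is a positive constant, and subtracting the $O(r_0^2)$ correction leaves a lower bound of $\Omega(r_1^2-r_0^2)$. The remaining $O(\lambda r_1)$ correction is absorbed because $r_0=\lambda\psi$ with $\psi=O(1)$ (so $\lambda=O(r_0)$) and $r_0\ll r_1$: hence $\lambda r_1=O(r_0 r_1)=O(r_1^2)=O(r_1^2-r_0^2)$, yielding $\area(B)=\Omega(r_1^2-r_0^2)$.

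The step to be careful about is the width-times-length estimate for $D_1\cap(\out T\setminus\inn T)$, which relies on $\beta$ being bounded below so that $\out T\setminus\inn T$ really forms a thin band of width $O(\lambda)$ with only two of its three sides ever meeting $D_1$; once this is in place, the rest reduces to elementary sector arithmetic.
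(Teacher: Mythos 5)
Your argument is correct and follows the paper's proof essentially verbatim: decompose $D_1\setminus B$ into the sub-sector of radius $r_0$ around $\out y$ (area $O(r_0^2)$) and the strips of width $O(\lambda)$ along the two radial segments $\out y\out{s_1}$, $\out y\out{t_1}$ of length $r_1$ (area $O(\lambda r_1)$), then subtract from $\area(D_1)=\Theta(r_1^2)$. The closing paragraph about absorbing the $O(\lambda r_1)$ term via $\lambda=O(r_0)\ll r_1$ makes explicit a step the paper leaves implicit, but this is a detail, not a different route.
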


\begin{proof}
The points in $D_1\setminus B$ are either within distance $r_0$ from $\out{y}$ or within distance $\lambda (1+\zeta(\alpha_{\min}))=O(\lambda)$ from one of the line segments $\out{y}\out{s_1}$ or $\out{y}\out{t_1}$, each of length $r_1$.
It then follows that $\area(D_1\setminus B)=O(r_0^2+\lambda r_1)$.

We thus have $\area( B)=\area(D_1)-\area(D_1\setminus B)=\Omega(r_1^2)-O(r_0^2+\lambda r_1)=\Omega(r_1^2-r_0^2)$.
\end{proof}

Let $Q\mydef \{p_j | j\in\{i,\ldots,n\}\text{ and }\pl{p_j}\m\cap  B\neq\emptyset\}$, and let $\mathcal Q\mydef \bigcup_{p_j\in Q}\pl{p_j}\m$.

\begin{lemma}
\label{lemma:covering1}
By choosing $r_1\mydef \Omega\left(\lambda /\sigma+\sqrt{\slack/\sigma}\right)$, where $\Omega$ hides a sufficiently large constant, we get
$\area( B\cap\mathcal Q)\geq (1-\sigma/4)\area( B)$.
\end{lemma}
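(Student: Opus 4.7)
The key observation is that $B \setminus \mathcal{Q}$ is completely uncovered empty space inside the container, and therefore its area is at most the total slack $\slack$. From there the result follows by comparing the lower bound on $\area(B)$ from Lemma~\ref{lemma:areaBounds} with $\slack$, after choosing the hidden constant in the definition of $r_1$ large enough.

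\textbf{Step 1: $B \setminus \mathcal{Q}$ is uncovered.} By construction $B \subset \inn{T}$. Since $E$ is $\lambda$-bounding $T$ at $y$, the interior of $\inn{T}$ is contained in $E$, so no piece among $p_1,\ldots,p_{i-1}$ (placed according to $\m$) covers any point of the interior of $B$. On the other hand, every piece $p_j$ with $j\geq i$ whose placement intersects $B$ belongs to $Q$ by definition of $Q$. Hence any point of $B$ not in $\mathcal{Q}$ is not in any piece at all, i.e., it lies in the overall uncovered region $\cont \setminus \bigcup_{j=1}^N \pl{p_j}\m$.

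\textbf{Step 2: Bound the uncovered area.} Since $\m$ is a valid placement, the total uncovered area in $\cont$ equals the slack:
\[
\area\!\left(\cont \setminus \bigcup_{j=1}^N \pl{p_j}\m\right) = \slack.
\]
Combining with Step 1, we get $\area(B \setminus \mathcal{Q}) \leq \slack$.

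\textbf{Step 3: Lower bound on $\area(B)$.} Recall $r_0 = \lambda\psi = O(\lambda)$, while $r_1 = \Omega(\lambda/\sigma + \sqrt{\slack/\sigma})$. Since $\sigma$ can be taken smaller than any absolute constant, we have $r_1 \geq 2 r_0$ (by choosing the constant in $r_1$ large enough), so $r_1^2 - r_0^2 \geq \tfrac{3}{4} r_1^2$. By Lemma~\ref{lemma:areaBounds}, there is an absolute constant $c_0 > 0$ such that $\area(B) \geq c_0 r_1^2$.

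\textbf{Step 4: Choose the constant in $r_1$.} Writing $r_1 \geq C\sqrt{\slack/\sigma}$ for the constant $C$ hidden in the definition of $r_1$, we have $r_1^2 \geq C^2 \slack / \sigma$, and hence
\[
\frac{\sigma}{4}\area(B) \;\geq\; \frac{\sigma}{4}\cdot c_0 r_1^2 \;\geq\; \frac{c_0 C^2}{4}\slack.
\]
Choosing $C$ so that $c_0 C^2 / 4 \geq 1$ yields $\tfrac{\sigma}{4}\area(B) \geq \slack \geq \area(B \setminus \mathcal{Q})$, i.e., $\area(B \cap \mathcal{Q}) \geq (1-\sigma/4)\area(B)$, as claimed.

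\textbf{Remarks.} The argument is a simple area-accounting step; the nontrivial work has already been packaged into the $\lambda$-bounding property (which ensures $B$ sits inside the empty space) and the area estimates of Lemma~\ref{lemma:areaBounds}. The only thing to be careful about is that the constant absorbed into the $\Omega$ in the definition of $r_1$ must be chosen at least $2/\sqrt{c_0}$ to make the final inequality work; the term $\lambda/\sigma$ in $r_1$ plays no role here but will be needed in the next lemma to handle the angular part of the bound.
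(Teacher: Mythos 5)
Your proof is correct and follows essentially the same route as the paper's: observe that $B\setminus\mathcal Q$ is completely uncovered (using that the interior of $\inn T$ is inside $E$ and that every later-indexed piece meeting $B$ is by definition in $Q$), bound its area by $\slack$, and use Lemma~\ref{lemma:areaBounds} together with the choice of the hidden constant in $r_1$ to get $\sigma/4\cdot\area(B)\geq\slack$. You are merely more explicit about the constants (e.g.\ unpacking $\Omega(r_1^2-r_0^2)=\Omega(\slack/\sigma)$ via $r_1\geq 2r_0$); the substance is identical.
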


\begin{proof}
Recall that the pieces $\pl{p_1}\m,\ldots,\pl{p_{i-1}}\m$ are interior disjoint from $\inn{T}$, as $\inn{T}\subset \pl{E_i}\m$.
Since
\[
r_1=\Omega\left(\lambda /\sigma+\sqrt{\slack/\sigma}\right)=\Omega\left(r_0+\sqrt{\frac{\slack}{\sigma}}\right)=\Omega\left(\sqrt{r_0^2+\frac{\slack}{\sigma}}\right),
\]
we get from \Cref{lemma:areaBounds} that
\[
\area( B)= \Omega(r_1^2-r_0^2)=\Omega(\slack/\sigma).
\]
Now, if the constant hidden in the $\Omega$-notation is large enough, we have $\area( B) \geq \frac{\slack}{\sigma/4}$, or equivalently, $\sigma/4\cdot\area( B)\geq\slack$.
This means that the area of $ B$ covered by the pieces in $Q$ is at least $(1-\sigma/4)\area( B)$, as otherwise the uncovered part would be larger than $\slack$.
\end{proof}

\begin{lemma}
\label{lemma:covering2}
By choosing $r_1\mydef \Omega\left(\lambda /\sigma+\sqrt{\slack/\sigma}\right)$, where $\Omega$ hides a sufficiently large constant, we get $\area(D_1\cap \mathcal Q)\geq (1-\sigma/2)\area(D_1)$.
\end{lemma}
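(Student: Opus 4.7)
The plan is to bound $\area(D_1\cap\mathcal Q)$ from below by combining Lemma~\ref{lemma:covering1} with the area comparison from Lemma~\ref{lemma:areaBounds}. Since $B\subset D_1$, every piece in $Q$ contributes its intersection with $D_1$ to $\mathcal Q$, so the containment $B\cap\mathcal Q\subseteq D_1\cap\mathcal Q$ gives immediately
\[
\area(D_1\cap\mathcal Q)\ge\area(B\cap\mathcal Q)\ge(1-\sigma/4)\area(B).
\]
What remains is to relate $\area(B)$ to $\area(D_1)$.

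By Lemma~\ref{lemma:areaBounds}, $\area(D_1)-\area(B)=\area(D_1\setminus B)=O(r_0^2+\lambda r_1)$, while $\area(D_1)=\Theta(r_1^2\beta)=\Omega(r_1^2)$ since $\beta\ge\alpha_{\min}=\Omega(1)$. Therefore, to conclude that $\area(B)\ge(1-\sigma/4)\area(D_1)$, it suffices to show the bound $r_0^2+\lambda r_1=O(\sigma r_1^2)$, where the hidden constant can be made as small as we like by enlarging the constant inside the $\Omega$ defining $r_1$.

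To verify this, recall that $r_0=\lambda\psi=O(\lambda)$. Hence $r_0^2\le c\sigma r_1^2/2$ reduces to $r_1\ge c_1\lambda/\sqrt\sigma$, and $\lambda r_1\le c\sigma r_1^2/2$ reduces to $r_1\ge c_2\lambda/\sigma$. Both inequalities are comfortably implied by the hypothesis $r_1=\Omega(\lambda/\sigma+\sqrt{\slack/\sigma})$ once we choose the absolute constant large enough, and this is compatible with (indeed, no stronger than) the lower bound already required in Lemma~\ref{lemma:covering1}.

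Putting everything together,
\[
\area(D_1\cap\mathcal Q)\ge(1-\sigma/4)\area(B)\ge(1-\sigma/4)^2\area(D_1)\ge(1-\sigma/2)\area(D_1),
\]
which is the desired conclusion. There is no real obstacle here: the proof is a purely quantitative combination of the two preceding lemmas, and the only thing to check is that the errors $r_0^2$ and $\lambda r_1$ are dominated by $\sigma\cdot\area(D_1)=\Omega(\sigma r_1^2)$, which is why the quadratic term $\sqrt{\slack/\sigma}$ in $r_1$ (coming from Lemma~\ref{lemma:covering1}) and the linear term $\lambda/\sigma$ (needed here) both appear in the chosen order of $r_1$.
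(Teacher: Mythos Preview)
Your proof is correct and follows essentially the same approach as the paper: both combine Lemma~\ref{lemma:covering1} with the area bounds of Lemma~\ref{lemma:areaBounds} to show that $\area(B)$ is within a factor $1-O(\sigma)$ of $\area(D_1)$. The only cosmetic difference is that the paper bounds $\area(D_1\setminus B)$ against $\area(B)$ (obtaining $(1-\sigma/4)\area(B)\ge(1-\sigma/2)\area(D_1)$ via the inequality $\frac{1}{1+\sigma/4}\ge\frac{1-\sigma/2}{1-\sigma/4}$), whereas you bound it against $\area(D_1)$ and use $(1-\sigma/4)^2\ge 1-\sigma/2$; both routes are equivalent.
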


\begin{proof}
Note that $r_1=\Omega(\lambda /\sigma)=\Omega(r_0/\sqrt\sigma+\lambda/\sigma)$.
We get $r_1^2=\Omega\left(\frac{r_0^2+\lambda r_1}{\sigma}\right)$, where we can choose the constant hidden in the $\Omega$-notation as big as needed.
Since $r_1=\Omega(r_0)$, we then get $r_1^2-r_0^2=\Omega\left(\frac{r_0^2+\lambda r_1}{\sigma}\right)$.
Now, if we choose the constant big enough, we get from \Cref{lemma:areaBounds} that $\sigma/4\cdot \area( B)\geq\area(D_1\setminus B)$ and thus
\[
\frac{\area (D_1)}{\area ( B)}=\frac{\area ( B)+\area(D_1\setminus B)}{\area ( B)}\leq 1+\sigma/4.
\]
It follows that
\[
\frac{\area ( B)}{\area (D_1)}=\frac{\area ( B)}{\area ( B)+\area(D_1\setminus B)}\geq \frac{1}{1+\sigma/4}\geq \frac{1-\sigma/2}{1-\sigma/4}.
\]
Hence we have that
\[
(1-\sigma/4)\area( B)\geq (1-\sigma/2)\area(D_1).
\]
The claim now follows from \Cref{lemma:covering1} as
\begin{align*}
\area(D_1\cap \mathcal Q)\geq
\area( B\cap \mathcal Q)\geq
(1-\sigma/4)\area( B)\geq
(1-\sigma/2)\area(D_1).
\end{align*}
\end{proof}

We are now ready to prove \Cref{lemma:boundDiff0}.
We rephrase the lemma as follows.

\begin{lemma}
\label{lemma:boundDiff01}
By choosing $r_2\mydef \Omega\left(r_1/\sigma\right)$, where $\Omega$ hides a sufficiently large constant, we get that~$Q$ consists of just one piece $p_j$, and $p_j$ has a corner $y_j$ such that the angle of $y_j$ is in $[\beta-\sigma,\beta+\sigma]$ and $\pl {y_j}\m\in D_1$.
In particular, $\|y\pl {y_j}\m\|\leq 2r_1=O\left(\lambda /\sigma+\sqrt{\slack/\sigma}\right)$.

Furthermore, let $x_j,z_j$ be the corners preceding and succeeding $y_j$, respectively.
Then the angle between $\pl {y_j}\m\pl {x_j}\m$ and $yx$ is $O\left(\lambda /\sigma+\sqrt{\slack/\sigma}\right)$, as is the angle between $\pl {y_j}\m\pl {z_j}\m$ and $yz$.
\end{lemma}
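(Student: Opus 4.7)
The plan is to combine Lemma~\ref{lemma:stickingOut}, Lemma~\ref{lemma:covering2} and Lemma~\ref{lemma:wedges} in order to apply the unique angle property.

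First, I would apply Lemma~\ref{lemma:stickingOut} to every piece $p_j\in Q$: each such piece has a corner $\pl{v_j}\m$ inside $D_1$ and the two edges of $\pl{p_j}\m$ adjacent to $\pl{v_j}\m$ cross $A_2$. By truncating $\pl{p_j}\m$ along $A_2$, these edges together with $\pl{v_j}\m$ define a triangle $W_j=x_jy_jz_j$ with $y_j=\pl{v_j}\m$ whose side $x_jz_j$ lies outside $D$. Because the pieces $\pl{p_j}\m$ are interior-disjoint, the $W_j$ are too, and the interior of each $W_j$ is disjoint from $\out{t_2}\out{y}$ and $\out{y}\out{s_2}$ since those two segments lie on the bounding segments $ab$ and $cd$ of $E$ (up to the $\lambda$-slack encoded in $\out{T}$). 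Thus the triangles $W_j$ meet the hypotheses of Lemma~\ref{lemma:wedges}.

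Next, I would combine coverage with the angle estimate. Lemma~\ref{lemma:covering2} gives $\area(D_1\cap\mathcal Q)\ge (1-\sigma/2)\area(D_1)$, and then~\eqref{eq:arearho} yields $\rho\ge 1-\sigma/2-O(r_1/r_2)$. Plugging this into~\eqref{eq:anglesum} gives
\[
\sum_{p_j\in Q}\beta_j\ \in\ \bigl[\beta-\beta\sigma/2-O(r_1/r_2),\ \beta+O(r_1/r_2)\bigr].
\]
Choosing $r_2=\Omega(r_1/\sigma)$ with a sufficiently large hidden constant makes the $O(r_1/r_2)$ error term smaller than $\sigma/4$, so the sum lies in $[\beta-\sigma,\beta+\sigma]$ (using $\beta\le\pi/2$). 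The unique angle property then forces $Q$ to consist of a single piece $p_j$ with a single corner $y_j$ of angle $\beta_j\in[\beta-\sigma,\beta+\sigma]$; by construction $\pl{y_j}\m\in D_1$.

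Finally, I would derive the metric and angular bounds. Since $\pl{y_j}\m\in D_1$ and $D_1$ has diameter at most $r_1$ from $\out{y}$, and $\|y\out{y}\|=O(\lambda)\ll r_1$, we get $\|y\pl{y_j}\m\|\le 2r_1=O(\lambda/\sigma+\sqrt{\slack/\sigma})$. For the edge directions, observe that the two edges of $\pl{p_j}\m$ adjacent to $\pl{y_j}\m$ reach the arc $A_2$ of radius $r_2$, and their intersections with $A_2$ are the endpoints of an arc $A_2^{(j)}$ of angular measure $\gamma_j$. From $\rho\ge 1-\sigma/2-O(r_1/r_2)$ (and only one wedge), $\gamma_j=\beta-O(\sigma+r_1/r_2)$, so each endpoint of $A_2^{(j)}$ is within arc-distance $O(r_1)$ of the corresponding endpoint $\out{s_2}$ or $\out{t_2}$ of $A_2$. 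Since $\pl{y_j}\m$ is at distance at most $r_1$ from $\out{y}$ while the relevant point on $A_2$ is at distance $r_2=\Omega(r_1/\sigma)$ from $\out{y}$, the angle between $\pl{y_j}\m\pl{x_j}\m$ and $\out{y}\out{t_2}$ (which is parallel to $yx$) is at most $\arcsin(O(r_1)/r_2)=O(r_1/r_2)=O(\lambda/\sigma+\sqrt{\slack/\sigma})$, and similarly for the other edge.

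The main obstacle is the careful bookkeeping in the last step: the wedge at $\pl{y_j}\m$ is defined intrinsically to the piece, while the reference directions $yx,yz$ come from the triangle $T$, so one must separately control (i) the translational displacement of $\pl{y_j}\m$ from $y$ (via $r_1$) and (ii) the rotational offset (via the lever-arm ratio $r_1/r_2$). Choosing $r_2$ much larger than $r_1$ is precisely what decouples these two errors and lets both collapse to the same asymptotic bound $O(\lambda/\sigma+\sqrt{\slack/\sigma})$.
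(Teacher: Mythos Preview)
Your proof follows the paper's structure through the main steps: apply Lemma~\ref{lemma:stickingOut} to extract the wedges $W_j$, feed them into Lemma~\ref{lemma:wedges}, combine with Lemma~\ref{lemma:covering2} to trap the angle sum in $[\beta-\sigma,\beta+\sigma]$, and invoke the unique angle property to conclude $|Q|=1$. The distance bound $\|y\pl{y_j}\m\|\le 2r_1$ is also fine.

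The gap is in your derivation of the angular bound on the edge directions. Your lever arm is the arc $A_2$: you argue that the crossing point of the edge with $A_2$ is at arc-distance $O(r_1)$ from $\out{t_2}$, and $\pl{y_j}\m$ is at distance $\le r_1$ from $\out{y}$, so the edge direction differs from $\out{y}\out{t_2}$ by $O(r_1/r_2)$. But $r_1/r_2=\Theta(\sigma)$ by the very choice $r_2=\Theta(r_1/\sigma)$, so what you actually prove is an $O(\sigma)$ bound, not $O(r_1)$. Your final equality ``$O(r_1/r_2)=O(\lambda/\sigma+\sqrt{\slack/\sigma})$'' is therefore false; in the intended application $\sigma=\Omega(n^{-8})$ while $r_1=O(n^{-144})$, so $\sigma\gg r_1$ and your bound is strictly weaker than the lemma's claim. (Relatedly, your arc-distance estimate $O(r_1)$ silently assumes $r_2=\Theta(r_1/\sigma)$ rather than merely $\Omega(r_1/\sigma)$: the uncovered angular extent is $O(\sigma)$, giving arc-distance $O(r_2\sigma)$, which is $O(r_1)$ only for that exact choice.)

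The paper avoids this by using a \emph{longer} lever arm than $r_2$: the obstacle segments $ab$ and $cd$ on $\partial E$ have length close to $\|xy\|=1$, and the edge $\pl{y_j}\m\pl{x_j}\m$ (itself of length at least $1$) cannot cross $ab$. Since $\pl{y_j}\m$ lies within $O(r_1)$ of $y\approx b$ and $a$ lies within $\lambda=O(r_1)$ of $x$, the edge is pinned to within angle $O(r_1/\|xy\|)=O(r_1)$ of $yx$; the same holds for the other edge against $cd$. Using the unit-length boundary obstacles rather than the auxiliary arc of radius $r_2\ll 1$ is precisely what yields the stated $O(\lambda/\sigma+\sqrt{\slack/\sigma})$ bound.
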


\begin{proof}
By \Cref{lemma:stickingOut}, each piece $p_j\in Q$ has a corner $y_j$ such that $\pl{y_j}\m$ is contained in $D_1$, and the two adjacent edges cross $A_2$.
For each piece $p_j\in Q$, we consider the triangle $W_j\mydef x_jy_jz_j$, such that $x_jy_j$ and $y_jz_j$ are the edges adjacent to $y_j$.
The triangles $W_j$ now fit in the setup of \Cref{lemma:wedges}.
Let $\rho$ be the fraction of $A_2$ covered by $\mathcal Q$.
We then get by \Cref{lemma:covering2} and \Cref{lemma:wedges} that
\[
1-\sigma/2\leq \frac{\area(D_1\cap\mathcal Q)}{\area(D_1)}\leq\rho+O(r_1/r_2).
\]

\Cref{lemma:wedges} furthermore yields that the sum $S$ of angles of the corners $y_j$ is in the range $[\beta\rho-O(r_1/r_2),\beta\rho+O(r_1/r_2)]$.
Since $1-\sigma/2-O(r_1/r_2)\leq \rho\leq 1$, we get
\begin{align}
\beta\rho-O(r_1/r_2) & \geq (1-\sigma/2-O(r_1/r_2))\beta -O(r_1/r_2)\geq \beta-\sigma\pi/4-O(r_1/r_2),\text{ and}\\
\beta\rho+O(r_1/r_2) & \leq \beta +O(r_1/r_2).
\end{align}

If we now choose $r_2\mydef \Omega\left(r_1/\sigma\right)$, where $\Omega$ hides a sufficiently large constant, we get $S\in[\beta-\sigma,\beta+\sigma]$.
We then get from the unique angle property that $Q$ consists of just one piece~$p_j$.

Since $\|y\pl{y_j}\m\|=O(r_1)$ and $\|xa\|\leq \lambda=O(r_1)$ and $r_1$ is much smaller than $\|xy\|=1$, we get that the angle between $\pl {y_j}\m\pl {x_j}\m$ and $yx$ is $O(r_1/\|yx\|)=O(r_1)$, and likewise for the angle between $\pl {y_j}\m\pl {z_j}\m$ and $yz$.
\end{proof}

\subsection{Generalization to curved polygons}
\label{sec:curvedPolygons}

Let $\gamma\colon [0,L] \longrightarrow \R^2$ be a simple curve parameterized by arc-length and of length $L\geq 1$.
We say that $\gamma$ is a \emph{curved segment} if
\begin{itemize}
\item the prefix $\gamma([0,1])$ and suffix $\gamma([L-1,L])$ are line segments (each of length $1$),
\item $\gamma$ is differentiable, 
\item the \emph{mean curvature} of $\gamma$ at most some small constant $\kappa =O(1)$, i.e., for all $s_1,s_2\in (0,L)$, we have
\[
\|\gamma'(s_1)-\gamma'(s_2)\|\leq \kappa |s_1-s_2|.
\]
\end{itemize}

As an example, consider a simple curve $\gamma$ satisfying the first condition and which is the concatenation of line segments and circular arcs.
Then $\gamma$ is a curved segment as long as each circular arc has radius at least $\kappa$ and each transition from one line segment or circular arc to the next is tangential.

We claim that the above results on fingerprinting also hold when the pieces are curved polygons, each of which has a boundary which is a finite union of curved segments.
The first requirement for a curved segment (i.e., that it has a straight prefix and suffix of length $1$) ensures that near every corner of a curved polygon, the curved polygon behaves as a normal polygon.
Because of this, the setup described in
\Cref{sec:fingerprintSingle,sec:fingerprintMorePieces}
also makes sense for curved polygons.

\begin{figure}
\centering
\includegraphics[page=1]{figures/fingerprinting2.pdf}
\caption{
Case (i.a) from the proof of \Cref{lemma:stickingOut} when the pieces can be curved.
Since the curvature is bounded by some small constant $\kappa$, the distance from $\out{y}$ to $ef$ is at least $x\cdot r_2$ for some constant $x$.}
\label{fig:stickingFig2curved}
\end{figure}

We first check that \Cref{lemma:stickingOut} still holds.
Because we require a curved segment to have a straight prefix and suffix of length $1$, the Case~(ii.a) is excluded for the same reason when using curved polygons as when using normal polygons.
Likewise, Case~(ii.c) is excluded because we still have the fatness assumption.
Hence, Case~(i.a) is the only case that should be excluded where the curved segments make a difference, namely the case where a curved segment $ef$ crosses $A'_2$ twice.
Here, the piece can cover a bit more of $D'$ because the segment can curve; see \Cref{fig:stickingFig2curved}.
However, the mean curvature is at most $\kappa=O(1)$ and we can choose $r_2$ to be an arbitrarily small constant.
We therefore still get that the distance from $\out{y}$ to $ef$ is at least $x r_2$ for some constant $x<1/\sqrt 2$ (where, in the version for normal polygons, we had $x=1/\sqrt 2$).
But we have $r_1=O(\sigma\cdot r^2)=O(n^{-8}r^2)$, since $\sigma=\Theta(n^{-8})$, so it follows that it is still impossible that the piece can cover anything of $B$ in this case when $n$ is large enough.

\Cref{lemma:areaBounds,lemma:covering2}
do not use any assumption on the geometry of the pieces, and can thus still be used.
Finally, in the proof of \Cref{lemma:boundDiff01}, we can now apply \Cref{lemma:stickingOut} as before.
Since the prefix and suffix of each curved segment are line segments of length $1$, we can again define triangles $W_j$ and apply \Cref{lemma:wedges} (the geometric core lemma), so the proof goes through unaltered.

\section{Linear gadgets}
\label{sec:gadgets}

In this section we are describing four types of gadgets called \emph{anchor}, \emph{swap}, \emph{split}, and \emph{adder}.
They all work with convex polygonal pieces, a polygonal container, and translations.
They also work when rotations are allowed and can thus be used for all packing variants studied in the paper.

For each gadget, we will define canonical placements and verify the four required lemmas of \Cref{sec:overview}.
Here we repeat the properties we need to verify for each gadget.
\begin{itemize}
\item
For every solution to $\Phi$, if the previously added pieces can be placed so that they encode the solution, then the same holds when the pieces of this gadget are added (\Cref{lem:preservation}).

\item
In a valid placement of all the pieces, if the earlier introduced pieces have an aligned placement, then the pieces of this gadget must have an almost-canonical placement (\Cref{lem:AlmostCanonicalPlacement}).

\item
In a valid placement of all the pieces, if the earlier introduced pieces have an aligned placement and the pieces of this gadget have an almost-canonical placement, then the pieces of this gadget must also have an aligned placement (\Cref{lem:alignedPl}).

\item
For each edge $(p_1,p_2)$ of the dependency graph $G_x$, where $p_1$ and $p_2$ are pieces of this gadget, we have $\enc{p_1}\leq \enc{p_2}$, i.e., the value encoded by $p_1$ is at most that encoded by $p_2$ (\Cref{lem:graphIneq}).
\end{itemize}

Some of the steps will be very similar for all the gadgets.
In order to avoid unnecessary repetition, we will handle the first two gadgets, the anchor and the swap, in greater detail than the subsequent gadgets.

\subsection{Anchor}
\label{sec:anchor}
Recall that each variable $x$ is represented by two wires $\overrightarrow{x}$ and $\overleftarrow{x}$ in the wiring diagram of the instance $\I$ of \wiredinv which we reduce to a packing instance.
Furthermore, the left endpoints of the wires are vertically aligned and occupy neighbouring diagram lines 
$\ell$ and~$\ell'$, as do the right endpoints.
In our packing instance, we cover each wire with variable pieces that can slide back and forth and thus encode the value of $x$, and the pieces covering one wire are called a \emph{lane}.
In order to make the value represented by the lane on $\overrightarrow{x}$ consistent with that of the lane on $\overleftarrow{x}$, we make an \emph{anchor} at both ends, which will propagate a push from one lane to the other.
Most of this section will be about the anchors at the left ends of the wires.
The anchors at the right ends will be handled in the end of the section.

When we use an anchor in our construction, we also define part of the boundary of the container.
Two of the three introduced pieces are variable pieces that will extend out through the right side of the gadget, and the remaining part of those will be defined as part of another gadget farther to the right, which will be described in other parts of the paper.
It is a general convention in our figures of gadgets that if a part of the boundary of the gadget is drawn with thick full segments, it will be part of the container boundary.
If part of the boundary is drawn with thick dashed segments, it means that the segments can be either part of the container boundary or part of the boundary of other pieces that have been introduced to the construction in earlier steps.

\begin{figure}
\centering
\includegraphics[page = 1]{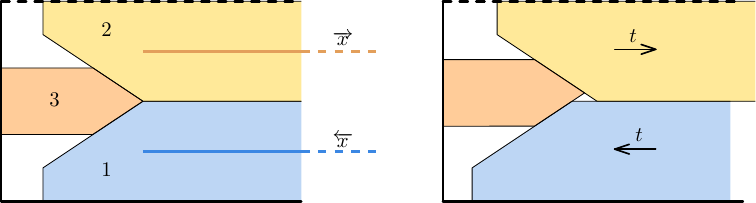}
\caption{Left: A simplified illustration of the anchor and how it is placed on top of the two wires representing a variable $x$.
Right: If the blue piece is pushed to the left, the yellow
piece must move by an equal amount to the right, and vice versa.
Color codes: $1$ blue, $2$ yellow, $3$ orange.
}
\label{fig:anchor-idea}
\end{figure}

\subsubsection*{Simplified anchor}
The anchor is meant to be a connection between the two lanes that represent a variable $x$; see \Cref{fig:anchor-idea} for an illustration.
The gadget consists of part of the boundary of the container and three pieces: orange, yellow, and blue.
The yellow and blue pieces are the two leftmost pieces on the lanes of $\overrightarrow{x}$ and $\overleftarrow{x}$, respectively.
The orange piece functions as a connection between the two lanes.
The idea is that if we move the blue piece to the left by $t$, then we have to move the yellow piece to the right by at least $t$ as well, and vice versa.

The segment bounding the gadget from below is part of the container boundary.
The segment bounding the gadget from above is part of the boundary of a piece introduced in an earlier anchor, except for the very first anchor, which will be bounded from above by the container boundary.

\begin{figure}
\centering
\includegraphics[page = 7]{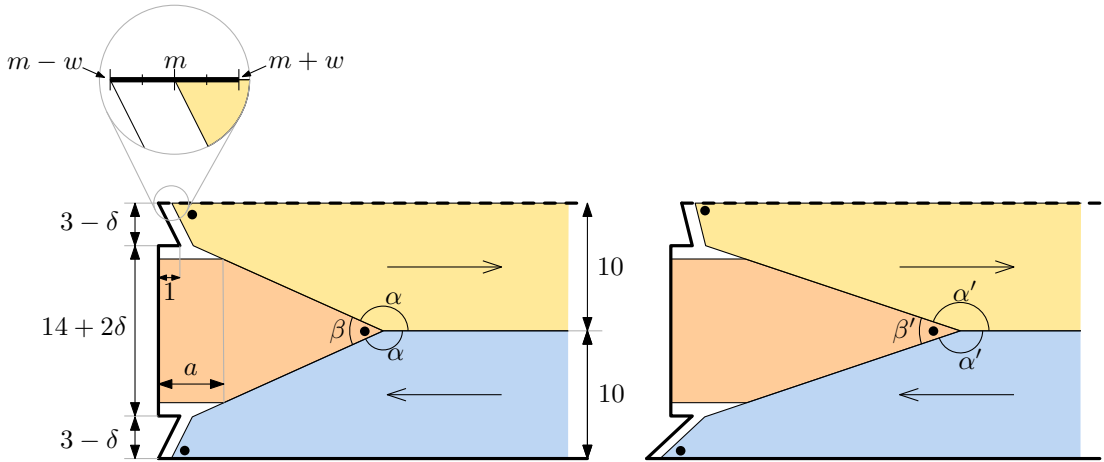}
\caption{Left: An illustration of the anchor gadget.
The placement of the pieces corresponds to the value $x=0$.
The arrows show the orientation of the pieces, and the dots show the fingerprinted corners.
The magnifying glass shows a scale of the correspondence between placements of the yellow piece and the encoded value of the variable $x$.
The length $a$ is $1+O(\delta)$ and depends on $\alpha$.
Right: Another instance of the gadget with other angles chosen.}
\label{fig:anchor-precise}
\end{figure}

\subsubsection*{The actual anchor}
See \Cref{fig:anchor-precise} for an illustration of the following description.
Recall that we need the slack added by each gadget to be only $O(\delta)$, where $\delta\mydef n^{-300}$.
We therefore design the boundary of the anchor to follow the pieces closely.
The yellow and blue pieces are fingerprinted on the boundary, as indicated by the dots.
The orange piece is fingerprinted in the wedge created by the yellow and blue pieces.

Lines that appear axis-parallel must be axis-parallel; those are important for the alignment.
The height of the orange piece is $14$.
The angle $\alpha$ is in the range $[3\pi/4,7\pi/8]$.
The lower bound ensures that a range of size at most $2\delta$ is needed for the orange piece, while the upper bound ensures that the length $a$ is only $1+O(\delta)$.
The angle $\alpha$ and the angles where the yellow and blue pieces are fingerprinted are not fixed, and we therefore have flexibility to choose the angles of the fingerprinted corners freely to obtain the unique angle property; two different choices of angles are shown in \Cref{fig:anchor-precise}.

\subsubsection*{Canonical placements and solution preservation}
As the next step, we define the set of canonical positions for the three pieces.
The yellow and blue pieces are variable pieces, and in the placement shown in \Cref{fig:anchor-precise}, both pieces encode the value $x=0$.
By definition, the canonical placements of each are all placements obtained by sliding the piece to the left or right by distance at most $\delta$ from the placement shown.
Recall that a placement of all pieces of a gadget is canonical if it is (1) valid, (2) canonical for each variable piece, and (3) have certain extra properties defined for each gadget individually.
For the anchor, we specify point (3) as having edge-edge contacts between the pieces and the container boundary as shown in the figure.

We are now ready to prove \Cref{lem:preservation} for the anchor, namely that the reduction preserves solutions.

\begin{proof}[Proof of Lemma \ref{lem:preservation} for the anchor]
Suppose that for a given solution $\mathbf x$ to the \etrinv formula $\Phi$, there exists a canonical placement of the previously introduced pieces $\p_{i-1}$ that encodes that solution.
To extend the placement to the pieces $\p_i$, i.e., with the yellow, blue, and orange piece of this anchor included, we place the yellow and blue pieces so that they represent the value of $x$ in $\mathbf x$.
This leaves room for the orange piece to be placed with the required edge-edge contacts to the blue and yellow pieces.
\end{proof}

\begin{figure}
\centering
\includegraphics[page = 3]{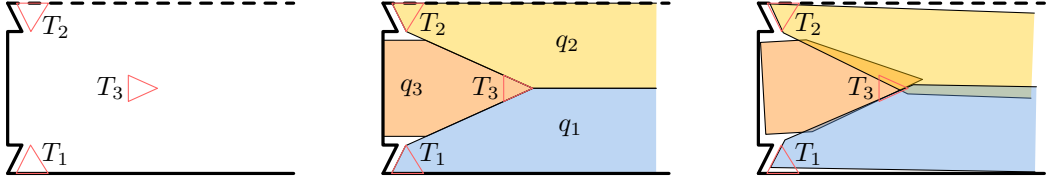}
\caption{The three triangles $T_1,T_2,T_3$ (shown a bit larger than they should be for clarity) are a witness that the intended placements (middle) are $O(\delta)$-sound.
We can therefore conclude that the pieces have an almost-canonical placement, as shown to the right (ignore here that the shown placement is not a valid placement).
}
\label{fig:anchor-Fingertriangle}
\end{figure}

\subsubsection*{Fingerprinting and almost-canonical placement}
We now prove \Cref{lem:AlmostCanonicalPlacement}.
Recall that in this lemma, we assume that the previous pieces $\p_{i-1}$ have an 
aligned $(i-1)\slack$-placement and we want to conclude that 
the new pieces $\p_i \setminus\p_{i-1}$, i.e., the three pieces of this anchor, must have an almost-canonical placement.

\begin{proof}[Proof of Lemma \ref{lem:AlmostCanonicalPlacement} for the anchor]
Since the pieces $\p_{i-1}$ have an 
aligned $(i-1)\slack$-placement, the gadget is bounded from above by an earlier introduced piece (unless $i=1$, in which case it is bounded by the boundary of the container).
We want to use \Cref{lem:multipleFingerprints} (Multiple Fingerprints) to prove that the three pieces have an almost-canonical placement.
For this we need to point out intended placements that are $\gadgets\slack$-sound; recall \Cref{def:sound}.
We choose the placement shown in \Cref{fig:anchor-precise}.
To certify that the intended placements are $\gadgets\slack$-sound, we need to point out three triangles $T_1,T_2,T_3$, such that $\pl {E_j}\s$ is $\gadgets\slack$-bounding $T_j$ for $j\in\{1,2,3\}$.
Here $\pl {E_j}\s$ is the empty space left by the pieces $\p_{i-1}$ and the intended placements of the pieces $q_1,\ldots,q_{j-1}$, where $q_1,q_2,q_3$ are the blue, yellow, and orange piece, respectively.
We choose the triangles as the tips of fingerprinted corners, as shown in \Cref{fig:anchor-Fingertriangle}.
The placements of the blue and yellow pieces are $\delta$-sound because the boundary of the container or the pieces of $\p_{i-1}$ follow the relevant edges of the respective triangles $T_1$ and $T_2$ within distance $\delta$.
The placement of the orange piece is $0$-sound since the boundaries of the blue and yellow pieces contain the edges of $T_3$.
As the intended placements are $\delta$-sound, we conclude using \Cref{lem:multipleFingerprints} (Multiple Fingerprints) that the displacement is $O(n^{-48})$, so the three pieces must have an almost-canonical placement.
\end{proof}

\subsubsection*{Aligned placement}
We show now \Cref{lem:alignedPl}, which assumes that the pieces in the anchor are almost-canonical and all previous pieces $\p_{i-1}$ have an aligned $(i-1)\slack$-placement.
The goal is to conclude that the pieces of this gadget have an aligned $i\slack$-placement, which requires the variable pieces to be correctly aligned and encode values in the range $[-i\slack,i\slack]$.

\begin{proof}[Proof of Lemma \ref{lem:alignedPl} for the anchor]
Consider a valid placement where the pieces~$\p_{i-1}$ have an aligned $(i-1)\slack$-placement and the pieces $\p_i\setminus\p_{i-1}$, i.e., the yellow, blue, and orange pieces of this anchor, have almost-canonical placements; see \Cref{fig:Anchor-aligning}.
We consider the alignment segment $\ell$ which has length $20$.
Since the placements are almost-canonical, it follows that $\ell$ crosses both of the parallel edges of the yellow and the blue pieces.
Since the distance between the segments in each pair is $10$, it follows that the segments must be horizontal.

\begin{figure}
\centering
\includegraphics[page = 5]{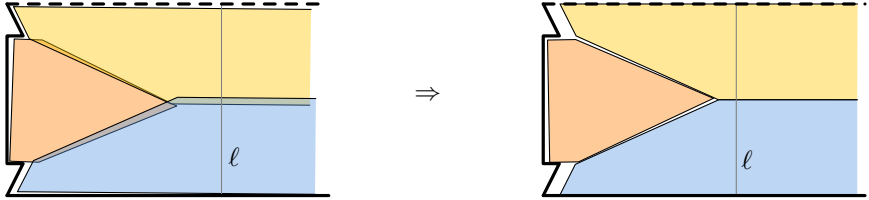}
\caption{The parallel edges of the yellow and blue pieces 
must intersect the alignment segment $\ell$.
Therefore, the segments must be horizontal.}
\label{fig:Anchor-aligning}
\end{figure}

In order to show that the pieces $\p_i$ have an aligned $i\slack$-placement, 
it remains to bound the horizontal displacement of the yellow and blue pieces as compared to the placements encoding the value $0$.
Consider the yellow piece $p$, the argument for the blue piece is similar.
We need to prove $\enc{p}\in[-i\slack,i\slack]$.
We will actually show the stronger statement that 
$\enc{p}\in[-\delta,\slack] \subset [-i\slack,i\slack]$.

Recall that $p$ is right-oriented.
The constraint that $p$ must be inside $\cont$ ensures that $\enc{p} \geq -\delta$.
Since the pieces have an almost-canonical placement by assumption, we know that the displacement as compared to the situation in \Cref{fig:anchor-precise} is at most $n^{-1}$.
It therefore follows that no other pieces than the orange piece can fit in the region to the left of the yellow and blue pieces.
We consider a canonical placement and analyze how much $p$ can slide to the right before too much empty space has been made in the region.
Observe that sliding $p$ to the right by $t$ creates empty space of area $10t$, since the height of the piece is $10$.
It therefore follows that we must have $t\leq\slack/10$.
This translates to 
$\enc{p}\leq \delta +\slack/10 \leq \slack$, and we are done.
\end{proof}

\begin{figure}
\centering
\includegraphics[page = 6]{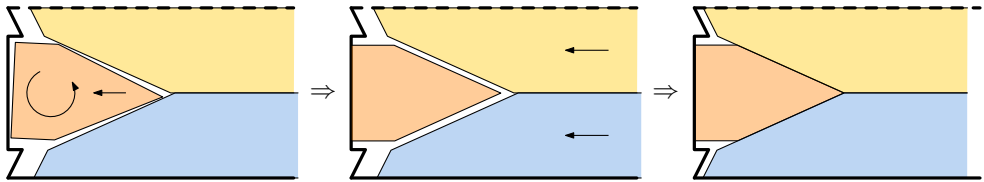}
\caption{If the pieces do not have a canonical placement, there is room for the orange piece to be oriented in the canonical way and get a vertical edge-edge contact with the container.
Then the blue and yellow pieces can be pushed to the left to obtain a canonical placement, which will decrease the value encoded by the yellow piece and increase the value encoded by the blue piece.}
\label{fig:anchor-Pushing}
\end{figure}

\subsubsection*{Edge inequalities}
Recall that \Cref{lem:graphIneq} states that for any edge $(p_1,p_2)$ in the dependency graph $G_x$, we have the inequality $\enc{p_1}\leq \enc{p_2}$, where $\enc{\cdot}$ denotes the value encoded by a piece.
We show this lemma now for the anchor.
\begin{proof}[Proof of Lemma \ref{lem:graphIneq} for the anchor]
Denote by $p_1$ and $p_2$ the blue and yellow piece, respectively.
The pieces induce the edge $(p_1,p_2)$ in $G_x$ and we have to show $\enc{p_1}\leq \enc{p_2}$.
We have that $\enc{p_1}= \enc{p_2}$ when the pieces have an edge-edge contact with the orange piece, as shown in \Cref{fig:anchor-precise}.
We have to show that it is not possible that $\enc{p_1} > \enc{p_2}$. 
This could only potentially happen if the pieces do not have a canonical placement.
However, some straightforward rotation arguments show that in this case we even have $\enc{p_1}< \enc{p_2}$; see \Cref{fig:anchor-Pushing}.
It follows that $\enc{p_1}\leq \enc{p_2}$.
\end{proof}

\begin{figure}
\centering
\includegraphics[page = 2]{figures/anchor.pdf}
\caption{The figure shows how we adjust the left wall of an anchor to ensure that we respect the range constraint.}
\label{fig:anchor-range}
\end{figure}

\subsubsection*{Range Constraints}
Recall that together with the variable $x$ is also given an interval $I(x)$ of one of the forms  $[-\delta,\delta],[-0,\delta],\{\delta\}$.
It is claimed in \Cref{lem:consistentCycle} that the pieces representing the variable $x$ encode a consistent value $\enc{K_x}$, which is in the range $I(x)$ due to the design of the anchor.
This is ensured by adjusting the left wall of the anchor as showed in \Cref{fig:anchor-range}, so that the value encoded by the yellow piece is bounded from below according to $I(x)$.
Since the blue piece is restricted to encode a value of at most $\delta$, it then follows that $\enc{K_x}\in I(x)$.

\subsubsection*{Staircases of anchors}
As the next step, we describe how we organize all the anchors of the construction.
Recall that the wires of the wiring diagram appear and disappear in the order $(\overrightarrow{x_1},\overleftarrow{x_1}),\ldots,(\overrightarrow{x_n},\overleftarrow{x_n})$ from left to right in a staircase-like fashion.
Therefore, we also stack the anchors onto one another as displayed in \Cref{fig:Anchor-Stitching}, so that the boundary of each stack appears similar to a staircase.
This ensures that the container will be $4$-monotone, which is used when proving hardness for packing into a square container in \Cref{sec:SquareContainer}.
The rest of the construction will be in between these two staircases.
Anchors in the right staircase are treated in the following paragraph.

\begin{figure}
\centering
\includegraphics[page = 4]{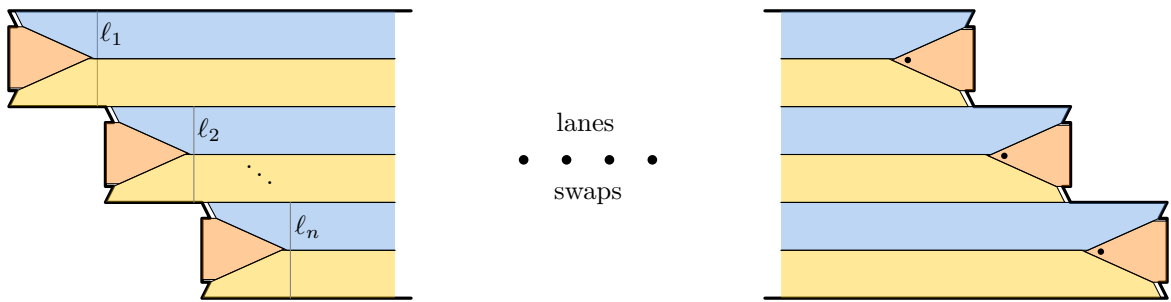}
\caption{Anchors are placed one on top of the other to form two staircases.
The alignment segments $\ell_1,\ldots,\ell_n$ are used to align the blue and yellow pieces of the anchors in this order.
The dots in the orange pieces in the right staircase mark the fingerprinted corners (all other pieces are fingerprinted in other gadgets).
}
\label{fig:Anchor-Stitching}
\end{figure}

\subsubsection*{Anchors in the right staircase}
For an anchor on the right side, the entering yellow and blue variable pieces have been started in other gadgets farther to the left, and we only add the orange piece.
We define the canonical placements in an analogous way as for the left anchors.
The proof of \Cref{lem:preservation} (solution preservation) is trivial since no new variable pieces are introduced.
The orange piece in a right anchor is fingerprinted as in the left anchors, and \Cref{lem:AlmostCanonicalPlacement} follows.
\Cref{lem:alignedPl} is trivial since no new variable pieces are introduced.
The edge inequality of \Cref{lem:graphIneq} is proven as for the left anchors.

\subsection{Swap}
\label{sec:swap}

\begin{figure}
\centering
\includegraphics[page = 1]{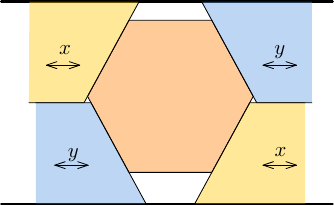}
\caption{The simplified swap. The yellow pieces representing $x$ are either both right- or both left-oriented.
Likewise for the blue pieces representing $y$.
For color codes, see \Cref{fig:swap-precise}.}
\label{fig:swap-idea}
\end{figure}

\subsubsection*{Idea}
Recall that in the wiring diagram, the wires may cross each other (see \Cref{fig:WiringDiagram}).
On top of such a crossing, we build a swap.
The purpose of the swap is thus to make a crossing of two neighbouring lanes of pieces.
To get intuition about how the gadget works, consider \Cref{fig:swap-idea}.
The yellow pieces encode a variable $x$ and the blue pieces encode a variable $y$.
It is possible that $x=y$, which will happen only when the two wires $\overrightarrow x,\overleftarrow x$ cross each other.
Therefore, the yellow pieces will have the opposite orientation of the blue pieces in this special case.

We want to show that when the pieces have edge-edge contacts to the orange piece, the variables are encoded consistently, so that the lanes have been swapped.
The key observation is that if the left blue piece pushes
to the right and the yellow pieces are fixed, then the orange piece
will slide along the yellow pieces
and push the right blue piece by an equal amount. 
Similarly, if the left yellow piece pushes
to the right, then the orange piece
will slide along the blue pieces,
and will push the right yellow piece by an equal amount.
For this to work the two opposite edges of the
orange piece in contact with the blue pieces
must be parallel and similarly 
the two other edges in contact with the yellow pieces 
must be parallel as well.
The conclusion is that for all placements of the orange piece where it has edge-edge contacts to all yellow and blue pieces, the horizontal distance between the two yellow pieces is the same, as is the distance between the two blue pieces.

\begin{figure}
\centering
\includegraphics[page = 2]{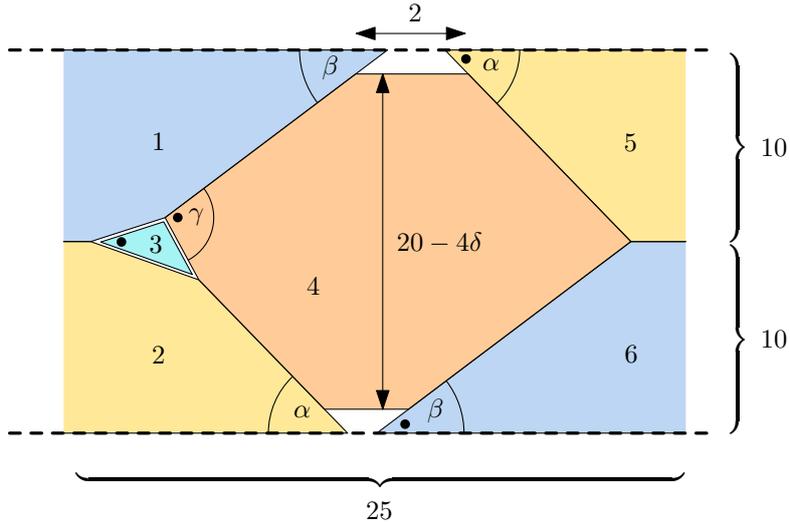}
\caption{The actual swap. Color codes: $1$ blue, $2$ yellow, $3$ turquoise, $4$ orange, $5$ yellow, $6$ blue.}
\label{fig:swap-precise}
\end{figure}

\subsubsection*{The actual swap}
See \Cref{fig:swap-precise} for the 
following description.
The swap consists of six pieces. 
Those are the left and right yellow piece, the left and 
right blue piece, the orange piece and the turquoise
piece.

The left yellow and blue pieces extend outside the gadget to the left, where they have been introduced in other gadgets added earlier to the construction.
Likewise, the right yellow and blue pieces extend outside the gadget to the right, where their ends will be defined in other gadgets added later to the construction.
The gadget is bounded by horizontal segments from above and below, and these are either part of the container boundary or the boundary of other pieces that have been added earlier.
As is seen from the figure, the yellow pieces have corners with the same angle $\alpha$, and the blue pieces have corners with the same angle $\beta$.
The precise value of those angles is chosen freely for the fingerprinting.
Similarly, the orange and turquoise piece have a corner with a flexible angle that can be chosen freely for fingerprinting. 
The orange piece has a horizontal top and bottom edge of length~$2$.

\begin{figure}
\centering
\includegraphics[page=3]{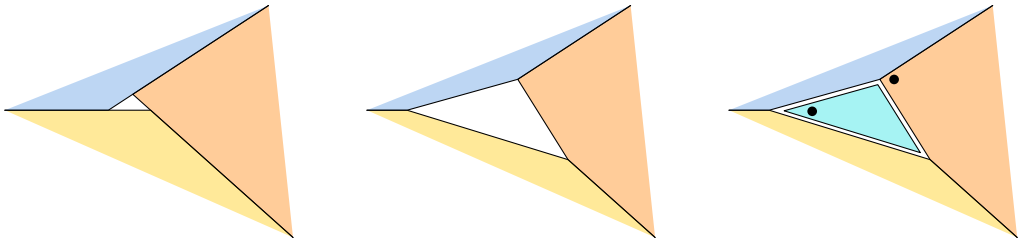}
\caption{Left: the construction without the turquoise piece.
Middle: We remove part of the pieces, leaving a triangular empty space with edges of length more than $1$.
Right: The turquoise piece is designed to fit in the triangle of empty space, but it is surrounded by empty space of thickness $\delta$ in every direction. This leaves enough wiggle room for the 
other pieces to encode all solutions to $\Phi$.}
\label{fig:swap-turquoise}
\end{figure}

For the way we construct the turquoise piece, we refer
the reader to \Cref{fig:swap-turquoise}.
The role of the turquoise piece is solely
to be able to fingerprint the orange piece.
It has no direct use in the functionality of
the swap.
If we avoid using the turquoise piece and instead use pieces as in the simplified \Cref{fig:swap-idea}, the left corner of the orange piece, that we want to fingerprint in the wedge between the left yellow and blue pieces, has angle $\alpha+\beta$, and thus the unique angle property is violated; indeed, the right yellow and blue pieces can cover the wedge equally well.
The left endpoints of the horizontal segments of the orange piece cannot be used for fingerprinting, because the angles are more than $\pi/2$.
It may be tempting to believe that one could avoid the turquoise piece, but we could not find such a way using only convex pieces.
Note that we have only two degrees of freedom without the turquoise  piece, as certain edges must be parallel.
This is not enough to choose three angles freely for fingerprinting.

\subsubsection*{Canonical placements and solution preservation}
Recall that the yellow and blue pieces represent the variables $x$ and $y$, respectively.
In \Cref{fig:swap-precise}, all variable pieces encode the value $0$.
For a placement of the six pieces to be canonical, we require that the yellow and blue pieces have edge-edge contacts to the orange piece as shown and that the turquoise piece is enclosed by the left yellow and blue pieces and the orange piece as shown.

We prove now \Cref{lem:preservation}, about solution preservation, for the swap.
\begin{proof}[Proof of Lemma \ref{lem:preservation} for the swap]
Suppose that for a given solution~$\mathbf x$ to the \etrinv formula $\Phi$, there exists a canonical placement of the previously introduced pieces $\p_{i-1}$ that encodes that solution.
To extend the placement to the pieces $\p_i$, i.e., with the yellow, blue, turquoise and orange pieces of this swap included, we place the yellow and blue pieces so that they represent the value of $x$ and $y$ in $\mathbf x$.
This leaves room for the orange piece to be placed with the required edge-edge contacts to the blue and yellow pieces.
The turquoise piece has enough wiggle room to be placed correctly as well. 
\end{proof}

\subsubsection*{Fingerprinting and almost-canonical placement}

Here we are less detailed in the application of \Cref{lem:multipleFingerprints} (Multiple Fingerprints) than in the section about the anchor in order to avoid unnecessary repetition.

Recall that in \Cref{lem:AlmostCanonicalPlacement}, we assume that the previous pieces $\p_{i-1}$ have an 
aligned $(i-1)\slack$-placement and we want to conclude that the new pieces $\p_i$ have an almost canonical placement.

\begin{proof}[Proof of Lemma \ref{lem:AlmostCanonicalPlacement} for the swap]
Recall that $\p_{i-1}$ consists of all pieces introduced previously, including the left blue and yellow pieces, whereas $\p_i$ additionally includes the orange, turquoise and the right blue and yellow pieces.
We now fingerprint the turquoise, orange, right blue, and right yellow pieces in this order.
We use the intended placements of those pieces shown in \Cref{fig:swap-precise} (corresponding to the case where the variables encode the value $0$) and fingerprint the corners marked with dots.
These intended placements are $O(\delta+(i-1)\slack)$-sound for any aligned $(i-1)\slack$-placement of the pieces $\p_{i-1}$, since the empty space has thickness $O(\delta)$ and the left blue and yellow pieces have displacement of at most $(i-1)\slack$ compared to the shown placement, by the assumption of \Cref{lem:AlmostCanonicalPlacement}.
We conclude using \Cref{lem:multipleFingerprints} (Multiple Fingerprints) that the displacement is $O(n^{-48})$, so the pieces must have an almost-canonical placement.
\end{proof}

\subsubsection*{Aligned placement}
\begin{proof}[Proof of Lemma \ref{lem:alignedPl} for the swap]
Consider a valid placement where the pieces~$\p_{i-1}$ have an aligned $(i-1)\slack$-placement and the pieces of this swap have an almost-canonical placement; see \Cref{fig:swap-aligning} (left).
Similarly as in the proof for the anchor, we consider the alignment segment $\ell_1$ and get that the right yellow and blue edges are in horizontal position.

\begin{figure}
\centering
\includegraphics[page = 4]{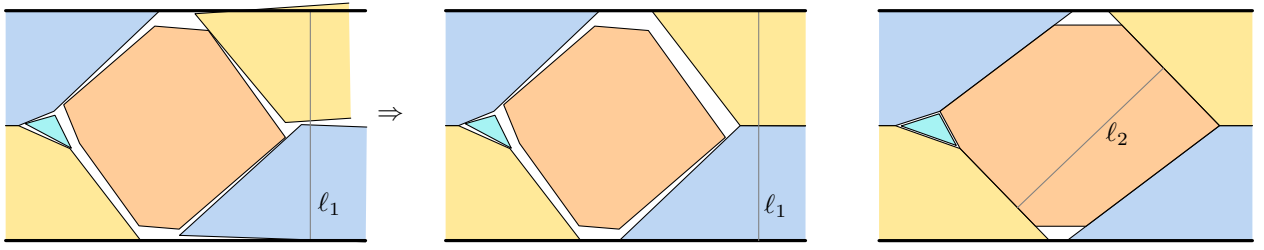}
\caption{Left: Due to the alignment segment $\ell_1$,
the right yellow and blue pieces must be axis-parallel.
Right: Due to the alignment segment $\ell_2$, the orange piece must also be in a canonical placement.}
\label{fig:swap-aligning}
\end{figure}

In order to show that the pieces $\p_i$ have an aligned $i\slack$-placement, it remains to bound the horizontal displacement of the right yellow and blue pieces as compared to the placements encoding the value $0$.
Consider the yellow piece, the argument for the blue is similar.
Let $p_1$ be the left yellow piece and $p_2$ the right one.
By assumption, we have $\enc{p_1}\in [-(i-1)\slack,(i-1)\slack]$, and we need to prove $\enc{p_2}\in[-i\slack,i\slack]$.
Suppose that the yellow pieces are right-oriented; the other case is similar.
From the proof of \Cref{lem:graphIneq} for the swap (given below), we have that $\enc{p_1}\leq\enc{p_2}$, so we just need to show $\enc{p_2}\leq i\slack$.
It is therefore sufficient to show $\enc{p_2}\leq\enc{p_1}+\slack$.
This follows as in the proof for the anchor by considering how much the piece~$p_2$ can be slid to the right before the empty space thus created gets larger than $\slack$.
\end{proof}

\subsubsection*{Edge inequalities}
\begin{proof}[Proof of Lemma \ref{lem:graphIneq} for the swap]
In the swap, the yellow pieces induce an edge in the graph $G_x$ and the blue induce an edge in $G_y$.
In the special case that $x=y$, the blue and yellow pieces have opposite orientations, so according to the rule about when to add edges to the dependency graph $G_x$, there will also be an edge between the left pieces and one between the right pieces.
We make an exception to the rule and do not add these edges to the dependency graph.

We now prove the edge inequality for the edge between the yellow pieces; the argument for the blue pieces is analogous.
Suppose that the yellow pieces are right-oriented and let $p_1$ be the left and $p_2$ be the right, so that they induce the edge $(p_1,p_2)$ of $G_x$.
The argument is analogous if they are left-oriented.
Recall that $\enc{p_1}= \enc{p_2}$ exactly when the pieces have the horizontal distance shown in \Cref{fig:swap-precise}, where the orange piece has edge-edge contacts with both yellow pieces.
Since the pieces have an almost-canonical placement by assumption, the displacement of the orange piece is at most $n^{-1}$.
It therefore follows that the same pair of parallel edges of the orange piece will prevent the yellow pieces from being closer than in the figure, so we have $\enc{p_1}\leq \enc{p_2}$.
\end{proof}

\subsection{Split}
\label{sec:split}
The purpose of the split is to make an extra lane representing a variable $x$.
This will be needed in order to lead lanes into the gadgets for the adders and curvers.

\begin{figure}
\centering
\includegraphics[page = 2]{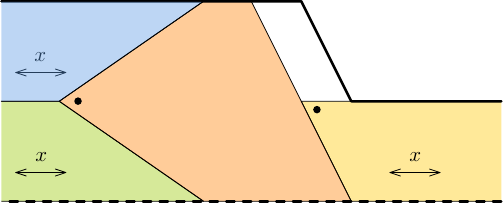}
\caption{The split. 
The yellow, blue, and green pieces represent the variable $x$ and are either all right-oriented or all left-oriented.
Color codes: $1$ yellow, $2$ orange, $3$ blue, $4$ green.}
\label{fig:split-precise}
\end{figure}

\subsubsection*{Description}
See \Cref{fig:split-precise} for an illustration of the split.
We always split the topmost lane in the construction, so that there is room to expand above with one more lane.
Therefore, the split is bounded from above by the boundary of the container and from below by pieces in the second-highest lane, which have been added to the construction earlier.

The yellow piece extends outside the gadget to the left, where it has been introduced to the construction in an earlier step.
The yellow piece is in contact to the right with an orange piece with height $20$, i.e., twice the height of a lane.
The orange piece is in contact to the right with the blue and green pieces, which extend outside the gadget to the right, where they will enter other gadgets defined later in the construction.
Each of the orange, blue, and green piece has a corner with an angle that can be freely chosen for fingerprinting.

\subsubsection*{Canonical placements and solution preservation}
The yellow, blue, and green pieces are all variable pieces encoding a variable $x$.
The position as indicated in \Cref{fig:split-precise} show the situation where they all encode the value $0$.
The placement of all four pieces is canonical if they have the shown edge-edge contacts.
\Cref{lem:preservation} (solution preservation) follows trivially.

\subsubsection*{Fingerprinting and almost-canonical placement}
The proof of \Cref{lem:AlmostCanonicalPlacement} for the split is completely analogous to the one for the swap.

\subsubsection*{Aligned placement}
All pieces must be aligned because of the container boundary and the edge of a piece from $\p_{i-1}$ bounding the gadget from below.
As for the swap, to get \Cref{lem:alignedPl} for the split, we need to verify that the blue and green piece encodes a value that is at most $\slack$ larger than that encoded by the yellow piece.
The proof is analogous as the one for the swap.

\subsubsection*{Edge inequalities}
In the split, we get two edges in the dependency graph.
If the pieces are right-oriented, we have edges from the yellow to green and to the blue pieces.
Otherwise, we have edges from green to yellow and blue to yellow.
That the edges satisfy the edge inequality (\Cref{lem:graphIneq}) follows by construction, since the orange piece restricts how close the yellow piece can get to the green and blue pieces.

\subsubsection*{Left-split}
We explained above how to split an entering variable piece into two exiting pieces.
We will also need the opposite, i.e., splitting an exiting piece into two entering pieces (or, in other words, merging two entering pieces into one exiting piece).
It is straightforward to construct such a gadget by similar principles, as shown in~\Cref{fig:split-left}.
In order to avoid disambiguation, we will occasionally denote these splits as \emph{right-splits} and \emph{left-splits}, respectively.

\begin{figure}
\centering
\includegraphics[page = 1]{figures/split.pdf}
\caption{The left-split.
}
\label{fig:split-left}
\end{figure}

\subsection{Adder}
\label{sec:addition}

\begin{figure}
\centering
\includegraphics[page = 1]{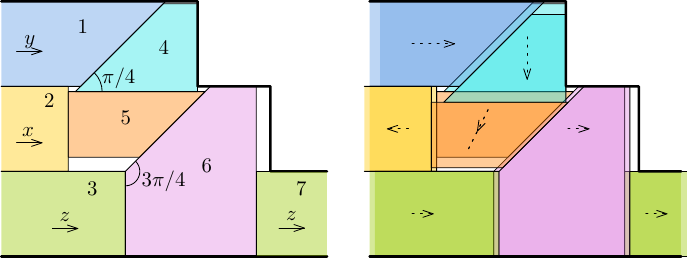}
\caption{Left: The simplified adder. 
Right: The variable $x$ is increased by two units. The variable $y$ is decreased by one unit. 
Thus the variable $z$ is increased by one unit.
Color codes: $1$ blue, $2$ yellow, $3$ green, $4$ turquoise, $5$ orange, $6$ pink, $7$ green.}
\label{fig:addition-idea}
\end{figure}

\subsubsection*{Idea}
For the following description, see~\Cref{fig:addition-idea}.
Here we explain the principle behind the adder for $x+y\leq z$.
The adder for $x+y\geq z$ is identical, but has the entering variable pieces for $x,y,z$ oriented to the left instead of to the right.
The adder has three entering variable pieces (yellow, blue, left green), representing three variable ($x,y,z$).
There is also one exiting green variable piece representing $z$.
In addition to this, there are three pieces that are not variable pieces (turquoise, orange, pink).
The role of the turquoise piece is to transform horizontal motion to the right of the blue piece to downwards vertical motion.
Motion of the orange piece downwards or to the right both make the pink piece move to the right by an equal amount.
Therefore, when the blue and yellow pieces push to the right, the pink piece will be pushed to the right by the sum of the two motions.

\begin{figure}
\centering
\includegraphics[page = 2]{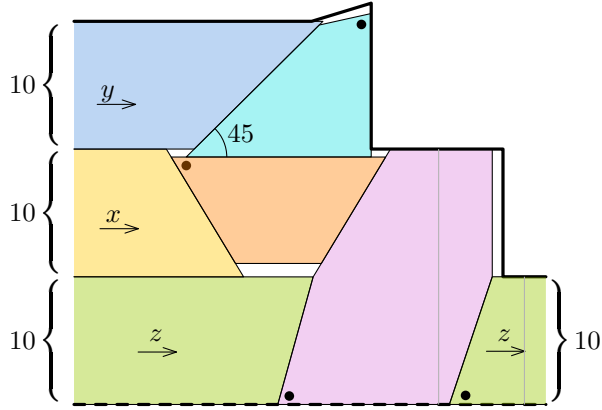}
\caption{The actual adder.}
\label{fig:addition-precise}
\end{figure}

\subsubsection*{Actual description}
The actual adder is shown in \Cref{fig:addition-precise}.
The figure shows the situation where all the variable pieces encode the value $0$.
The actual adder varies in several points from the simplified version in \Cref{fig:addition-idea}, which is needed in order to use fingerprinting.
The orange, turquoise, pink, and right green pieces must be fingerprinted, so they cannot have only nice angles as in the simplified gadget.
The turquoise piece can easily be fingerprinted using the top right corner.
The orange piece is fingerprinted in the upper left corner.
The pink and right green pieces are fingerprinted at their lower left corners.
In each case, the angle can be chosen freely by changing the slope of the edge-edge contact with the piece to the left.

\begin{figure}
\centering
\includegraphics[page = 7]{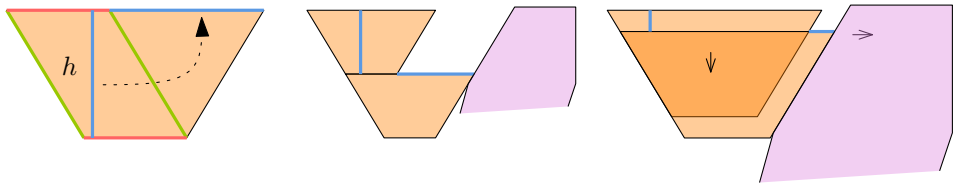}
\caption{In each of these three pictures, thick segments drawn with the same color are equally long.
Left: The construction of the orange piece.
Middle: Pushing the orange piece down by its height makes it push the pink piece by the same amount.
Right: This also holds when pushing less.}
\label{fig:addition-orange}
\end{figure}

As a consequence of changing the angle  on the top left of the orange piece to something else than $\pi/4$, we also have to change the angle of the top right corner.
See \Cref{fig:addition-orange} for an illustration of the  following.
First we describe how to construct the orange piece and then we explain why it actually works.
The orange piece is a trapezoid with horizontal bottom and top edges and height $10 - 10\delta$.
The left edge of the orange piece is parallel to the right edge of the yellow piece.
The length of the top edge should be at least as long as the bottom edge of the turquoise piece.
The length of the bottom piece is the length of the top edge minus the height $10 - 10\delta$.
The right edge is determined by the description of all the other edges.

We need to explain why pushing the blue piece to the right by some amount $t>0$ will push the pink piece to the right by $t$ as well; see \Cref{fig:addition-orange}.
It is helpful to consider the case where $t$ equals the height $h$ of the orange piece (even though there is not room for pushing the pieces so much).
This push of the blue piece will push the orange piece down by $h$.
Since the length of the top edge of the orange piece equals the length of the bottom edge plus $h$, the pink piece will be pushed to the right by $h$ as well.
All the pieces move linearly, so it will also be the case for smaller values of $t$.

\begin{figure}
\centering
\includegraphics[page = 5]{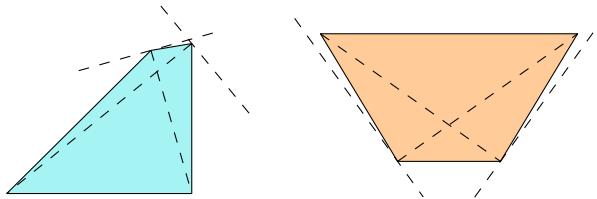}
\caption{The lines through corners and perpendicular to diagonals are disjoint from the interiors.}
\label{fig:diagonal-prop}
\end{figure}

We furthermore want the property that for each of the top corners of the turquoise piece, the line through the corner and perpendicular to the diagonal of the corner is a tangent to the piece (i.e., the line intersects the piece only at the corner).
The same must hold for the bottom corners of the orange piece; see \Cref{fig:diagonal-prop}.
It is easy to choose the fingerprinted angles so that the pieces have this property.
We will use the property in the proof of \Cref{lem:addition} given later to conclude that if the turquoise or the orange piece is not aligned as in the canonical placements, they will take up too much space.

\subsubsection*{Canonical placements and solution preservation}
The canonical placements are defined as the placements with the edge-edge contacts as shown in \Cref{fig:addition-precise}.
That there is a canonical placement encoding any given solution to $\Phi$ (\Cref{lem:preservation}) follows by construction.

\subsubsection*{Incorporating the gadget}
In order to incorporate the adder into the construction, we use two splits and eight swaps in order to organize the in-going and out-going lanes to the gadget; see \Cref{fig:addition-stitching} for a schematic illustration.
\begin{figure}
\centering
\includegraphics[page = 4]{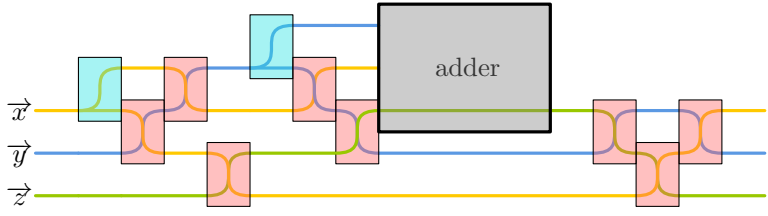}
\caption{Incorporation of the adder. Splits are marked in turquoise, swaps are marked red.}
\label{fig:addition-stitching}
\end{figure}

\subsubsection*{Fingerprinting and almost-canonical placement}
The proof of \Cref{lem:AlmostCanonicalPlacement} for the adder is completely analogous to the one for the swap.

\subsubsection*{Aligned placement}
The right green piece must be aligned because of the container boundary and the edge of a piece from $\p_{i-1}$ bounding the gadget from below.
As for the swap, to get \Cref{lem:alignedPl} for the adder, we need to verify that the right green piece encodes a value that is at most $\slack$ larger than that encoded by the left green piece.
The proof is analogous as the one for the swap.

\subsubsection*{Edge inequalities}
In the adder, the left and right green pieces induce an edge in the dependency graph.
That the edge satisfies the edge inequality (\Cref{lem:graphIneq}) is proven as for the swap.

\subsubsection*{The adder works}
In this paragraph we prove \Cref{lem:addition}.
Here we are considering an aligned $\gadgets\slack$-placement, and we need to prove that for every addition constraint $x+y= z$ of $\Phi$, we have $\enc{K_x}+\enc{K_y}=\enc{K_z}$.

\begin{proof}[Proof of Lemma \ref{lem:addition}]
We prove the inequality $\enc{K_x}+\enc{K_y}\leq \enc{K_z}$.
The other inequality follows from analogous arguments about the gadget for $x+y\geq z$.
Let $p_x,p_y,p_{z1}$ be the yellow, blue, and left green pieces, and $p_{z2}$ be the right green piece.

Note first that due to the way we incorporate the adder, there are paths $P_x$ and $P_y$ in $G_x$ and $G_y$ attached to and directed away from the cycles $K_x$ and $K_y$, and the vertices farthest away from the cycles are the pieces $p_x$ and $p_y$.
Furthermore, the pieces $p_{z1}$ and $p_{z2}$ are two consecutive vertices on the cycle $K_z$, so by \Cref{lem:consistentCycle}, we have $\enc{p_{z1}}=\enc{p_{z2}}=\enc{K_z}$.

\begin{figure}
\centering
\includegraphics[page = 6]{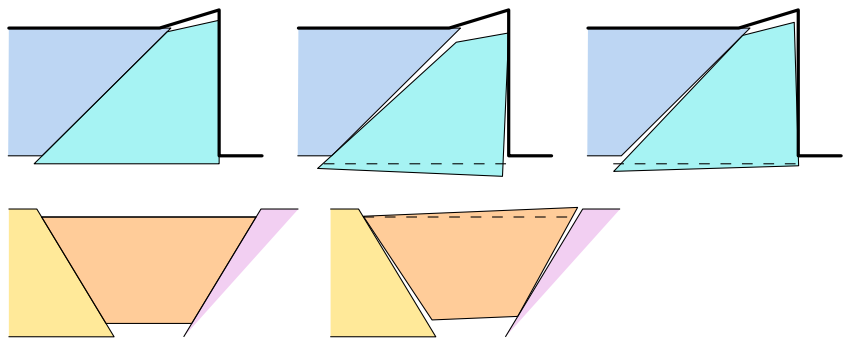}
\caption{Top: If the turquoise piece does not have edge-edge contacts to the blue piece and the boundary, then all points of the lower edge have strictly smaller $y$-coordinates than otherwise.
Bottom: If the orange piece does not have edge-edge contacts to the yellow and pink pieces, then the top edge is strictly higher than otherwise.
The dashed segments are the edges in the canonical situations shown to the left.}
\label{fig:rotate-prop}
\end{figure}

We argue that if $\enc{p_x}+\enc{p_y}= \enc{K_z}$, the only way to place the pieces is the canonical way.
This excludes the situation $\enc{p_x}+\enc{p_y}> \enc{K_z}$, since there the turquoise, orange, and pink pieces have strictly less space.
Consider first the turquoise piece.
It is straightforward to check that if it does not have edge-edge contacts to the blue piece and the container boundary, then the lower edge will be strictly below the placement of the edge where these contacts were present, in the sense that every point on the segment will have a placement with a smaller $y$-coordinate.
This can be seen in \Cref{fig:rotate-prop} and is due to the property that the lines through corners perpendicular to diagonals are tangents, as shown in \Cref{fig:diagonal-prop}.
Similarly for the orange piece, if it does not have edge-edge contacts with the yellow and pink pieces, the top edge is strictly above the edge in the placement where it does have these contacts.
In the canonical placements, the bottom edge of the turquoise piece is contained in the top edge of the orange piece.
Therefore, in any other placement of the turquoise and orange pieces, there will be points on the turquoise edge below the orange edge, which makes the placement invalid.
We can now conclude that in general, we must have $\enc{p_x}+\enc{p_y}\leq \enc{K_z}$.

To finish the proof, note that since the paths $P_x$ and $P_y$ are directed away from the cycles $K_x$ and $K_y$, we get from the edge inequalities (\Cref{lem:graphIneq}) that
\[
\enc{K_x}+\enc{K_y}\leq \enc{p_x}+\enc{p_y}\leq\enc{K_z}.
\qedhere
\]
\end{proof}

\section{Curvers}
\label{sec:Curved}

We are going to construct two different types of curvers.
Both curvers will be described in a convex and a concave version.
To prove $\ER$-hardness of the problem \pack\convexpolygon\polygon\rotation, we will use the convex and the concave \emph{swing}.
For the problems \pack\convexpolygon\curved\translation\ and \pack\curved\polygon\translation, we will instead use variants of the convex and the concave \emph{gramophone}.

\subsection{Swing}
\label{sec:swing}

The swing is reminiscent of the swings found on playgrounds, especially the twin version where two kids face each other and must collaborate to get the swing going; see \Cref{fig:swing-principle}.
In the preliminary version of this paper~\cite{abrahamsen2020framework}, we also used other playground inspired gadgets, namely the \emph{teeter-totter} and the \emph{seesaw}, which could encode a convex and a concave constraint, respectively.
However, the seesaw needed non-convex pieces, and since our new swing can also encode a convex constraint, we need neither of them anymore.
The orange piece in the swing plays the role of the swing itself and can rotate around the top corner $o$, while the blue and yellow pieces correspond to the children, pushing the swing from each side.
Surprisingly, by adjusting the height of the contact points $c_x$ and $c_y$ of the orange piece with the other pieces, we can make versions of the swing that enforce convex and concave constraints, as we like.

\begin{figure}[b]
\centering
\includegraphics[page = 1]{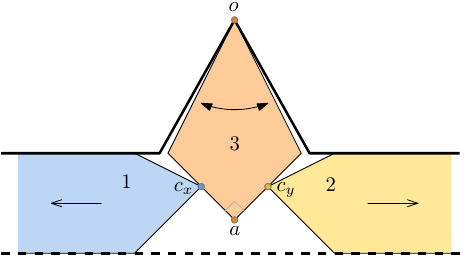}
\caption{A simplified drawing of the swing.
The corner $a$ is a right angle.
Color codes: $1$ blue, $2$ yellow, $3$ orange.
}
\label{fig:swing-principle}
\end{figure}

\subsubsection*{Principle}

For the following argument, see \Cref{fig:swing-geometry}.
We assume for simplicity that $\vert oa\vert =1$ and consider the situation where the orange piece is rotated so that the segment $oa$ makes an angle of $\varphi\in(-\pi/4,\pi/4)$ with the vertical axis.
The corners of the blue and yellow pieces that are in contact with the orange piece lie on the horizontal line $\ell_z$ which is $z>0$ below the horizontal line $\ell_o$ through $o$.
We consider the extensions of the edges of the orange piece incident at $a$ and construct the intersections $c_x$ and $c_y$ with $\ell_z$.
We want to find formulas for horizontal distances $x_1$ and $y_1$ from $o$ to $c_x$ and $c_y$, respectively, since the rightmost corner of the blue piece will be at $c_x$ or to the left, and the leftmost corner of the yellow piece will be at $c_y$ or to the right, so $x_1$ and $y_1$ will lead to lower bounds for the values encoded by the pieces, as the pieces are oriented away from $c_x$ and $c_y$, respectively.
(In the figure, $c_x$ is in the exterior of the orange piece and would thus not be a bound on the placement of the blue piece, but when using the gadget, the point will always be on the boundary of the orange piece, since it will only be slightly rotated.)
The distances $x_1$ and $y_1$ are signed distances, so that $x_1$ is positive if and only if $c_x$ is to the left of $o_z$ and $y_1$ is positive if and only if $c_y$ is to the right of $o_z$.

\begin{figure}
\centering
\includegraphics[page = 2]{figures/swing.pdf}
\caption{The geometry of the swing.
}
\label{fig:swing-geometry}
\end{figure}

Note that $\vert af\vert = \cos \varphi$.
Considering the right triangles $afb$ and $afc$, we then obtain
\begin{align*}
\vert ab\vert & = \frac{\cos\varphi}{\cos(\pi/4+\varphi)}; \\
\vert ac\vert & = \frac{\cos\varphi}{\cos(\pi/4-\varphi)}.
\end{align*}

We then get
\begin{align*}
x_0 & = \vert ob \vert = \vert fb\vert -\vert of\vert = \sin(\pi/4+\varphi) \vert ab\vert -\vert of\vert = \tan(\pi/4+\varphi)\cos \varphi-\sin \varphi; \\
y_0 & = \vert oc \vert = \vert fc\vert +\vert of\vert = \sin(\pi/4-\varphi) \vert ac\vert +\vert of\vert = \tan(\pi/4-\varphi)\cos \varphi+\sin \varphi.
\end{align*}

We observe that if $\ell_z$ coincides with $\ell_o$, then $x_1=x_0$, while if $\ell_z$ passes through $a$, then $x_1=-\sin\varphi$.
Since the vertical distance from $\ell_o$ down to $a$ is $\cos \varphi$, we define $v\mydef z/\cos \varphi$ and get $x_1$ by interpolating between $x_0$ and $-\sin\varphi$ using the factor $v$, and we can express $y_1$ in an analogous way.
By standard trigonometric identities, we obtain the expressions
\begin{align*}
x_1 & = x_0 \cdot (1-v)-\sin\varphi \cdot v=\frac{1 + z(-\cos\varphi - \sin\varphi)}{\cos\varphi - \sin\varphi}; \\
y_1 & = y_0 \cdot (1-v)+\sin\varphi \cdot v=\frac{1 + z(-\cos\varphi + \sin\varphi)}{\cos\varphi + \sin\varphi}.
\end{align*}

\begin{figure}
\centering
\includegraphics[page = 1, width=0.7\textwidth]{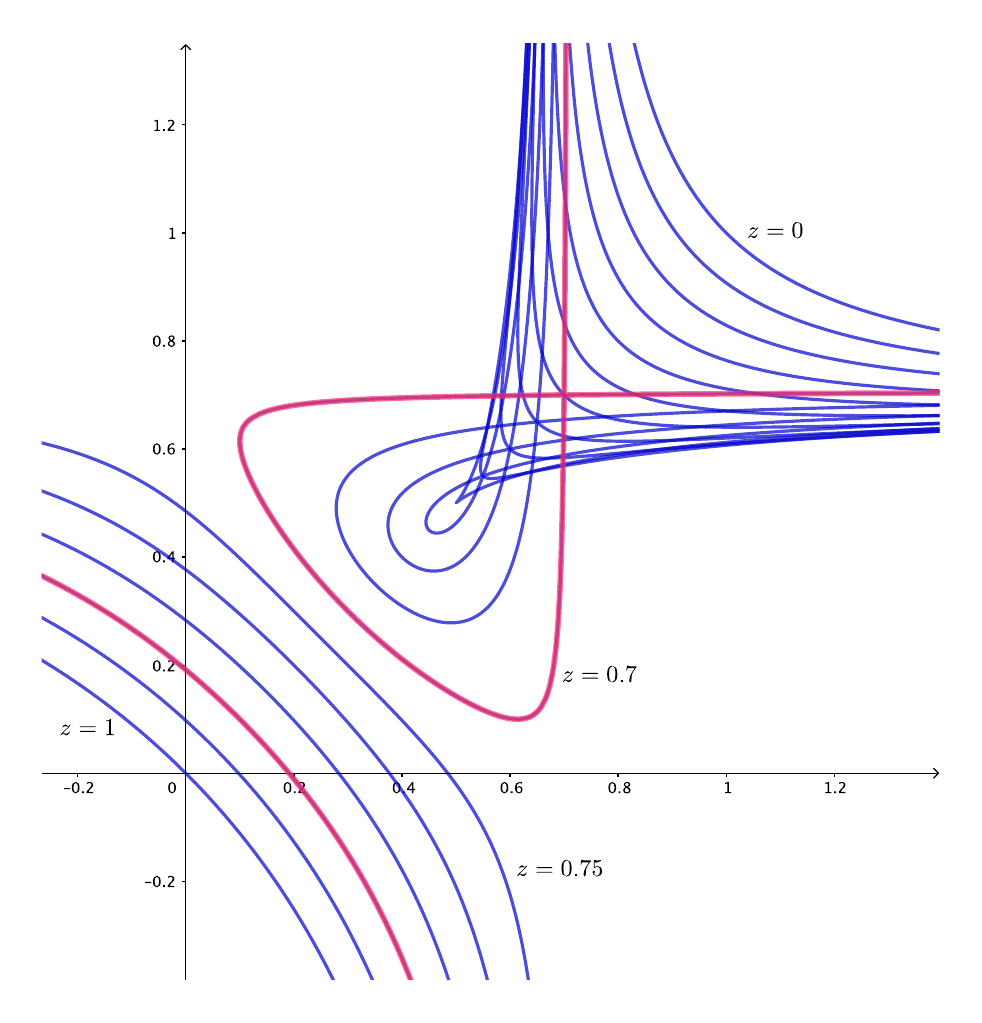}
\caption{The curve $\gamma_z$ for each of the values $z\in\{i/20\mid i=0,\ldots,20\}$.
The curves for $z=0.7$ and $z=0.9$, that we will use in our construction, have been highlighted.
}
\label{fig:swing-curves}
\end{figure}

We consider $x_1$ and $y_1$ as functions of $\varphi$.
Note that $x_1(0)=y_1(0)=1-z$.
The curve $\gamma_z: \varphi\mapsto (x_1(\varphi), y_1(\varphi))$ undergoes a remarkable metamorphosis as we increase $z$ from $0$ to $1$; see \Cref{fig:swing-curves}.
Note the sudden change from the curve being convex to concave at the diagonal point $(1-z,1-z)$ as $z$ changes from $z=0.7$ to $z=0.8$.
This is what we exploit to make convex and concave variants of the swing.

To obtain well-behaved functions $f$ and $g$, as required in \Cref{thm:main}, we will use an implicit function $F_z: U^2\longrightarrow \R$, for an interval $U\subset\mathbb R$ around $0$, such that $F_z(x,y)=0$ if and only if $(x,y)=\gamma_z(\varphi)+(z-1,z-1)$ for some $\varphi\in(-\pi/4,\pi/4)$.
We add $(z-1,z-1)$ to obtain a curve that passes through $(0,0)$ for $\varphi=0$.
In order to show that there exists such a function which is well-behaved, we use the following lemma  from~\cite[Sec.~1.5]{miltzow2021classifying}, which has been slightly rephrased.

\begin{lemma}
[Miltzow and Schmiermann~\cite{miltzow2021classifying}]
Let $\gamma = (\gamma_x, \gamma_y): (-\pi/4,\pi/4) \longrightarrow\mathbb R^2$ be a $C^3$-function such that $\gamma (t) = (0, 0)$ if and only if $t = 0$.
Assume that the derivatives $\gamma_x',\gamma_x'',\gamma_y',\gamma_y''$ are all rational in $0$ and that $\gamma_x'(t)\neq 0$ for all $t\in (-\pi/4,\pi/4)$.
Then there exists an interval around the origin $J\subset \mathbb R$ and a well-behaved function $F$ such that
\[
\{\gamma(t) \in \mathbb R^2\mid t\in (-\pi/4,\pi/4)\} =
\{(x, y) \in J \times \mathbb R \mid F(x, y) = 0\}.
\]
\end{lemma}

In our case, it follows trivially from the definitions that the first and second derivatives of $x_1$ and $y_1$ in $\varphi=0$ are rational and that $x_1'(0)\neq 0$, so the lemma ensures the existence of a function $F_z$ as described.

We will use $f\mydef F_{7/10}$ for our convex constraint and $g\mydef F_{9/10}$ for our concave constraint.
It is then straightforward to check that $f$ is convex and $g$ concave at $(0,0)$ by evaluating the sign of the curvature of the curves $\gamma_{7/10}$ and $\gamma_{9/10}$ at $\varphi = 0$.

\subsubsection*{Actual swing}

\Cref{fig:swing-actual} shows a more accurate illustration of the convex and the concave swing.
The blue and yellow variable pieces represent variables $x$ and $y$, respectively.
The blue piece enters the gadget and is left-oriented, while the yellow piece exits the gadget and is right-oriented.
The top corner $o$ of the orange piece is fingerprinted at a wedge formed by the boundary of the container $\cont$.
We need to ensure that the angle at the corner $o$ can be changed continuously.
By moving the orange piece up or down, we can also change the lengths of the edges meeting at the corner $a$, and then the angle at $o$ also changes.
We need to move the contact corners $c_x$ and $c_y$ of the blue and yellow pieces correspondingly, so that the $z$-value (described above) remains invariant.
The yellow piece is fingerprinted at a wedge formed by the blue piece and the lower boundary of the gadget.

The figure shows the case where the blue and the yellow piece encode the value $0$ of both $x$ and $y$.
In the convex swing, the vertical distance from $o$ to the contact corners $c_x,c_y$ is $7\vert oa\vert/10$.
In the concave swing, the distance is $9\vert oa\vert/10$.

\begin{figure}
\centering
\includegraphics[page = 3]{figures/swing.pdf}
\caption{The convex swing (left) and the concave swing (right).
}
\label{fig:swing-actual}
\end{figure}

\subsubsection*{Incorporation of the swing}

\Cref{fig:swing-incorportation} shows a schematic illustration of how the swing is incorportated in the construction.
We use a right-split and a swap before the swing itself, and a left-split and a swap after.

\begin{figure}
\centering
\includegraphics[page = 4]{figures/swing.pdf}
\caption{Incorporation of the swing.
}
\label{fig:swing-incorportation}
\end{figure}

\subsubsection*{Canonical placements and solution preservation}
A placement of the three pieces of the gadget is canonical if the orange piece separates the yellow and blue pieces in the sense that the segment $c_xc_y$ (connecting the contact corners of the blue and yellow pieces) intersects both edges of the orange piece incident at the corner $a$.

\begin{proof}[Proof of Lemma \ref{lem:preservation} for the swing]
Suppose that for a given solution to the \fgetr formula $\Phi$, there exists a canonical placement of the previously introduced pieces~$\p_{i-1}$ that encodes that solution.
We place the yellow piece so that it encodes the value of $y$ as specified by $\Phi$.
We need to verify that there is enough empty room such that the orange piece can be placed in the gadget.
Consider without loss of generality the case that we introduced the gadget because of a convex constraint of the form $f(x,y)\geq 0$ in $\Phi$.
If the blue and yellow pieces touch the orange piece, it corresponds to the case $f(x,y)= 0$.
Since we consider a solution to $\Phi$, we have $f(x,y)\geq 0$, so there is enough room for the orange piece.
\end{proof}

\subsubsection*{Fingerprinting and almost-canonical placement}
The proof of \Cref{lem:AlmostCanonicalPlacement} for the swing follows exactly as for the swap.
We first fingerprint the orange and then the yellow piece.

\subsubsection*{Aligned placement}
\begin{proof}[Proof of Lemma \ref{lem:alignedPl} for the swing]
Consider a valid placement where the pieces $\p_{i-1}$ have an aligned $(i-1)\slack$-placement and the pieces of this swing have an almost-canonical placement.
We need to verify that the exiting yellow piece has an aligned $i\slack$-placement.
It is clear that the yellow piece has an aligned placement because it is bounded from above and below by horizontal edges of the container $\cont$ and a piece below.
It is however not immediately clear that it has an $i\slack$-placement, since we have no entering piece representing $y$, so we cannot use the assumption that the pieces $\p_{i-1}$ have an aligned $(i-1)\slack$-placement directly.
Instead of considering the swing as an isolated gadget, we therefore take the two extra pieces of the left-split, which is to the right of the swing, into account; see \Cref{fig:swing-incorportation}.
We then get that if these pieces do not have an aligned $i\slack$-placement, then the slack would exceed $\slack$, as we argued for the swap.
\end{proof}

\subsubsection*{Edge inequalities}
The pieces of the swing induce no edges in the dependency graphs, so there are no edge inequalities to verify.

\subsubsection*{The swing works}
In this paragraph we prove the part of \Cref{lem:inversion} that the swing is responsible for, namely that in a given aligned $\gadgets\slack$-placement, the convex swing used for the variables $x$ and $y$ implies that $f(\enc{K_x}, \enc{K_y})\geq 0$ and the concave swing implies $g(\enc{K_x}, \enc{K_y})\geq 0$.

\begin{proof}[Proof of Lemma \ref{lem:inversion} for the swing]
We prove that $f(\enc{K_x}, \enc{K_y})\geq 0$ when we use a convex swing; the statement for the concave swing follows in an analogous way.
Let $p_x$ and $p_y$ be the blue and yellow pieces, respectively.
We note that there are paths $P_x$ and $P_y$ in $G_x$ and $G_y$ attached to and directed towards the cycles $K_x$ and $K_y$, and the vertices farthest away from the cycles are the yellow and blue pieces $p_x$ and $p_y$, so that we have $\enc{K_x}\geq \enc{p_x}$ and $\enc{K_y}\geq\enc{p_y}$.
Since we consider an aligned $\gadgets\slack$-placement, the orange piece separates the yellow and blue pieces, as defined for a canonical placement.
Without loss of generality, we can consider a situation where the corners $c_x$ and $c_y$ of the blue and yellow pieces are in contact with the orange piece (otherwise, we can slide them towards the orange piece until they make such a contact).
We then get from the principle behind the gadget (described in the beginning of this section) that $f(\enc{p_x}, \enc{p_y})= 0$.
It is straightforward to verify that we have derivatives $f_x(0,0)=1$ and $f_y(0,0)=1$.
Since $f_x$ and $f_y$ are also continuous, we have $f_x>0$ and $f_y>0$ in a neighbourhood of constant size around the origin $(0,0)$.
Hence, we have $f(\enc{K_x}, \enc{K_y})\geq f(\enc{p_x}, \enc{p_y})=0$.
\end{proof}

\subsection{Gramophone}
\label{sec:gramophone}

The gramophone works when only translations of the pieces are allowed.
It works in fact also when rotations are allowed, but we will use the swing in that case since the gramophone requires non-polygonal features of the container or the pieces.
The idea behind the gramophone is to have a special pink piece which is translated horizontally by motions of one lane and translated vertically by motions of another lane; see \Cref{fig:gramophone-idea} for an illustration.
Therefore, the placement of the piece in a sense encodes two variables at once.
The pink piece has a corner~$c$ which is bounded from above by a curve, and that induces a dependency between the two lanes, translating to an inequality of the variables represented by the lanes.
The name is chosen since motion of one lane makes the pink piece trace and ``read'' the curve and thus act as the stylus of a gramophone.

The inequality can be changed by choosing another curve bounding the corner $c$.
We can therefore make gramophones for inequalities $(x+1)\cdot (y+1)-1\geq 0$ and $-(x+1)\cdot (y+1) + 1\geq 0$, which correspond to the regions on one side of a hyperbola.
We could also have chosen a circular arc or another well-behaved curve.
The curved arc is part of the boundary of the container $\cont$, so the reduction leads to an instance of the problem \pack\convexpolygon\curved\translation, but as we will see, we can also obtain an instance of the problem \pack\curved\polygon\translation\ by introducing a curved piece instead of using a curved container boundary.

In the following, we will first explain the principle in more detail.
Then we describe how to turn the principle into a gadget that can be installed in our framework.

\begin{figure}
\centering
\includegraphics[page=1]{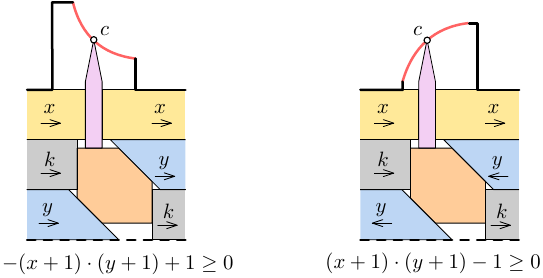}
\caption{The concave and convex gramophone.
Here, the gray pieces marked with $k$ cannot move, so these can be thought of as representing a constant $k$.
The purpose of these is to fix the horizontal placement of the orange piece such that horizontal motion of the blue pieces translates directly to vertical motion of the pink piece.
In all cases the gramophone relies on translations only.
For the resulting dependency between $x$ and $y$ to be non-linear, the curve restricting the corner $c$ must be non-linear, requiring the boundary of the container to have curved parts.
For color codes, see \Cref{fig:gramophone-precise}.}
\label{fig:gramophone-idea}
\end{figure}

\begin{figure}[b]
\centering
\includegraphics[page=2]{figures/gramophone.pdf}
\caption{All three graphs display the curve for the equation $(x+1)\cdot (y+1)-1 =0$.
The drawings differ because of different $y$-axes.
The left and middle diagrams correspond to the concave gramophone, while the right corresponds to the convex gramophone.
}
\label{fig:gramophone-principle}
\end{figure}

\subsubsection*{Principle}
To understand the principle behind the gramophone, it helps to draw a coordinate system showing the correspondence between the position of the corner $c$ and the variables $x$ and $y$ represented by the two lanes; see \Cref{fig:gramophone-principle}.
The curve restricting the corner $c$ from above is the curve described by the equation $(x+1)\cdot (y+1)-1 = 0$ in the respective coordinate system.
Note that the region of possible positions of the corner $c$ is below the curve.
That region is concave in the left and middle diagram but convex in the right diagram.
In order to get from concave to convex, we have flipped the $y$-axis, but that simply corresponds to changing the orientation of the blue pieces; see \Cref{fig:gramophone-idea}.
It then follows that gramophones can be made to encode both the concave constraint $-(x+1)\cdot (y+1)+1 \geq 0$ and the convex constraint $(x+1)\cdot (y+1)-1 \geq 0$.

\begin{figure}
\centering
\includegraphics[page=3]{figures/gramophone.pdf}
\hspace{0.2cm}
\includegraphics[page=4]{figures/gramophone.pdf}
\caption{The actual concave gramophone (left) and a variant with polygonal container (right).
The solid boundary bounding the gadgets from above are part of the boundary of the container, whereas the dashed segment bounding the gadgets from below is on the boundary of a piece that has been added to the construction earlier.
The gray pieces marked with $k$ cannot move, so these can be thought of as representing a constant $k$.
For a detailed picture of the pink and green pieces, see \Cref{fig:gramophone-pink}.
Color codes: 1 yellow, 2 gray, 3 blue, 4 turquoise, 5 orange, 6 green, 7 pink, 8 yellow, 9 blue, 10 gray.}
\label{fig:gramophone-precise}
\end{figure}

\subsubsection*{Actual gadget}
The actual design of the gramophone can be seen in \Cref{fig:gramophone-precise} (left) and many details differ from the simplified drawing in \Cref{fig:gramophone-idea}.
We only go through the  concave gramophones.
The convex gramophone differs only in the curved part and the orientation of the blue pieces.
In order to incorporate the gadget into the complete construction, we need to add some swaps and a constant lane, as we will explain below.

Note first that since $x$ and $y$ are restricted to tiny ranges $I(x),I(y)\subseteq [-\delta,\delta]$, the curved part needed is likewise of length $O(\delta)$, i.e., extremely short.
The gray, orange, blue and turquoise pieces form a swap.
We refer to the details and correctness of the swap to \Cref{sec:swap}.
The gray pieces will have a fixed placement and can therefore be considered as pieces encoding a constant value.
A lane for the constant will be started to the left of the gramophone and terminated to the right, which will be explained later.

Above the swap is the yellow lane and a pink piece.
The pink piece has edge-edge contacts with the yellow and orange pieces.
The top vertex of the pink piece is our special vertex $c$ that is bounded from above by a curve.
In the simplified description, the pink piece had a pair of parallel segments that were fixed at a vertical orientation by the yellow pieces.
However, in order to fingerprint the right yellow piece, we rotate the pink piece a bit clockwise (and adjust the edges of the yellow pieces accordingly) so that the angle of the corner where the right yellow piece is fingerprinted can be chosen freely.
The gadget is bounded from above by the container boundary and this part contains a tiny curved part, which corresponds to the curve $(x+1)\cdot (y+1)-1 = 0$, marked red. 
However, because of the slanted orientation of the pink piece, the $y$-axis of the corner $c$ is now likewise slanted as in \Cref{fig:gramophone-principle} (middle).

In the simplified construction, the angles were chosen so that horizontal motion of the blue pieces was translated to vertical motion of the orange piece with no scaling involved.
In the actual gadget, since we want to fingerprint the right blue piece, we need to chose the slope of the edge-edge contact between the blue and orange pieces freely, and then a horizontal motion of the blue pieces is scaled when transformed to a vertical motion of the orange piece.

These two differences (slanted and scaled $y$-axis) result in a linear deformation of the red curve, as compared to the simplified situation.
An example of such a deformation can be seen in \Cref{fig:gramophone-principle} (middle).
The curve will however still be contained in a hyperbola.
Moving the red curve up and down (where up and down is defined by the transformed $y$-axis) makes it possible to freely choose the fingerprinted angle of the pink piece.

\begin{figure}
\centering
\includegraphics[page=5]{figures/gramophone.pdf}
\caption{Left: Detail of the pink and green pieces and their point of contact in the convex and the concave version of the polygonal gramophone.
Right: Fingerprinting the pink piece. By moving the green piece up and down,
the angle of the fingerprinted corner of the pink piece can be continuously altered.}
\label{fig:gramophone-pink}
\end{figure}

\subsubsection*{Polygonal gramophone}
We introduce special \emph{polygonal} gramophones that work with a polygonal container $\cont$, but has curved pieces in order to show hardness of the problem \pack\curved\polygon\translation.
The polygonal concave gramophone can be seen in \Cref{fig:gramophone-precise} (right).
In this version, we have an additional green piece which is the only piece that is curved.
If we encode the concave constraint $-(x+1) \cdot (y+1) +1 \geq 0$, then the green piece is convex. 
If we encode the constraint $(x+1) \cdot (y+1)-1 \geq 0$, the green piece will not be convex.
Recall from \Cref{sec:curvedPolygons} that for pieces with curved segments, it is needed that the curvature is bounded and that the curved segments have a straight line segment of length $1$ as prefix and suffix. 
In \Cref{fig:gramophone-pink} (left), the red curve and the blue segments together form a curved segment that satisfies the requirements. 
The pink piece is changed accordingly:
In this version of the gramophone, the pink piece has one corner which can touch the curved part of the green piece and an extra corner on each side to fill out the space below the green piece.
The left corner is fingerprinted.
It is shown in \Cref{fig:gramophone-pink} (right) how the angle of the fingerprinted corner can be changed freely by moving up and down the green piece.

\subsubsection*{Installation of the gramophone}
\Cref{fig:gramophone-entire} (left) shows how to install the gramophone in the complete construction.
We make a new lane representing a constant which is started just to the left of the gramophone and terminated just to the right; see \Cref{fig:gramophone-entire} (right) on how a constant lane can start and end.
In addition to this, three swaps are needed to organize the lanes in the right way.

\begin{figure}
\centering
\includegraphics[page=6]{figures/gramophone.pdf}
\caption{Left: Installation manual for the gramophone.
The gray lane consists of pieces with a fixed placement and can be interpreted as encoding a constant.
This lane start and ends just outside the gramophone.
Right: The start of a constant lane consists of a single gray piece, which is fingerprinted at the left corner.
The top of the gadget is the container boundary and we assume that the gadget is bounded from below by a yellow piece representing the variable $x$, which has been introduced to the construction earlier.
The yellow piece is already known to be horizontally aligned, and it then follows that the gray is as well.
The lane of the gray pieces is terminated in an analogous way.}
\label{fig:gramophone-entire}
\end{figure}

\subsubsection*{Canonical placements and solution preservation}
The variable pieces are the yellow and blue pieces.
We define the canonical placements to be placements where the pieces have the edge-edge contacts as in \Cref{fig:gramophone-precise} and the turquoise piece is enclosed by the left gray and blue pieces and the orange piece.
If the green piece is present, it should have edge-edge contacts with the container boundary as shown.

\begin{proof}[Proof of Lemma \ref{lem:preservation} for the gramophone]
Suppose that for a given solution to the \etrinv formula $\Phi$, there exists a canonical placement of the previously introduced pieces $\p_{i-1}$ that encodes that solution.
The placement of the left yellow, gray, and blue piece is then fixed.
It is now clear from the construction that the remaining pieces can be placed, since the inequality of the gramophone is satisfied by the solution to $\Phi$.
\end{proof}

\subsubsection*{Fingerprinting and almost-canonical placement}
The proof of \Cref{lem:AlmostCanonicalPlacement} for the gramophone follows exactly as for the swap.
We fingerprint the pieces in the order shown in \Cref{fig:gramophone-precise} used for color codes.

\subsubsection*{Aligned placement}
From the alignment line $\ell$ (shown in \Cref{fig:gramophone-precise}), we get that the right yellow, right blue, and right gray pieces are correctly aligned.
As for the swap, to get \Cref{lem:alignedPl} for the gramophone, we can argue that the exiting blue and yellow pieces encode a value which is at most $\slack$ away from that encoded by the entering pieces.
It therefore follows that the placement must be an aligned $i\slack$-placement.

\subsubsection*{Edge inequalities}
We have an edge from the left to the right yellow piece and from the left to the right blue piece.
The edge inequalities (\Cref{lem:graphIneq}) follow similarly as for the split (yellow) and swap (blue).

\subsubsection*{Constant lane of gray pieces}
As seen from the installation manual in \Cref{fig:gramophone-entire} (left), there are four swaps in the gray lane (including the one inside the gramophone itself), so the lane consists of $5$ pieces $p_1,\ldots,p_5$ in order from left to right.
Here, $p_2$ and $p_3$ are the left and right gray piece in the gramophone itself.
The pieces $p_2,\ldots,p_5$ are introduced in swaps and in the gramophone, but $p_1$ is introduced in its own minuscule gadget shown in \Cref{fig:gramophone-entire} (right).
The canonical placement of $p_1$ is when it has edge contacts with the boundary and the yellow piece as shown, and $p_1$ is fingerprinted as indicated by the dot in the figure.
We now argue that these five pieces must have a fixed placement independent of the rest of the construction, so that they can indeed be considered as variable pieces encoding a constant.
Let us define the placement of the first piece $p_1$ shown in \Cref{fig:gramophone-entire} to encode the value $\enc{p_1}=0$.
We then know that in every placement, $0\leq \enc{p_1}$, since the piece may slide to the right, but not to the left due to the container boundary.
Similarly as for the proof of the edge inequality in \Cref{lem:graphIneq}, we get that $0\leq\enc{p_1}\leq\cdots\leq\enc{p_5}\leq 0$, where the last inequality follows since $p_5$ is bounded from the right by the container boundary.
We then have $\enc{p_1}=\cdots=\enc{p_5}=0$.

\subsubsection*{The gramophone works}
In this paragraph we prove the part of \Cref{lem:inversion} that the gramophone is responsible for, namely that in a given aligned $\gadgets\slack$-placement, a concave gramophone used for the variables $x$ and $y$ implies that $-(\enc{K_x}+1)\cdot (\enc{K_y}+1) + 1\geq 0$ while a convex gramophone implies that $(\enc{K_x}+1)\cdot (\enc{K_y}+1)-1\geq 0$.

\begin{proof}[Proof of Lemma \ref{lem:inversion} for the gramophone]
We show the statement for the concave gramophone; the proof for the convex gramophone is analogous.
The yellow pieces are part of the cycle $K_x$.
Since they encode values consistently by \Cref{lem:consistentCycle}, the rotation of the pink piece is fixed and it has edge-edge contacts with the yellow pieces.
Likewise, the blue pieces are part of $K_y$, so they also encode the same value of $y$.
Since the gray pieces encode the same constant, we now get that the orange piece has edge-edge contacts to all blue and gray pieces.
We can slide the pink piece down until it gets edge-edge contact to the orange piece.
We now consider sliding the yellow pieces to the right until the pink piece gets in contact with the curved part of $\cont$.
In this situation, we have $-(\enc{K_x}+1)\cdot (\enc{K_y}+1)+1=0$ by construction.
Since we have slid the yellow pieces to the right and thus increased the value of $\enc{K_x}$, we started with a placement where $-(\enc{K_x}+1)\cdot (\enc{K_y}+1)+1\geq 0$.
\end{proof}


\section{Square container}
\label{sec:SquareContainer}
Recall from \Cref{lem:4monotone} that when we reduce to problems where the container is a polygon, the resulting container is $4$-monotone (for the definition of $4$-monotone, see \Cref{def:4monotone}).
Let $\I_1$ be an instance of a packing problem where the container $\cont\mydef\cont(\I_1)$ is a 4-monotone polygon.
We show how to make a reduction from \pack\piecetype\polygon\motiontype\ to \pack\piecetype\square\motiontype.
We will use the reduction for the problem \pack\convexpolygon\polygon\rotation\ and obtain that \pack\convexpolygon\square\rotation\ is $\ER$-hard.
Likewise, we will use the reduction for the problem \pack\curved\polygon\translation\ (where we use polygonal gramophones as curvers) and get that \pack\curved\square\translation\ is $\ER$-hard.
In conclusion, we aim at proving the following theorem.

\begin{theorem}\label{thm:squarereduction}
Given an instance $\I_1$ of \pack\piecetype\polygon\motiontype\ where $\piecetype\in\{\convexpolygon,\curved\}$ and $\motiontype\in\{\translation,\rotation\}$, we can in polynomial time compute an instance $\I_2$ of \pack\piecetype\square\motiontype\ such that $\I_1$ is feasible if and only if $\I_2$ is feasible.
\end{theorem}

Most of the remaining part of this section is devoted to the proof of this theorem.
We show that the instance $\I_1$ can be reduced to an instance $\I_2$ where the container is a square $\sqcont$ with corners $b_1b_2b_3b_4$.
To this end, we introduce some auxiliary pieces that can essentially be placed in only one way in $\sqcont$.
We call these new pieces the \emph{exterior} pieces, whereas we call the pieces of~$\I_1$ the \emph{inner} pieces.
The empty space left by the exterior pieces is the $4$-monotone polygon $\cont$ which act as the container $\cont$ of $\I_1$.
We scale down the container $\cont$ and the inner pieces so that $\cont$ fits in an $\eps\times\eps$ square in the middle of $\sqcont$, for a small value $\eps=\Theta(1)$ to be defined shortly.
Since the original container $\cont$ has size $O(n^4)\times O(n)$, we scale it down by a factor of $O(n^4)$.

An example of the construction can be seen in \Cref{fig:squareRed}, where only the exterior pieces are shown.
Our construction is parameterized by a number $\eps>0$, and the container $\sqcont$ is the square $[0,1+\eps]\times [0,1+\eps]$.
The blue pieces $B_1,B_2,B_3,B_4$ are independent of $\eps$, but the green, turquoise, and orange pieces depend on $\eps$.
Only the exterior pieces are shown, but the full instance $\I_2$ does also include the inner pieces, which are intended to be packed in $\cont$.
Each green piece $G_i$ has a pair of corners $c_i,d_i$ of interior angle $\alpha_i$ such that $\alpha_i\neq\alpha_j$ for $i\neq j$.
In reality, these angles $\alpha_i$ are only slightly less than $\pi/2$, so that all the eight segments from a point $c_i$ or $d_i$ to the closest corner $b_j$ are almost equally large.
The angles $\alpha_i$ are independent of $\eps$ and the instance $\I_1$.

Each orange piece is a trapezoid, and edges of the orange pieces form the boundary of $\cont$.
At the opposite end of the pieces, they meet the green or turquoise pieces.
To make sure that they are placed in the right order, they have unique angles so that fingerprinting can be used to argue about their placement.
This will be explained in more detail in the end of this section.

We denote the placement of the exterior pieces shown in \Cref{fig:squareRed} to be \emph{canonical}.
We also define the placements we get from the figure by sliding the orange and turquoise pieces towards $\cont$ to be \emph{canonical}.
We aim at proving the following lemma.

\begin{lemma}
\label{lem:squarePack}
For a sufficiently small constant $\eps>0$, the following holds.
For all valid placements of the pieces in $\I_2$, the exterior pieces have a canonical placement.
\end{lemma}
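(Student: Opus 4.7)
The plan is to apply iterated fingerprinting (Lemma~\ref{lem:multipleFingerprints}) to pin down each exterior piece, starting at the corners of $\sqcont$ and propagating along its perimeter, and then to fix orientations by alignment segments analogous to those in Lemma~\ref{lem:alignedPl}. First I would check that the slack of $\I_2$ is small enough to drive the fingerprinting. Scaling the original instance $\I_1$ down by a factor $\Theta(N^{-4})$ shrinks its slack contribution by $\Theta(N^{-8})$, so the inner pieces are negligible for the slack count. The canonical arrangement of the exterior pieces is designed so that $\sqcont\setminus\cont$ is covered up to thin gaps of total area $O(\delta)$, exactly as in the linear gadgets; choosing $\delta$ as in the main construction keeps $\slack(\I_2)=O(n^{-c})$ for a large enough $c$.

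Next I would argue that the four blue pieces sit at the corners of $\sqcont$. Each corner $b_j$ has interior angle $\pi/2$, which is achieved in the canonical placement as a sum $\alpha_i+(\pi/2-\alpha_i)$ of two adjacent exterior-piece corners. By choosing the $\alpha_i$ from a set $\pi S_k$ as in Lemma~\ref{lem:UniqueAngles} with $k=\Theta(N)$, and making sure that no corner angle (or sum of such angles) of any other exterior or inner piece lies within $\Omega(N^{-2})$ of $\pi/2$, $\alpha_i$, or $\pi/2-\alpha_i$, the global unique-angle property holds. Iterated fingerprinting then forces every blue piece $B_j$ and every green-piece corner $c_i,d_i$ to lie within displacement $O(n^{-48})$ of the intended corner of $\sqcont$. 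An important sanity check rules out interference from inner pieces: $\cont$ is contained in a central $\eps\times\eps$ box, so every inner piece sits at Euclidean distance $\Omega(1)$ from every $b_j$, far larger than the displacement bound.

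With the four corner regions locked, I would propagate the argument along each side of $\sqcont$. The turquoise pieces and the orange trapezoids are fingerprinted sequentially using the wedges formed by the previously placed pieces; each orange trapezoid carries a freely chosen, unique fingerprintable angle at its outward-facing corner, and the side edges that eventually form $\partial\cont$ carry the dashed/solid boundary in Figure~\ref{fig:squareRed}. Alignment segments stretched across the pairs of parallel edges of the blue, turquoise, and orange pieces fix their rotations to be axis-parallel, as in the proofs of Lemma~\ref{lem:alignedPl} for the linear gadgets. Because each new piece needs only $O(1)$ iterated fingerprints, Lemma~\ref{lem:fingerprinting-summary} keeps the per-step displacement error in $O(n^{-48})$, so the chain of $\Theta(N)$ orange trapezoids stays within the tolerance throughout.

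The principal obstacle is enforcing the global unique-angle property simultaneously across the inner construction (where we have already consumed many distinct angles in the gadgets of Sections~\ref{sec:gadgets}--\ref{sec:All-inversion}) and the new exterior construction. With $N=O(n^4)$ corners in play on each side, every target angle used on the exterior must avoid, up to $\Omega(N^{-2})$ error, every sum of a subset of the inner corner angles and vice versa. The cleanest way to ensure this is to reserve two disjoint subintervals of $\pi S_k$ with $k=\Theta(N)$, one dedicated to the interior fingerprinting and one to the exterior; a routine additive-combinatorics check then certifies that no combination of inner and exterior corner angles accidentally hits one of the target angles. Once the unique-angle property is secured, the iterated fingerprinting and alignment arguments described above show that every exterior piece must lie in a canonical placement, which is precisely the statement of Lemma~\ref{lem:squarePack}.
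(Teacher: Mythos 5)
Your proposal rests on a use of the unique-angle property that the square container does not admit, and this is exactly the place where the paper's proof has to do something structurally different. All four corners $b_1,\ldots,b_4$ of $\sqcont$ have interior angle $\pi/2$, and the four blue pieces each carry (several) right-angle corners: the unique-angle property, as defined in Section~\ref{sec:FingerPrinting}, only guarantees that any candidate set $S$ consists of a \emph{single} corner, not that the corner is uniquely determined. The paper flags this explicitly ("in Section~\ref{sec:SquareContainer}, we are going to consider a special case where the container is a square where there will be more such sets"). So fingerprinting at a corner $b_j$ only certifies that \emph{some} right-angle corner of \emph{some} blue piece lands near $b_j$; it does not tell you which piece or which of its right-angle corners, nor does it rule out the cyclic order of the blue pieces being permuted. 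Your claim that iterated fingerprinting alone forces each $B_j$ to land at its intended corner with displacement $O(n^{-48})$ therefore has a genuine hole.

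The paper's proof closes this hole with a compactness (limit) argument that is absent from your write-up. It supposes there were non-canonical placements $Q_m$ of the eight blue and green pieces for $\eps<1/m$, extracts a Hausdorff-convergent subsequence, and shows the limit must coincide with the ideal packing: the fingerprint displacement bound $O(\sqrt{\slack/\sigma})$ tends to zero as $\eps\to 0$ (this is where the quantification over $\eps$ actually does the work, rather than a fixed-$\eps$ bound), and the distinctness of the angles $\alpha_1,\ldots,\alpha_4$ of the \emph{green} pieces rules out the wrong cyclic order and the wrong choice of right-angle corner in the limit. It then uses a different alignment argument than the one in Lemma~\ref{lem:alignedPl}: four alignment segments $\ell_1,\ldots,\ell_4$ are considered simultaneously, and the canonical placement is shown to be a strict local minimizer of the total ``occupied length'' along these segments. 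Since the canonical placement already saturates the total available length $4(1+\eps)$, any nearby placement would exceed it, which locks the blue and green pieces. Only after this are the orange and turquoise pieces fixed by the layer-by-layer fingerprinting that resembles your ``propagate along each side'' step. Your proposal would need both the limit argument (to get the pieces close without a global unique-angle property) and the strict-local-minimum alignment argument (to turn ``close'' into ``exact'') to be repaired.
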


It follows from the lemma that $\I_1$ has a solution if and only if $\I_2$ has one, so that the problems are equivalent under polynomial time reductions.

The first step in proving \Cref{lem:squarePack} is to prove that the blue and green pieces, up to a rotation, can only be placed in the canonical way, i.e., even when we disregard the turquoise and orange pieces and the inner pieces.
To prove this, we consider the situation when $\eps$ gets very small, as shown in \Cref{fig:squareRed2} (left), so that the blue and green pieces cover almost all of~$\cont$.
Then all the turquoise and orange pieces are skinny, and the $4$-monotone polygon $\cont$ is very small.
When changing $\eps$, we keep all angles constant and the blue pieces constant, and $\cont$ is scaled appropriately so that it fits in the central square of size $\eps\times\eps$ (this square is drawn with dashed segments in \Cref{fig:squareRed}).
We will eventually scale the pieces from $\I_1$ so that they fit in a square of size $\eps\times\eps$.
Since they originally fit in a container of size $O(n^4)\times O(n)$ and $\eps$ is a constant, this scaling is polynomial and thus allowed in our reduction.

\begin{figure}
\centering
\includegraphics[page = 3,width=0.9\textwidth]{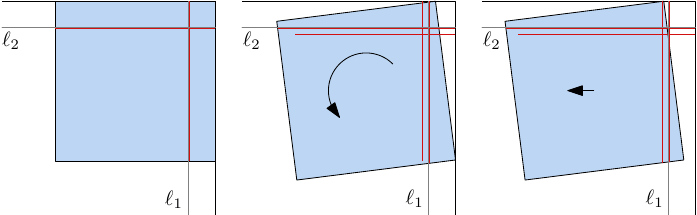}
\caption{An example of the instance we get from the reduction using a square container for a given $4$-monotone polygon $\cont$.
In reality, the angles $\alpha_i$ are just slightly less than $\pi/2$, and all the eight segments from a point $c_i$ or $d_i$ to the closest corner $b_j$ have length almost $1/2$.
Color codes:
$B_1,\ldots,B_4$ are blue, $G_1,\ldots,G_4$ are green, 1 is turquoise, 2 is orange.
}
\label{fig:squareRed}
\end{figure}

\begin{figure}
\centering
\includegraphics[page = 4]{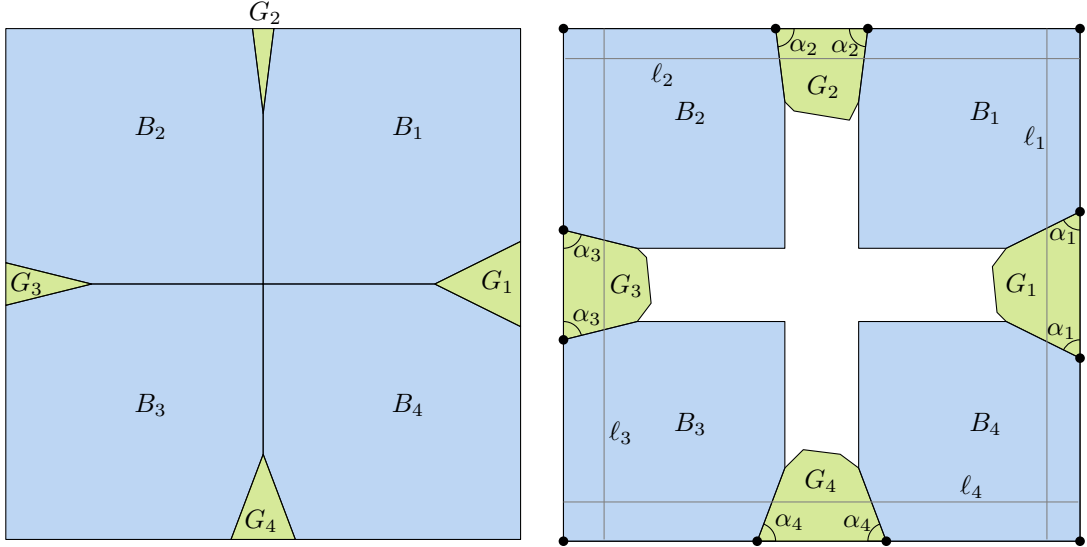}
\caption{Left: The situation from \Cref{fig:squareRed} in the limit $\eps=0$.
Right:
The canonical placement of the green and blue pieces.
For some sufficiently small $\eps$, $Q_m$ must be the shown placement.
We have four alignment segments $\ell_1,\ell_2,\ell_3,\ell_4$.}
\label{fig:squareRed2}
\end{figure}

\begin{lemma}
If $\eps>0$ is a sufficiently small constant, the canonical placement is the only way to place the blue and green pieces into $\sqcont$, i.e., even when the turquoise and orange pieces and the inner pieces do not have to be placed.
(When rotations are allowed, the three rotations of this packing by angles $\pi/2,\pi,3\pi/2$ are also possible.)
\end{lemma}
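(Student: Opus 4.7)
The plan is to argue by a combination of area accounting and iterated fingerprinting, treating the blue and green pieces as a separate sub-packing inside $\sqcont$. First I would show that the total area of all exterior pieces \emph{other than} the blue and green ones, together with $\cont$ itself, is $O(\eps)$: each orange piece is a trapezoid with one parallel side of length $O(\eps)$ and the other of length $O(1)$, so the four orange pieces contribute $O(\eps)$ area, while $\cont$ and the turquoise piece both fit in the central $\eps\times\eps$ square and contribute $O(\eps^2)$. Hence the blue and green pieces must cover all but an $O(\eps)$-area subset of $\sqcont$, giving an effective slack that we can drive below any desired threshold by shrinking $\eps$.

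Next I would fingerprint the blue pieces at the four corners $b_1,\ldots,b_4$ of $\sqcont$. Each such corner is a wedge of angle $\pi/2$, and each blue piece $B_j$ is designed so that exactly one of its corners has an angle close to (but slightly below) $\pi/2$, with all four of these angles chosen distinct and so that no combination of other corner angles from blue, green, or inner pieces lies in a matching interval (this is guaranteed by applying Lemma~\ref{lem:UniqueAngles} to the entire instance). Applying the Single Fingerprint Lemma at each corner $b_j$ shows that some blue piece must have its distinguished corner within $O(\sqrt{\eps})$ of $b_j$; since there are four square corners and four blue pieces with pairwise distinct distinguished angles, each $B_j$ is forced to a specific corner (up to the four-fold rotational symmetry of the entire square, which is the symmetry acknowledged in the lemma). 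Alignment segments running along each edge of $\sqcont$, together with the fact that the relevant edges of each $B_j$ adjacent to its distinguished corner have length matching $1+\eps$, then fix each blue piece into its canonical axis-parallel placement.

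With the blue pieces locked in, the region $\sqcont\setminus\bigcup_j B_j$ decomposes into four congruent ``L''-shaped wedges separated by the $4$-monotone target $\cont$ in the middle. In each wedge, the blue pieces create a $3$-sided pocket with two corners of distinct known angles bordering the interior, and the unique green piece $G_i$ with matching angle $\alpha_i$ at corners $c_i$ and $d_i$ is the only candidate whose angles permit efficient coverage. A second round of fingerprinting, using the Multiple Fingerprints Lemma applied to the corners $c_i$, $d_i$ against the wedges formed by adjacent blue pieces, forces each $G_i$ into its intended wedge, again up to the four-fold symmetry. Once all four green pieces are placed, the remaining free space is exactly $\cont$ together with the thin strips that must be occupied by the orange and turquoise pieces.

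The main obstacle is the loss of a \emph{unique} target for fingerprinting: unlike the gadgets in Sections~\ref{sec:gadgets}--\ref{sec:All-inversion}, where each wedge determines one piece, here the square has four corners each providing a wedge of angle $\pi/2$, and we have four blue pieces all of whose distinguished angles are close to $\pi/2$. I would handle this by relaxing the unique angle property to require that within the range $[\beta-\sigma,\beta+\sigma]$ there is at most one corner \emph{per piece} (which is already the statement in Definition~\ref{def:sound} applied across a bounded set of candidates), so that fingerprinting identifies the identity of the filling piece uniquely while tolerating the expected four-fold ambiguity in which piece lands at which corner. With $\eps$ chosen smaller than an absolute constant depending only on the angles $\alpha_i$, all the displacement bounds from Lemma~\ref{lem:fingerprinting-summary} become arbitrarily small, and the above argument closes the case.
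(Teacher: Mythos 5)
Your overall plan is in the right spirit---drive the effective slack down by shrinking $\eps$, fingerprint the blue pieces into the four square corners, then pin everything with alignment segments---but two concrete steps in the middle are wrong, and one of them is exactly the nontrivial part of the argument.

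First, you have the construction's angle allocation backwards. You write that ``each blue piece $B_j$ is designed so that exactly one of its corners has an angle close to (but slightly below) $\pi/2$, with all four of these angles chosen distinct,'' and you use this supposed distinctness to claim that fingerprinting at the corners $b_j$ assigns a specific blue piece to each $b_j$. But the blue pieces must have an \emph{exact} right angle at the corner placed at $b_j$, or they would not tile the square corner; they are indistinguishable by fingerprinting at $b_1,\dots,b_4$. The distinct angles $\alpha_1,\dots,\alpha_4$ belong to the corners $c_i,d_i$ of the \emph{green} pieces $G_i$. The paper's fingerprinting step only concludes that \emph{some} right-angled corner of \emph{some} blue piece sits near each $b_j$; the identity and cyclic order of the $B_j$ are only pinned down in a subsequent step by noting that the green pieces, with their distinct angles $\alpha_i$, could not be placed if the blue pieces were permuted or had the wrong corner at $b_j$. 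So your first round of fingerprinting does less than you say it does.

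Second, the locking step is unsound as written. You claim that alignment segments ``together with the fact that the relevant edges of each $B_j$ adjacent to its distinguished corner have length matching $1+\eps$'' fix the blue pieces axis-parallel. The edges of $B_j$ adjacent to $b_j$ do not have length $1+\eps$; they extend roughly to the midpoint of each side (where they meet the green pieces), so no single piece covers an alignment segment. The actual locking argument (which is the heart of this lemma) is a joint one over all eight blue \emph{and} green pieces: one measures, for each piece, the portions of the four alignment segments it ``occupies'' (for the blue pieces this is defined so as to include small slivers near the corner even if the piece does not cover them), shows these occupied portions are pairwise interior-disjoint so their total length $L$ is at most $4(1+\eps)$, and then verifies that $L$ is attained exactly and has a \emph{strict local minimum} at the canonical placement, so any small simultaneous perturbation is impossible. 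Your proposal to first ``lock'' the blue pieces and then fingerprint the green pieces into the resulting wedges does not work because there is nothing locking the blue pieces before the green pieces are accounted for.

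Finally, a structural remark: the paper argues by contradiction plus compactness (pass to a Hausdorff-convergent subsequence of putative counterexamples as $\eps\to 0$), which makes ``arbitrarily small displacement'' rigorous without tracking explicit $\eps$-dependence through the fingerprinting bounds. Your direct route (explicit $O(\sqrt{\eps/\sigma})$ bounds followed by a locking argument) could in principle replace this, but only once the locking step is fixed; as stated, the proposal has a genuine gap there.
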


\begin{proof}
Suppose for {the purpose of} contradiction that for arbitrarily small $\eps>0$, there are other ways in which the eight pieces can be placed.
For every $m\in\N$, let $Q_m$ be such a placement for some $\eps\in (0,1/m)$.
Recall that the blue pieces are independent of $\eps$, and note that as $m\longrightarrow\infty$, the shape of the green pieces converge to the pieces shown in \Cref{fig:squareRed2} (left).
Note that the pieces are compact sets in the plane, and recall that the Hausdorff distance turns the set of non-empty compact sets into a compact metric space in its own right.
By passing to a subsequence, we may therefore assume that for each piece $p$, the placement of $p$ according to $Q_m$ is likewise converging with respect to the Hausdorff distance.

We are going to apply the Single Fingerprint \Cref{lemma:boundDiff0} to the corner $b_1$ of the square container $S$, and we want to prove that 
in the limit placement, a right corner of a blue piece coincides with $b_1$.
For the following consider the notation of \Cref{lemma:boundDiff0}.
We define the triangle~$T$ so that $x\in b_1b_4$, $y=b_1$, and $z\in b_1b_2$, 
and we set $u\mydef 0$.
Choose $\sigma$ so small that the eight blue and green pieces have the unique angle property.
We get that a blue piece must be placed such that 
one of its right corners $v$ is within distance $O\left(\sqrt{\slack/\sigma}\right)$ from $b_1$.
Now, as $m\longrightarrow\infty$, $\slack$ gets arbitrarily small, and therefore we get that $v$ coincides with $b_1$ in the limit.

We likewise get that corners with right angles of the other blue segments are placed at the other corners $b_2,b_3,b_4$ of $\sqcont$.

Conceivably, the blue pieces can be placed incorrectly in two ways:
(i) their cyclic order around the boundary of $\sqcont$ can be different from $B_1,B_2,B_3,B_4$ and (ii) one of the right corners $e_i$ which is supposed to be placed in the interior of $\sqcont$ can be placed at a corner of $\sqcont$.
Due to the difference in the angles $\alpha_1,\ldots,\alpha_4$, it is clear that in each of these cases, the green pieces cannot be placed.
Hence, the pieces must be placed as shown in \Cref{fig:squareRed2} (left) in the limit.

We conclude that by choosing $m$ sufficiently large (and thus $\eps$ sufficiently small), the difference between each piece of $Q_m$ and the packing of \Cref{fig:squareRed2} (left) can be made arbitrarily small.
We now prove that if the difference is sufficiently small, the placement $Q_m$ must be the canonical placement.
Intuitively, this means that the canonical placement is ``locked'' in the sense that it is not possible to move the pieces just a little bit and obtain another valid placement.

\begin{figure}
\centering
\includegraphics[page = 5]{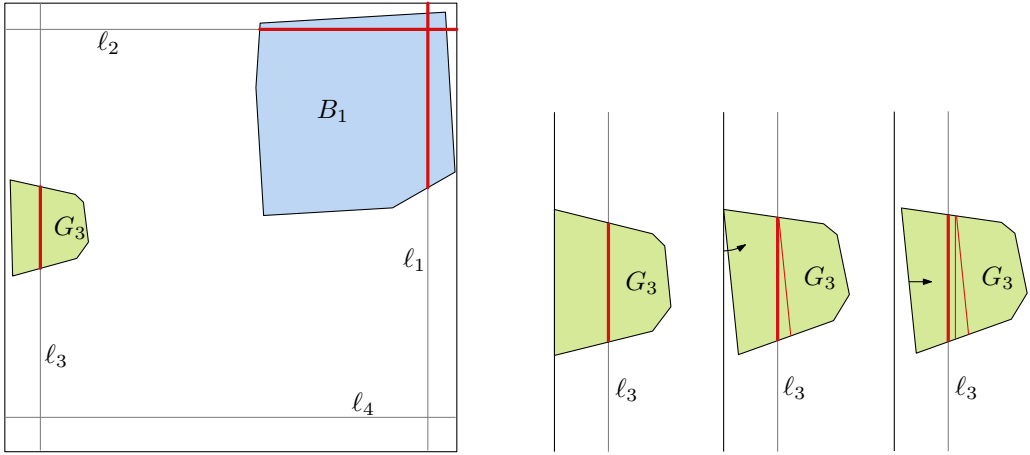}

\caption{Left: The red segments show the parts of the alignment segments that is occupied by the pieces $B_1$ and $G_3$. 
Right: The fat red segments show the parts of the alignment segment $\ell_3$ that is occupied by the piece $G_3$ in three different situations.
The thin red segments are the parts occupied in the preceding situations.
First is shown the canonical situation from \Cref{fig:squareRed2} (right).
The middle situation shows that as $G_3$ is rotated, it occupies more.
The last situation shows that when $G_3$ is translated, it occupies even more.}
\label{fig:squareRed3}
\end{figure}

In order to argue about the precise placement,
we make an alignment argument using four alignment segments $\ell_1,\ell_2,\ell_3,\ell_4$ simultaneously.
For each edge of $\sqcont$, we have an alignment segment $\ell_i$ parallel to and close to the edge, as shown in \Cref{fig:squareRed2} (right).
For each of the eight blue and green pieces $p$, we measure how much $p$ \emph{occupies} of each alignment segment, and we take the sum of all these measures and show that it is strictly minimum in the canonical placement.
In that placement, the segments are fully covered by the pieces, which means that there cannot be any other placement since they would occupy more of the segments than what is available.

Note that as we consider a placement that is close to the canonical placement, we get that the blue piece $B_1$ only intersects the segments $\ell_1$ and $\ell_2$, and the other blue pieces likewise only intersect two segments each.
Each green piece~$G_i$ only intersects $\ell_i$.
We now define the occupied parts of the alignment segments as follows, see \Cref{fig:squareRed3} for an illustration.
Each green piece~$G_i$ occupies the part $G_i\cap \ell_i$ of the segment $\ell_i$ and nothing of the other segments.
The piece blue piece $B_1$ occupies the part of $\ell_1$ from the upper endpoint of $\ell_1$ to the lowest point in $B_1\cap \ell_1$.
Similarly, $B_1$ occupies the part of $\ell_2$ from the right endpoint to the leftmost point in $B_1\cap \ell_2$.
Each of the other pieces $B_2,B_3,B_4$ occupies parts of the two segments it intersects, defined in the analogous way.
It follows that the parts of a segment $\ell_i$ occupied by two different pieces are interior-disjoint:
This is trivial for the parts covered by the green pieces, but a blue piece $B_j$ can also occupy parts that it does not cover close to the boundary of $\sqcont$.
However, these parts of the alignment segments are close to the corner $b_j$, and since the constructed placement is close to canonical, the other pieces are far away, so they cannot cover anything close to the corner $b_j$.

\begin{figure}
\centering
\includegraphics[page = 1]{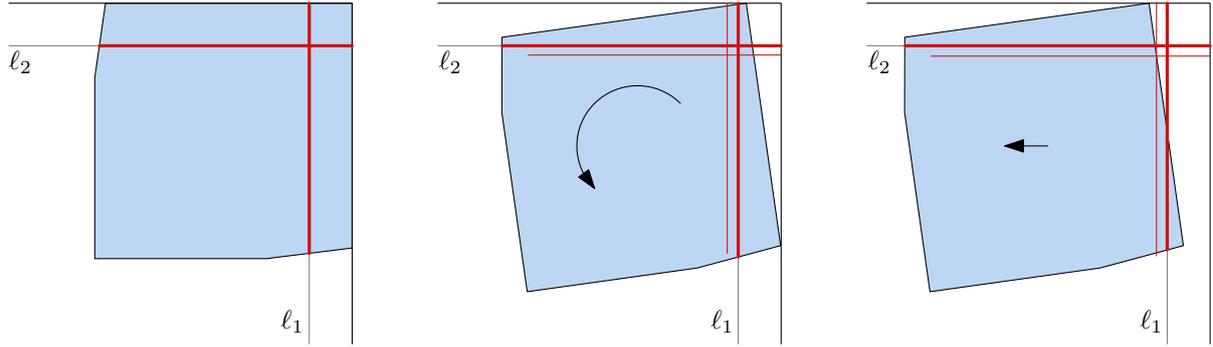}
\caption{The fat red segments show the parts of the alignment segments $\ell_1$ and $\ell_2$ that are occupied by a piece in three different situations.
The thin red segments are the parts occupied in the preceding situation.
}
\label{fig:squareRed5}
\end{figure}

Let $L$ be the sum of lengths of the occupied parts.
Since the occupied parts are interior-disjoint, we get that $L$ is at most $4\cdot (1+\eps)$, i.e., the sum of lengths of $\ell_1,\ldots,\ell_4$.
We now show that~$L$ has a strict local minimum in the canonical placement.
To this end, we observe that starting with the canonical placement, any small movement of one of the pieces makes $L$ increase.
This is seen for a green piece $G_3$ in \Cref{fig:squareRed3} (right).
For a blue piece, 
consider the blue piece in \Cref{fig:squareRed5}. 
First is shown the canonical placement.
The middle figure shows a situation where the piece is rotated slightly while keeping 
it is far up and to the right as possible.
We see that it occupies more of both alignment segments.
The last situation shows that when the piece is translated to the left, the part occupied of $\ell_2$ increases 
as much as the translation, while the part occupied of $\ell_1$ slightly decreases (\emph{slightly} 
since the piece has been only slightly rotated and the angles $\alpha_i$ are close to $\pi/2$).
In total, the piece occupies more of $\ell_1$ and $\ell_2$.
The occupied amount likewise increases when the piece is translated down.
It then follows that moving the piece from the canonical placement by any small movement (rotation and translation combined) 
will strictly increase the occupied amount of the alignment segments.

To sum up, choosing $m$ sufficiently big, the packing $Q_m$ can be made arbitrarily close to the 
canonical packing, but this implies that 
it must be identical to the packing, as it would otherwise occupy more of the alignment segments $\ell_1,\ldots,\ell_4$ that what is possible.
Hence, when $\eps$ is small enough, the eight blue and green pieces can only be placed in the canonical way.
\end{proof}

To finish the proof of \Cref{thm:squarereduction}, it remains to argue about the fixed placement of the orange and turquoise pieces.
The use of fingerprinting and aligning used here is similar to how we argued about the gadgets.
We consider the piece $G_3$ as an example, and refer to \Cref{fig:squareRed5NEW} for an illustration of the process.
We fix the pieces in layers, so that there are a constant number of pieces in each layer.
We first fingerprint the pieces touching $G_3$ in the canonical packing.
We then align them.
This may leave a bit of empty space to the right of $G_3$ and to the left of the orange and turquoise pieces, but that is allowed in the canonical placements, and the empty space will be too small for any piece to fit there.
We then fingerprint the pieces touching the first turquoise piece, and repeat.
This leads to a canonical placement of the exterior pieces, proving \Cref{lem:squarePack}.

\begin{figure}
\centering
\includegraphics[page = 2]{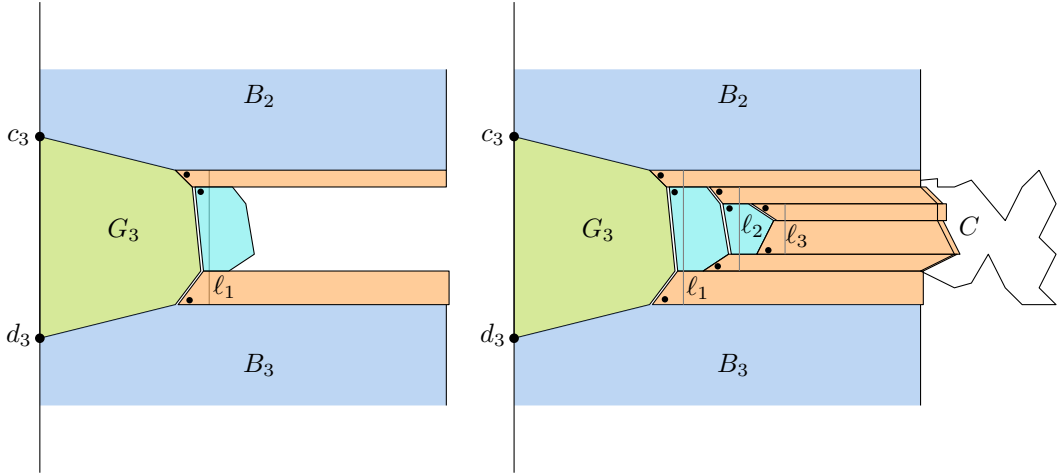}
\caption{
Left: Fixing the alignment of the first layer of pieces. The three pieces are fingerprinted and then aligned.
The empty space between $G_3$ and the three pieces is so small that no piece can fit in there.
Right: The three layers which together fix all the orange and turquoise pieces.
Note that the empty space between the exterior pieces leaves less room in $\cont$ for the inner pieces (refer to \Cref{fig:squareRed}).
}
\label{fig:squareRed5NEW}
\end{figure}

Recall that the (unscaled) instance $\I_1$ has $\slack=\Theta(n^{-296})$ and size $O(n^4)\times O(n)$.
We scale it down to be contained in a square of size $\eps\times\eps$, where $\eps=\Theta(1)$, so we get that the slack of our created instance $\I_2$ is $\slack=\Theta(n^{-304})$, which is polynomial.
The reason that we use the turquoise pieces is that if instead all orange pieces were adjacent to the green pieces, we would need to fingerprint a superconstant number of pieces at once, and then we would need $\slack$ to be smaller than polynomial, as mentioned in \Cref{sec:overview}.

As mentioned above, the empty space to the left of the orange and turquoise pieces sticking out from $G_3$ is too small that any pieces can fit there.
We can therefore without loss of generality slide the pieces to the left so that they create all the edge-edge contacts shown in \Cref{fig:squareRed}.
Similarly, those sticking our from $G_4$ can be pushed down, etc.
After pushing all the orange and turquoise pieces towards their green piece, the empty area left for the inner pieces is exactly the container $\cont$.
We conclude that there is a solution to the created instance $\I_2$ with a square container if and only if there is a solution to the given instance $\I_1$, and \Cref{thm:squarereduction} has been proven.
We then have the following corollary.

\thmSquarePack*
 

\clearpage
 	\printbibliography

\end{document}